\documentclass[letterpaper, 10 pt, conference]{ieeeconf} 
\IEEEoverridecommandlockouts

\usepackage{enumitem}
\setlist{topsep=0pt, leftmargin=*}
\usepackage{algorithm}
\usepackage{algorithmicx}

\usepackage[noend]{algpseudocode}
\usepackage{amsmath, amsfonts, bm, amssymb, tabularx}
\usepackage{latexsym, mathtools}
\usepackage{amsthm}
\usepackage{ifthen}
\usepackage{hyperref}\hypersetup{colorlinks=true, unicode=true, linkcolor=[rgb]{0.10,0.05,0.67}, citecolor=[rgb]{0.10,0.05,0.67}, filecolor=[rgb]{0.10,0.05,0.67}, urlcolor=[rgb]{0.10,0.05,0.67}}
\usepackage{Shorthands}
\usepackage{cleveref}
\newcommand{\citep}[1]{\cite{#1}}
\usepackage{wrapfig}
\usepackage{etoolbox}
\AtBeginEnvironment{algorithm}{\setlength{\intextsep}{6pt} \setlength{\textfloatsep}{6pt}}

\usepackage[numbers, sort&compress]{natbib}
\allowdisplaybreaks

\title{\LARGE \bf
 Policy Gradient for LQR with Domain Randomization
}

\author{
Tesshu Fujinami$^{1}$, Bruce D. Lee, Nikolai Matni, George J. Pappas
\thanks{$^1$ All authors are with the Department of Electrical and Systems Engineering, University of Pennsylvania. Emails: \tt\small\{ftesshu, brucele, nmatni, pappasg\}@seas.upenn.edu}%
}

\begin{document}

\twocolumn

\maketitle
\thispagestyle{empty}
\pagestyle{empty}

\begin{abstract}
Domain randomization (DR) enables sim-to-real transfer by training controllers on a distribution of simulated environments, with the goal of achieving robust performance in the real world. Although DR is widely used in practice and is often solved using simple policy gradient (PG) methods, understanding of its theoretical guarantees remains limited. 
Toward addressing this gap, we provide the first convergence analysis of PG methods for domain-randomized linear quadratic regulation (LQR). We show that PG converges globally to the minimizer of a finite-sample approximation of the DR objective under suitable bounds on the heterogeneity of the sampled systems. We also quantify the sample-complexity associated with achieving a small performance gap between the sample-average and population-level objectives.
Additionally, we propose and analyze a discount-factor annealing algorithm that obviates the need for an initial jointly stabilizing controller, which may be challenging to find. Empirical results support our theoretical findings and highlight promising directions for future work, including risk-sensitive DR formulations and stochastic PG algorithms.
\end{abstract}
\section{Introduction}

Domain randomization (DR) has emerged as a dominant paradigm to enable transfer of policies optimized in simulation to the real world by randomizing simulator parameters during training \citep{tobin2017dr,peng2018drforcontrol, akkaya2019solving}. In doing so, just as with robust control, DR accounts for discrepancies between the model used in simulation to synthesize a policy and the system that it is deployed on. However, unlike conventional robust control approaches, DR minimizes an average control objective over the uncertainty in the system rather than a worst case objective. 
Since DR does not solely focus on optimizing the worst-case performance, it can result in less conservative controller performance while still ensuring robust stability with high probability. 
Furthermore, DR can be easily implemented via first order methods. 
This makes it straightforward to incorporate into a wide variety of reinforcement learning schemes and to benefit from the increasing availability of parallel computation.

Despite the ease with which DR can be implemented using first order methods, ensuring convergence of these methods remains a critical challenge, with practitioners relying upon complex scheduling of various hyperparameters in the optimization procedure \citep{akkaya2019solving}. Motivated by this challenge, we rigorously study the convergence of policy gradient methods for DR in the setting of the linear quadratic regulator.

\subsection{Related Work}
\paragraph{Domain Randomization}
Domain randomization, introduced by \citet{tobin2017dr}, is widely used for enabling \emph{sim-to-real transfer}. By randomizing simulator parameters during training, it aims to produce policies robust to simulator variations, thereby enabling transfer to the real-world. The sampling distribution may be a design variable in the synthesis problem \citep{mehta2020active}, or come from a learning procedure \citep{fujinami2025domainrandomizationsampleefficient}, e.g. as the posterior distribution of a Bayesian identification procedure. Recently, the approach has been applied in areas like autonomous racing \citep{loquercio2019deep}
and robotic control \citep{peng2018drforcontrol, akkaya2019solving}.
The control community has used similar randomized approaches to achieve high-probability guarantees (e.g. scenario optimization and related methods \citep{calafiore2006scenario, stengel1991technical, ray1993monte, vidyasagar2001randomized}).
 Recent work has explored generalization of domain randomization in discrete Markov Decision Processes \citep{chen2021understanding, zhong2019pacreinforcementlearningrealworld} and for continuous control \citep{fujinami2025domainrandomizationsampleefficient}. However, finding optimization procedures which minimize the domain randomization objective remains a challenge, with many empirical studies relying on heurisic approaches like curriculum learning \citep{muratore2022robot}. In this work, we rigorously study the convergence of policy gradient for minimizing the domain randomization objective for linear quadratic control. \textcolor{red}{}

\paragraph{Policy Gradient Methods}
Global convergence of policy gradient methods was first established for linear quadratic control by \citep{fazel2018global}, relying upon a condition known as \emph{gradient dominance} to demonstrate that driving the gradient to zero suffices to ensure convergence to the optimal solution. Subsequently, the results have been extended to a multitude of settings including mixed $\mathcal{H}_2/\mathcal{H}_{\infty}$ \citep{zhang2020policy}, Markovian jump systems \citep{rathod2021global}, and output estimation \citep{umenberger2022globally}. \citet{hu2023toward} provide a comprehensive survey on convergence studies of policy optimization methods for continuous control. There are two works closely related to domain randomization: \citet{gravell2019learning} and \citet{wang2023model}. \citet{gravell2019learning} study a related setting with multiplicative noise on the system, which resembles domain randomization but with variations drawn independently at each time step. 
Consequently, the resulting optimal controller exhibits robustness primarily in an average sense, rather than explicitly accounting for structured uncertainty as in domain randomization. \citet{wang2023model} study a federated approach to minimize the average LQR cost over a collection of systems; however, they prove convergence only up to a gap defined in terms of the heterogeneity between the systems. In contrast, we prove that as long as the systems are sufficiently similar, an approximate gradient domination condition suffices to ensure convergence of policy gradient methods to the globally optimal solution. We focus exclusively on model-based analysis; the extension of our results to the model free setting involves routine extension of the analysis for the variance of gradient estimates from prior work \citep{fazel2018global}.

\subsection{Contributions}

We study the convergence of policy gradient methods applied to the domain randomized LQR objective and establish the following results:
\begin{itemize}
    \item \textbf{DR Objective Approximation:} We propose optimizing a sample average approximation of the DR objective defined in terms of a finite number of sampled systems. We characterize the excess cost incurred by the solution to the approximate problem on the original objective. 
    \item \textbf{Policy Gradient Convergence:} We prove that a first order policy gradient algorithm converges to the globally optimal solution for the sample average approximation of the DR objective. This is the first result showing that policy gradient methods converge to the optimal solution for domain randomized LQR problems, removing the heterogeneity bias that appears in prior work \citep{wang2023model}. 
    \item \textbf{Joint Stabilization:} We propose a first order gradient descent algorithm based on discount factor annealing which provably converges to a controller stabilizing a collection of dynamical systems, extending prior work \citep{perdomo2021stabilizing} to the multi-system setting. 
\end{itemize}
The above contributions establish the convergence properties of first order gradient-based algorithms applied to the DR objective for continuous control. Given the prevalence of DR for the practice of learning-enabled robotic control and the engineering challenges involved in achieving convergent optimization procedures for the DR objective, we view this result as an important step towards the analysis and design of reliable reinforcement learning algorithms for sim-to-real transfer. Motivated by this, we include a number of empirical results highlighting limitations and open questions regarding the analysis of this  work. These include the generalization of the DR objective to risk metrics other than the expectation and the convergence of stochastic policy gradient algorithms. 
\section{Problem Formulation}
    We consider a fully observed linear time-invariant system defined by an unknown parameter $\theta \in\R^{d_{\theta}}$ with state $x_t \in \R^{\dx}$, control input $u_t \in \R^{\du}$, and disturbance $w_t \in \R^{\dx}$. The state of the system evolves as
    \begin{align*}
        x_{t+1} = A(\theta) x_t + B(\theta) u_t + w_t,
    \end{align*}
    where $w_t \overset{i.i.d.}{\sim} \mathcal{N}(0, I)$ is a zero-mean $i.i.d.$ disturbance.\footnote{Generalization to noise covariance $I  \neq \Sigma_w  \succ 0$ may be achieved by scaling $A(\theta) \gets \Sigma_w^{1/2} A(\theta)$ and $B(\theta) \gets \Sigma_w^{1/2} B(\theta)$.} For a fixed parameter $\theta$, 
    we define the infinite horizon linear quadratic regulation cost function in terms of positive definite matrices $Q \succeq I$ and $R = I$ as
    \begin{align*}
        J(K, \theta) \triangleq \limsup_{T\to\infty}\frac{1}{T}\E^{K} \brac{\sum_{t=0}^{\infty} x_t^\top Q x_t + u_t^\top R u_t \bigg\vert \theta},
    \end{align*}
    where the superscript $K$ indicates that the expectation is taken under the feedback policy $u_t = K x_t$.\footnote{Generalization to $Q \nsucceq I$ $R\neq I$ can be achieved by rescaling the cost by $1/\lambda_{\min}(Q)$ and the input as $B(\theta) \gets B(\theta)R^{-1/2}$.} The goal of DR is to determine a controller $K$ which minimizes the expectation of the LQR objective over a distribution of systems, as in prior work on sampling-based control \citep{vidyasagar2001randomized}. Such a distribution can be hand-designed to capture parametric uncertainty, or learned through Bayesian system identification \citep{fujinami2025domainrandomizationsampleefficient}. In particular, we let $\Theta$ be a random variable with density $p_{\Theta}$. The domain randomization problem is then to find $K_{DR}^\star \triangleq \argmin_{K} J_{DR}(K)$, with 
    \begin{align}
        \label{eq: DR objective}
        J_{DR}(K) \triangleq \E_{\Theta} J(K, \Theta). 
    \end{align}
    
    To solve this problem, we propose applying gradient descent to a finite sample approximation to objective \eqref{eq: DR objective}.
    We use $M$ independently sampled systems $\theta_1,\dots, \theta_M \sim p_{\Theta}$ to construct such a finite-sample approximation:
      \begin{align}
        \label{eq: sample average}
        J_{SA}(K) \triangleq \frac{1}{M} \sum_{i=1}^M J(K, \theta_i),
    \end{align}
    and apply policy gradient methods to find its minimizer. In particular, we start from a controller $K_0$, and iteratively update the controller as 
    \begin{align}
        \label{eq: grad update}
        K_{n+1} = K_n - \frac{\alpha }{M} \sum_{i=1}^M \nabla_K J(K_n, \theta_i),
    \end{align}
     where $\alpha$ is a fixed stepsize and, as shown in \citet{fazel2018global}, the gradient $\nabla_K J(K, \theta_i)$ for system $i$ is given by 
    \begin{align}
        \nabla_K J(K, \theta_i) &\triangleq E_K^i \Sigma_K^i. \label{eq: gradient of the cost function}
    \end{align}
    In the above expression, $\Sigma_K^i$ is the steady state covariance of system $i$ under controller $K$, and $E_K^i$ can be written in terms of the Lyapunov equation defining the LQR cost of applying controller $K$ to system $i$:
    \begin{align*}
        \Sigma_K^i &\triangleq \dlyap(A(\theta_i) + B(\theta_i) K, \Sigma_w) \\
        E_K^i &\triangleq 2 (R + B(\theta_i)^\top P_K^i B(\theta_i)) K + B(\theta_i)^\top P_K^i A(\theta_i) \\
        P_K^i &\triangleq \dlyap(A(\theta_i) + B(\theta_i) K, Q + K R K^\top).
    \end{align*}
    
    The primary technical challenge in proving convergence of the above scheme to the minimizer of \eqref{eq: DR objective} is that the approach to show gradient domination by \citet{fazel2018global} does not generalize to the sample average LQR cost for $M \geq 2$. Consequently, we introduce an additional assumption which ensures the support of the distribution $p_{\Theta}$ is suitably small.
    \begin{assumption}[Heterogeneity bound]
    \label{asmp: heterogeneity}
        Let $\calS \subseteq \R^{d_{\theta}}$ be such that $p_{\Theta}(\theta) = 0$ for $\theta \notin \calS$, and that for all $\theta_1, \theta_2 \in \calS$
        \begin{align*}
            \norm{\bmat{A(\theta_1) & B(\theta_1)} - \bmat{A(\theta_2) & B(\theta_2)}} \leq \varepsilon_{\mathsf{HET}},
        \end{align*}
        where 
        \begin{align*}
            \varepsilon_{\mathsf{HET}} \leq \frac{1}{5e5 \tau_B(\theta) \trace\left(P_{K(\theta)}^\theta\right)^{11/2}}
        \end{align*}
        for all $\theta \in \calS$ and where $\tau_B \triangleq \max\curly{1, \sup_{\theta \in \calS} \norm{B(\theta)}}$. Here $K(\theta)$ is the LQR controller for system $\theta$, and $P_{K(\theta)}^\theta$ its corresponding cost matrix from the Riccati equation.
    \end{assumption}
    The bound on $\varepsilon_{\mathsf{HET}}$ is a technical condition for our analysis. It scales inversely with $\norm{B(\theta)}$  and the trace of the Riccati equation. In particular, if the systems are too heterogeneous, our analysis cannot guarantee convergence. 
    

    Given the above condition, we demonstrate the convergence of policy gradient methods. The proof consists of three parts. We first show that policy gradient methods converge to the minimizer of the sample average objective $J_{SA}(K)$ starting from an initial controller $K_0$ which simultaneously stabilizes $\theta_1, \dots, \theta_M$ and achieves a sufficiently small cost $J_{SA}(K_0)$. We then introduce a discount factor scheduling scheme to achieve convergence from an arbitrary initial controller. Finally, we characterize the discrepancy between the empirical objective~\eqref{eq: sample average} and its population counterpart~\eqref{eq: DR objective}.  
\section{Policy Gradient Convergence}
\label{s: SA convergence}

We now prove convergence of the gradient update  \eqref{eq: grad update} to the minimizer of the sample average LQR objective \eqref{eq: sample average}. Denote this minimizer by $K_{SA}^\star \triangleq \argmin_K J_{SA}(K).$ To show convergence, we impose an additional assumption that the initial controller simultaneously stabilizes all systems.
\begin{assumption}[Simultaneous Stabilization]
    \label{asmp: simultaneous stabilization}
    We assume that $K_0\in\mathcal{K}$
    where 
    \[
        \mathcal{K} \coloneqq \curly{K \colon \rho(A(\theta_i) + B(\theta_i) K) < 1, ~ i=1,\dots, M}.
    \] 
\end{assumption}

To achieve sufficient conditions to ensure that \eqref{eq: grad update} converges starting from such a $K_0$, we introduce a definition that characterizes the cost of a controller on a particular sample in terms of the sample average cost of that controller.

\begin{definition}(Scenario Boundedness)
    \label{def: scenario boundedness}
    We call the collection of scenarios $\theta_1, \dots, \theta_M$ \emph{$(B,\slack)$-bounded} if for all $K$ satisfying $J_{SA}(K) \leq B J_{SA}(K_{SA}^\star)$, it holds that $J(K, \theta_i) \leq \slack J_{SA}(K)$. 
\end{definition}

Any collection of scenarios is $(B,\slack)$-bounded with $B =\infty$ and $\slack=M$ by the fact that the individual cost can never exceed the average scaled by the number of scenarios. However, if the scenarios are sufficiently close, we can determine a $(B,\slack)$-stability condition such that $\slack$ does not scale with the number of scenarios. This corresponds to the case where one single controller can perform fairly well in multiple scenarios. One such sufficient condition is provided in \Cref{lem: scenario boundedness}.

Under the above definition, we show that the sample average objective is coercive with a smoothness bound defined in terms of the average cost of the initial iterate $K_0$ on the systems $\theta_1, \dots, \theta_M$. This will in turn be used to ensure convergence of \eqref{eq: grad update} to a fixed point. 
Note that for any $\theta_i$, the LQR cost $J(K, \theta_i)$ is coercive, twice continuously differentiable, and $L_i$ smooth over any sub-level set $\mathcal{K}_{\slack\zeta}^i \triangleq \curly{K\in\mathcal{K} \colon J(K, \theta_i)\leq \slack\zeta}$ for $\zeta < \infty$~\citep[Lemma 1]{hu2023toward} for $L_i$ a polynomial in system parameters. This enables us to establish the following lemma. 



\begin{lemma}[Sum of LQR Costs is Coercive]
    \label{lem: coercivity}
    Suppose that the collection $(\theta_1, \dots, \theta_M)$ is $(B,\slack)$-bounded. Let $\mathcal{K}_\zeta \triangleq \curly{K\in\mathcal{K} \colon J_{SA}(K)\leq\zeta}$. 
    It holds that $\mathcal{K}_\zeta \subseteq \cap_{i=1}^M\mathcal{K}_{\slack\zeta}^i$ for any $\zeta \leq B$. 
    Furthermore, it follows that $J_{SA}(K)$ is coercive, twice continuously differentiable, and $L$-smooth over $\mathcal{K}_\zeta$ where $L = \frac{\sum_{i=1}^ML_i}{M}$.
\end{lemma}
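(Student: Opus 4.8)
The plan is to derive each asserted property of $J_{SA}$ from the corresponding property of the individual costs $J(\cdot,\theta_i)$, transported across the set inclusion $\mathcal{K}_\zeta \subseteq \cap_{i=1}^M\mathcal{K}_{\slack\zeta}^i$, which is the single place where the $(B,\slack)$-boundedness hypothesis enters. To prove this inclusion, fix $K \in \mathcal{K}_\zeta$, so that $K \in \mathcal{K}$ and $J_{SA}(K)\le\zeta$. Since $Q\succeq I$ and the noise covariance is $I$, the Lyapunov cost matrix of any controller stabilizing system $i$ dominates $Q$, whence $J(K',\theta_i)\ge\trace(Q)\ge\dx\ge1$ for every such $K'$, and in particular $J_{SA}(K_{SA}^\star)\ge1$; combined with $\zeta\le B$ this gives $J_{SA}(K)\le\zeta\le B\le B\,J_{SA}(K_{SA}^\star)$. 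Hence \Cref{def: scenario boundedness} applies and yields $J(K,\theta_i)\le\slack J_{SA}(K)\le\slack\zeta$ for every $i$, i.e.\ $K\in\cap_{i=1}^M\mathcal{K}_{\slack\zeta}^i$. As $K\in\mathcal{K}_\zeta$ was arbitrary, the inclusion holds.

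With the inclusion in hand, I would read off coercivity and regularity from the fact recalled before the lemma (originating in Lemma~1 of \citet{hu2023toward}) that each $J(\cdot,\theta_i)$ is coercive, twice continuously differentiable, and $L_i$-smooth over the sub-level set $\mathcal{K}_{\slack\zeta}^i$. Coercivity of $J(\cdot,\theta_i)$ makes each $\mathcal{K}_{\slack\zeta}^i$ compact, hence their finite intersection is compact; and $\mathcal{K}_\zeta$, being closed in $\mathcal{K}$ as a sub-level set of the continuous function $J_{SA}$ and contained in this compact intersection, is itself compact, which is the claimed coercivity. For regularity, since $\mathcal{K}_\zeta\subseteq\mathcal{K}_{\slack\zeta}^i$ for every $i$, each $J(\cdot,\theta_i)$ is $C^2$ on $\mathcal{K}_\zeta$ with $L_i$-Lipschitz gradient there, so the finite average $J_{SA}=\frac{1}{M}\sum_{i=1}^MJ(\cdot,\theta_i)$ is $C^2$ on $\mathcal{K}_\zeta$, and for any $K,K'\in\mathcal{K}_\zeta$,
\begin{align*}
    \norm{\nabla J_{SA}(K)-\nabla J_{SA}(K')}
    &\le\frac{1}{M}\sum_{i=1}^M\norm{\nabla J(K,\theta_i)-\nabla J(K',\theta_i)}\\
    &\le\Bigl(\frac{1}{M}\sum_{i=1}^ML_i\Bigr)\norm{K-K'},
\end{align*}
giving $L$-smoothness on $\mathcal{K}_\zeta$ with $L=\frac{1}{M}\sum_{i=1}^ML_i$.

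The only load-bearing step is the set inclusion: without $(B,\slack)$-boundedness one only has the trivial choice $\slack=M$, valid for any collection, which would make $L$ and hence the eventual convergence rate degrade with $M$. The purpose of \Cref{def: scenario boundedness} — which the heterogeneity bound of \Cref{asmp: heterogeneity} is designed to guarantee — is precisely to obtain a $\slack$ that does not scale with $M$. Everything downstream of the inclusion is routine: coercivity, $C^2$ regularity, and Lipschitzness of the gradient are each preserved under finite averaging, a finite intersection of compact sets is compact, and a closed subset of a compact set is compact. The one subtlety worth flagging is that the quantifier in \Cref{def: scenario boundedness} is phrased relative to $B\,J_{SA}(K_{SA}^\star)$ rather than $B$; this is harmless because $Q\succeq I$ forces $J_{SA}(K_{SA}^\star)\ge1$, as used above.
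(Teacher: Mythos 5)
Your proof is correct and follows essentially the same route as the paper's: the $(B,\slack)$-boundedness hypothesis enters only through the inclusion $\mathcal{K}_\zeta \subseteq \cap_{i=1}^M\mathcal{K}^i_{\slack\zeta}$, and the regularity and smoothness claims are then obtained by averaging the per-system properties. Two points of comparison. First, you are more careful than the paper about the quantifier in \Cref{def: scenario boundedness}: the definition is phrased relative to $B\,J_{SA}(K_{SA}^\star)$ while the lemma assumes only $\zeta\le B$, and your observation that $J_{SA}(K_{SA}^\star)\ge 1$ (from $Q\succeq I$ and $\Sigma_w=I$, consistent with the fact the paper records in its appendix) closes that gap, which the paper's proof passes over silently. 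Second, your coercivity argument routes through compactness of the sublevel sets and therefore only establishes the property for levels $\zeta\le B$; the paper instead uses the unconditional inequality $M J_{SA}(K)\ge J(K,\theta_i)$, so that along any sequence with $\norm{K^l}\to\infty$ or $K^l\to\partial\mathcal{K}$ some individual cost, and hence the average, diverges --- this gives coercivity of $J_{SA}$ on all of $\mathcal{K}$ without invoking $(B,\slack)$-boundedness at all. Your compactness route suffices for everything downstream (the gradient iterates never leave $\mathcal{K}_{\zeta_0}$), but if you want the literal claim that $J_{SA}$ is coercive in the sense of Definition~1 of \citet{hu2023toward}, the paper's one-line max-versus-average bound is the cleaner and stronger way to get it. The smoothness step is equivalent in the two proofs: you average Lipschitz constants of the gradients, the paper sums the quadratic upper bounds \eqref{eq: Li-smoothness}; both yield $L=\frac{1}{M}\sum_{i=1}^M L_i$.
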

\begin{proof}
    By \Cref{def: scenario boundedness}, it holds that for any $i \in \curly{1,\dots, M}$, the cost is bounded as $J(K,\theta_i) \leq \slack J_{SA}(K) \leq \slack \zeta$ for all $K \in \calK_{\zeta}$. Consequently, $\mathcal{K}_\zeta \subseteq \cap_{i=1}^M\mathcal{K}_{\slack\zeta}^i$. Now, since $MJ_{SA}(K) \geq J(K, \theta^i)$ for $i=1, \cdots, M$, when one of $J(K,\theta^i)\rightarrow\infty$ for any sequence $\curly{K^l}_{l=1}^{\infty}$ if $\norm{K^l}_2\rightarrow\infty$ or $K^l$ converges to an element of $\partial \calK$, $J_{SA}(K)\rightarrow\infty$ also holds. Thus $J_{SA}(K)$ is coercive~\citep[Definition 1]{hu2023toward}. 
    By the fact that $J(K, \theta_i)$ is twice differentiable and $L_i$ smooth over $\calK_{\slack \zeta}^i$, it holds by the mean value theorem that
    \begin{align}
        &J(K', \theta_i) - J(K, \theta_i) \nonumber\\
        &\leq (K'-K)^\top\nabla_KJ(K, \theta_i) + \frac{L_i}{2}\norm{K'-K}^2_F, \label{eq: Li-smoothness}
    \end{align}
    for $K, K'\in\mathcal{K}_{\slack\zeta}^i$. 
    Then $J_{SA}(K)$ is also twice continuously differentiable, and by summing \eqref{eq: Li-smoothness} over $i =1, \cdots, M$, it holds that $J_{SA}(K)$ is $L$-smooth on $\mathcal{K}_\zeta$ with $L = \frac{\sum_{i=1}^ML_i}{M}$.
\end{proof}

The smoothness constant $L_i$ can be explicitly written in terms of system parameters~\citep[Theorem 7]{fazel2018global}.
Instantiating these bounds, we find that the smoothness constant $L$ of $J_{SA}(K)$ over the sublevel set of value $J_{SA}(K_0)$  satisfies
\begin{align*}
    L \leq &\mathsf{poly}(\max_{i\in\curly{1, \cdots, M}}J(K_0, \theta_i), \max_{i\in\curly{1, \cdots, M}}\norm{\nabla J(K_0, \theta_i)}, \\
    &\max_{i\in\curly{1, \cdots, M}}\norm{A(\theta_i)}, \max_{i\in\curly{1, \cdots, M}}\norm{B(\theta_i)})
\end{align*}
A consequence of the above result is that the gradient descent procedure converges to a fixed point. 

\begin{lemma}[Convergence to a Fixed Point]
    \label{lem: fixed point for gradient}
    Suppose that the collection $(\theta_1, \dots, \theta_M)$ is $(B,\slack)$-bounded. Let $\zeta_0 = J_{SA}(K_0)$ and $L$ be the smoothness constant of $J_{SA}(K)$ on $\mathcal{K}_{\zeta_0}$.
    Consider the policy gradient method \eqref{eq: grad update}.  Then, for any $0 < \alpha < \frac{2}{L}$, we have $K_n\in\mathcal{K}_{\zeta_0}$ and $J_{SA}(K_{n+1}) \leq J_{SA}(K_n)$ for all n.
    Furthermore, we have the convergence of the gradient $\sum_{i=1}^M\nabla_K J_{SA}(K, \theta_i) \rightarrow 0$ with the rate
    \begin{align*}
        \min_{0\leq l\leq k}\norm{\nabla_K J_{SA}(K)}^2_F \leq \frac{\zeta_0}{C(k+1)},
    \end{align*}
    with $C = \alpha - \frac{L\alpha^2}{2} > 0$.
\end{lemma}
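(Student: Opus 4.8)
The plan is to run the standard ``descent lemma plus telescoping'' argument for gradient descent on an $L$-smooth function; the only wrinkle is that \Cref{lem: coercivity} provides $L$-smoothness of $J_{SA}$ only on the sublevel set $\mathcal{K}_{\zeta_0}$, so one must separately argue that the iterates never leave it. I would begin by noting that, by \eqref{eq: gradient of the cost function} and linearity of the gradient, $\tfrac{1}{M}\sum_{i=1}^M \nabla_K J(K,\theta_i) = \nabla_K J_{SA}(K)$, so the update \eqref{eq: grad update} is exactly $K_{n+1} = K_n - \alpha\,\nabla_K J_{SA}(K_n)$.

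The core of the proof is the one-step claim, established by induction on $n$: if $K_n\in\mathcal{K}_{\zeta_0}$ then $K_{n+1}\in\mathcal{K}_{\zeta_0}$ and $J_{SA}(K_{n+1}) \le J_{SA}(K_n) - C\norm{\nabla_K J_{SA}(K_n)}_F^2$ with $C = \alpha - \tfrac{L\alpha^2}{2}$, which is positive precisely because $\alpha < 2/L$. The base case is immediate from $\zeta_0 = J_{SA}(K_0)$. For the inductive step, assume $K_n\in\mathcal{K}_{\zeta_0}$ and, discarding the trivial case, $\nabla_K J_{SA}(K_n)\neq 0$; set $\phi(t) = K_n - t\alpha\,\nabla_K J_{SA}(K_n)$ and $\bar t = \sup\{t\in[0,1] : \phi(s)\in\mathcal{K}_{\zeta_0}\ \text{for all}\ s\in[0,t]\}$. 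Since $\mathcal{K}$ is open and $\tfrac{d}{ds}J_{SA}(\phi(s))\big|_{s=0} = -\alpha\norm{\nabla_K J_{SA}(K_n)}_F^2 < 0$, we have $\bar t>0$. Because $\phi$ is affine, the whole segment $\{\phi(s) : s\in[0,\bar t\,]\}$ lies in $\mathcal{K}_{\zeta_0}$, where $\nabla_K J_{SA}$ is $L$-Lipschitz; integrating the gradient along it gives, for every $s\in[0,\bar t\,]$,
\begin{align*}
    J_{SA}(\phi(s)) &\le J_{SA}(K_n) - \alpha s\big(1 - \tfrac{L\alpha s}{2}\big)\norm{\nabla_K J_{SA}(K_n)}_F^2\\
    &\le J_{SA}(K_n) \le \zeta_0 ,
\end{align*}
with the middle inequality strict for $s>0$ since $\alpha s\le\alpha<2/L$. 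It remains to exclude $\bar t<1$: in that case $J_{SA}(\phi(\bar t)) < \zeta_0 < \infty$, so coercivity prevents $\phi(\bar t)$ from lying on $\partial\mathcal{K}$, hence $\phi(\bar t)\in\mathcal{K}$; openness of $\mathcal{K}$ together with continuity of $J_{SA}$ then keeps $\phi(s)$ inside $\mathcal{K}_{\zeta_0}$ for $s$ slightly beyond $\bar t$, contradicting the definition of $\bar t$. Thus $\bar t=1$, giving $K_{n+1}\in\mathcal{K}_{\zeta_0}$ and, at $s=1$, the one-step descent inequality; this also delivers $K_n\in\mathcal{K}_{\zeta_0}$ for all $n$ and the monotonicity $J_{SA}(K_{n+1})\le J_{SA}(K_n)$.

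The remaining step is routine bookkeeping: summing the one-step descent over $l = 0,\dots,k$ and using $J_{SA}\ge 0$ gives $C\sum_{l=0}^k\norm{\nabla_K J_{SA}(K_l)}_F^2 \le J_{SA}(K_0) - J_{SA}(K_{k+1}) \le \zeta_0$, and bounding the minimum by the average yields $\min_{0\le l\le k}\norm{\nabla_K J_{SA}(K_l)}_F^2 \le \tfrac{\zeta_0}{C(k+1)}$. The step I expect to be the main obstacle — really the only part that is not routine — is showing the iterates stay in $\mathcal{K}_{\zeta_0}$: since the Lipschitz-gradient bound is available only on that sublevel set, the descent inequality cannot be applied directly between $K_n$ and $K_{n+1}$, and one needs the continuity-of-trajectory argument above, which itself leans on coercivity to keep $\mathcal{K}_{\zeta_0}$ separated from the boundary $\partial\mathcal{K}$ of the simultaneously-stabilizing set.
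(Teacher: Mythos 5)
Your proof is correct, and it is essentially the paper's argument: the paper simply delegates the entire reasoning to \Cref{lem: coercivity} together with Theorem 1 of Hu et al.\ (2023), and what you have written out — the descent lemma integrated along the segment, the continuity-plus-coercivity argument keeping the iterates inside the sublevel set $\mathcal{K}_{\zeta_0}$ away from $\partial\mathcal{K}$, and the telescoping bound — is precisely the standard proof underlying that cited theorem. One cosmetic point: rather than evaluating $J_{SA}(\phi(\bar t))$ before knowing $\phi(\bar t)\in\mathcal{K}$, phrase the boundary exclusion as ``$J_{SA}(\phi(s))\le\zeta_0$ for all $s<\bar t$, whereas coercivity would force $J_{SA}(\phi(s))\to\infty$ if $\phi(\bar t)\in\partial\mathcal{K}$,'' which is the intended contradiction.
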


This is the direct consequence of \Cref{lem: coercivity} and Theorem 1 of~\cite{hu2023toward}. 
\Cref{lem: fixed point for gradient} ensures convergence of the policy gradient procedure to a fixed point; however, it does not imply that this fixed point is the global minimizer of \eqref{eq: sample average}. In the case of a single dynamical system, \citet{fazel2018global} show that a property known as gradient domination is satisfied by the LQR cost: \[J(K, \theta) - J(K^\star, \theta) \leq \norm{\Sigma_{K^\star}}\norm{\nabla_K J(K, \theta)}_F^2.\] It is not clear if such a condition holds for the sum of LQR objectives defined by multiple systems. In particular, extending the analysis in Lemma 11 of \citep{fazel2018global} leads to the following weaker result, which provides only an approximate gradient domination condition.
\begin{lemma}[Approximate Gradient Domination]
    For any $K$ that simultaneously stabilizes $\theta_1, \dots, \theta_M$, it holds that 
    \begin{align*}
        J_{SA}(\!K)\! -
        \!J_{SA}(K_{SA}^\star)\!\leq\! \norm{\frac{1}{M}\!\sum_{i=1}^M\! \nabla_K J(\!K, \theta_i) (\Sigma_K^i)^{\!-\!1} \Sigma_{K_{SA}^\star}^i}_F^2\!\!\!.
    \end{align*}
\end{lemma}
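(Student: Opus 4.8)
The plan is to follow the single‑system gradient‑domination argument (Lemma~11 of \citep{fazel2018global}, adapted to the $u_t = K x_t$ sign convention), but to exploit that the controller displacement $\Delta \triangleq K - K_{SA}^\star$ is \emph{the same for all $M$ systems}. That common $\Delta$ is exactly what turns a bound that would otherwise be an average of per‑system squared norms into the squared norm of an average, as appears on the right‑hand side of the claim.

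First I would apply the LQR cost‑difference lemma (Lemma~10 of \citep{fazel2018global}), in the additive‑noise ergodic form consistent with the gradient identity \eqref{eq: gradient of the cost function}, to each sampled system $\theta_i$, comparing $K$ against $K_{SA}^\star$. Writing $W_K^i \triangleq R + B(\theta_i)^\top P_K^i B(\theta_i)$ and $\langle\cdot,\cdot\rangle$ for the Frobenius inner product, the identity rearranges (since $K$ and $K_{SA}^\star$ are both in $\mathcal{K}$) to
\begin{align*}
    J(K,\theta_i) - J(K_{SA}^\star,\theta_i) = \big\langle \Delta,\, \nabla_K J(K,\theta_i)(\Sigma_K^i)^{-1}\Sigma_{K_{SA}^\star}^i \big\rangle - \trace\big(\Sigma_{K_{SA}^\star}^i \Delta^\top W_K^i \Delta\big).
\end{align*}
Averaging over $i = 1,\dots,M$ and setting $G \triangleq \tfrac1M\sum_{i=1}^M \nabla_K J(K,\theta_i)(\Sigma_K^i)^{-1}\Sigma_{K_{SA}^\star}^i$ (the matrix whose squared Frobenius norm is the claimed bound), this becomes
\begin{align*}
    J_{SA}(K) - J_{SA}(K_{SA}^\star) = \langle \Delta, G\rangle - \frac1M\sum_{i=1}^M \trace\big(\Sigma_{K_{SA}^\star}^i \Delta^\top W_K^i \Delta\big).
\end{align*}

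Next I would simply lower‑bound the quadratic terms rather than completing the square. Since the stationary covariance satisfies $\Sigma_{K_{SA}^\star}^i \succeq \Sigma_w = I$ and $W_K^i = I + B(\theta_i)^\top P_K^i B(\theta_i) \succeq I$, for each $i$ we have $\trace(\Sigma_{K_{SA}^\star}^i \Delta^\top W_K^i \Delta) \geq \trace(\Delta^\top\Delta) = \norm{\Delta}_F^2$. Hence $J_{SA}(K) - J_{SA}(K_{SA}^\star) \leq \langle\Delta, G\rangle - \norm{\Delta}_F^2 \leq \norm{\Delta}_F\norm{G}_F - \norm{\Delta}_F^2$, and maximizing the right‑hand side over the scalar $t = \norm{\Delta}_F \geq 0$ (using $bt - t^2 \leq b^2/4$) gives $J_{SA}(K) - J_{SA}(K_{SA}^\star) \leq \tfrac14\norm{G}_F^2 \leq \norm{G}_F^2$, which is the claim, with a factor of $1/4$ to spare.

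The only real work is the first step: unrolling the cost‑difference lemma for each $\theta_i$ and regrouping so that the common $\Delta$ is pulled out of every cross term --- so the cross terms aggregate into the single inner product $\langle\Delta, G\rangle$ with the reweighted gradient $G$ --- while the per‑system residuals collapse into the manifestly nonnegative quadratics $\trace(\Sigma_{K_{SA}^\star}^i \Delta^\top W_K^i \Delta)$; after that, one eigenvalue bound and a scalar optimization finish it. It is worth recording why the literal generalization of Lemma~11 of \citep{fazel2018global} is weaker: completing the square separately in each system produces an \emph{average} of per‑system squared norms, which does not vanish at $K = K_{SA}^\star$ (only the averaged gradient $\nabla_K J_{SA}(K_{SA}^\star)$ does), so it cannot serve as a gradient‑domination inequality. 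Keeping the cross terms aggregated and merely bounding the quadratics below is precisely what places $\nabla_K J_{SA}$ --- up to the reweighting $\Sigma_K^i \mapsto \Sigma_{K_{SA}^\star}^i$ --- inside a single Frobenius norm.
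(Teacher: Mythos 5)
Your proposal is correct and takes essentially the same route as the paper: both apply the per-system cost-difference identity (Lemma 12 of \citep{fazel2018global}) with the common displacement $K - K_{SA}^\star$, average over the $M$ systems, and control the resulting quadratic term via $\Sigma_{K_{SA}^\star}^i \succeq \Sigma_w = I$ and $R + B(\theta_i)^\top P_K^i B(\theta_i) \succeq I$. The only difference is presentational --- the paper vectorizes and completes the square using $\widetilde{G} \succeq M I$, while you use Cauchy--Schwarz followed by a scalar maximization, which incidentally yields the slightly sharper constant $\tfrac{1}{4}\norm{G}_F^2$.
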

\begin{proof}
    Let $\widetilde{E} = \sum_{i=1}^M\paren{\Sigma_{K_{SA}^\star}^i \kron I}\VEC E_K^i$ and $\widetilde{G} = \sum_{i=1}^M\Sigma_{K_{SA}^\star}^i \kron \paren{B(\theta_i)^TP_K^iB(\theta_i) + R}$. By Lemma 12 of \citep{fazel2018global} and the identity $\trace(A^\top BCD^\top) = \VEC(A)^\top (D\kron B)\VEC (C)$, we have that 
    \begin{align*}
        &J_{SA}(K)- J_{SA}(K_{SA}^\star) = \frac{1}{M}\sum_{i=1}^M J(K, \theta_i) - J(K_{SA}^\star, \theta_i) \\
        &= -\frac{1}{M}\VEC\paren{K - K_{SA}^\star}^\top(\widetilde{G} \VEC (K - K_{SA}^\star) -2\widetilde{E}) \\
        &\leq \frac{1}{M}\widetilde{E}\widetilde{G}^{-1}\widetilde{E} \leq \frac{1}{M^2} \norm{\widetilde{E}}^2 = \norm{\frac{1}{M}\sum_{i=1}^ME_K^i\Sigma_{K_{SA}^\star}^i}^2_F,
    \end{align*}
    where the first inequality results from completing the square and the second inequality follows from the fact that $\tilde G \succeq M I$. The last equality holds by the identity $\VEC(XYZ) = (Z^\top \otimes X) \VEC(Y)$. Conclude by applying \eqref{eq: gradient of the cost function}. 
\end{proof}

Critically, we observe that convergence of the gradient $\frac{1}{M} \sum_{i=1}^M \nabla_K J(K, \theta_i)$ to zero does not immediately suffice to ensure that the sum of gradients weighted by $(\Sigma_K^i)^{-1} \Sigma_{K_{SA}^\star}^i$ appearing in the approximate gradient domination condition converges to zero. However, we may write the upper bound in the condition as a gradient term, plus a remainder as
\begin{equation}
\begin{aligned}
    \label{eq: grad dom plus remainder}
    &J_{SA}(K) \!-\! J_{SA}(K_{SA}^\star) \\
    \quad\quad&\!\leq\! 2\paren{\norm{\nabla_K J_{SA}(K)}_F^2 + \norm{\mathsf{R}(K, K_{SA}^\star)}_F^2\!},
\end{aligned}
\end{equation}
where
\[
    \mathsf{R}(K, K_{SA}^\star) \triangleq \frac{1}{M} \sum_{i=1}^M E_K^i\paren{\Sigma_{K_{SA}^\star}^i - \Sigma_K^i}.
\]
The remainder term above becomes small as $K \to K_{SA}^\star$. In particular, the remainder term is bounded by a small constant multiplied by the cost gap, we achieve gradient domination.
\begin{lemma}
    Let $\mu \in (0,\frac{1}{2})$ and $\zeta_0 = J_{SA}(K_0)$. Suppose that 
    \begin{align}
        \label{eq: remainder bound}
        \norm{R(K, K_{SA}^\star)}_F^2 \leq \mu (J_{SA}(K) - J_{SA}(K_{SA}^\star))
    \end{align}
    for all $K \in S \subseteq \calK_{\zeta_0}$. Then for all $K \in S,$
    \begin{align*}
        J_{SA}(K) - J_{SA}(K_{SA}^\star) \leq \frac{2}{1 - 2 \mu} \norm{\nabla_K J_{SA}(K)}_F^2.
    \end{align*}
\end{lemma}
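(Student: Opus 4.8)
The plan is to derive the claimed gradient domination inequality directly from the decomposition in~\eqref{eq: grad dom plus remainder}, treating the remainder bound~\eqref{eq: remainder bound} as the only nontrivial input. Concretely, I would start from
\[
    J_{SA}(K) - J_{SA}(K_{SA}^\star) \leq 2\norm{\nabla_K J_{SA}(K)}_F^2 + 2\norm{R(K, K_{SA}^\star)}_F^2,
\]
which holds for every $K$ simultaneously stabilizing $\theta_1, \dots, \theta_M$, and in particular for every $K \in S \subseteq \calK_{\zeta_0}$. This decomposition is itself the approximate gradient domination lemma combined with the elementary bound $\norm{a+b}_F^2 \leq 2\norm{a}_F^2 + 2\norm{b}_F^2$ applied to the splitting $\frac{1}{M}\sum_i E_K^i \Sigma_{K_{SA}^\star}^i = \nabla_K J_{SA}(K) + R(K, K_{SA}^\star)$ (using $\nabla_K J(K,\theta_i)(\Sigma_K^i)^{-1} = E_K^i$), and may be taken as given from the material preceding the statement.

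Next I would substitute the hypothesis~\eqref{eq: remainder bound}, namely $\norm{R(K, K_{SA}^\star)}_F^2 \leq \mu\,(J_{SA}(K) - J_{SA}(K_{SA}^\star))$, into the right-hand side to obtain
\[
    J_{SA}(K) - J_{SA}(K_{SA}^\star) \leq 2\norm{\nabla_K J_{SA}(K)}_F^2 + 2\mu\,(J_{SA}(K) - J_{SA}(K_{SA}^\star)).
\]
Moving the last term to the left gives $(1 - 2\mu)(J_{SA}(K) - J_{SA}(K_{SA}^\star)) \leq 2\norm{\nabla_K J_{SA}(K)}_F^2$. Since $\mu \in (0,\frac{1}{2})$ we have $1 - 2\mu > 0$, so dividing through by $1 - 2\mu$ yields the claim. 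Both $J_{SA}(K) - J_{SA}(K_{SA}^\star)$ (because $K_{SA}^\star$ is the minimizer) and $\norm{R(K,K_{SA}^\star)}_F^2$ are nonnegative, so no sign issues arise in the rearrangement.

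For this particular statement there is no real obstacle: once~\eqref{eq: grad dom plus remainder} is in hand, the argument is a single algebraic rearrangement, and the role of $\mu < \frac{1}{2}$ is precisely to keep the coefficient $1 - 2\mu$ positive. The genuine work — which I would view as the main obstacle in the surrounding development, to be handled in a separate lemma — is establishing that~\eqref{eq: remainder bound} holds on a usefully large set $S$: this amounts to bounding $\norm{R(K, K_{SA}^\star)}_F = \norm{\frac{1}{M}\sum_i E_K^i(\Sigma_{K_{SA}^\star}^i - \Sigma_K^i)}_F$ by controlling how far the closed-loop covariances $\Sigma_K^i$ move as $K$ ranges over a sublevel set, relating that perturbation to the cost gap via perturbation bounds for the discrete Lyapunov equation, and invoking the heterogeneity bound of \Cref{asmp: heterogeneity} to make the resulting constant smaller than $\mu$. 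The lemma above is simply the clean consequence of that estimate.
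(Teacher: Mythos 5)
Your proof is correct and is exactly the argument the paper uses: substitute the remainder bound \eqref{eq: remainder bound} into the decomposition \eqref{eq: grad dom plus remainder}, move the $2\mu\,(J_{SA}(K) - J_{SA}(K_{SA}^\star))$ term to the left, and divide by $1-2\mu > 0$. The paper states this in a single sentence; your version simply spells out the rearrangement and correctly identifies that the substantive work lives in establishing \eqref{eq: remainder bound} on a useful set $S$, which is handled elsewhere.
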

\begin{proof}
    The result follows  by substitution of \eqref{eq: remainder bound} into \eqref{eq: grad dom plus remainder}. 
\end{proof}
The above result in combination with \Cref{lem: fixed point for gradient} establishes convergence of algorithm \eqref{eq: grad update} to the solution of objective \eqref{eq: sample average} from an initial iterate satisfying Assumption~\ref{asmp: simultaneous stabilization} as long as the collection $\theta_1, \dots, \theta_M$ is $(B,\slack)$-bounded, and the iterates of the gradient updates \eqref{eq: grad update} converge to a set $S$ for which the condition \eqref{eq: remainder bound} is satisfied. The convergence rate is determined by the stepsize $\alpha$, the inital cost $\zeta_0$, and the smoothness constant $L$ of the sample average control cost. 

The heterogeneity bound of Assumption~\ref{asmp: heterogeneity} ensures that $(B, \slack)$-boundedness is satisfied. 
\begin{lemma}
    \label{lem: scenario boundedness}
    Under Assumption~\ref{asmp: heterogeneity}, it holds that any collection of samples $\theta_1, \dots, \theta_M$ drawn from $p_{\Theta}$ is $(B, \slack)$-bounded with $B = 8$ and $\slack = 2$.
\end{lemma}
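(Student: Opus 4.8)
The plan is to verify \Cref{def: scenario boundedness} directly with $B=8$ and $\slack=2$. Because $p_{\Theta}$ vanishes outside $\calS$, it suffices to prove the deterministic claim: for any $\theta_1,\dots,\theta_M \in \calS$ and any $K$ with $J_{SA}(K) \le 8\,J_{SA}(K_{SA}^\star)$, one has $J(K,\theta_i) \le 2\,J_{SA}(K)$ for all $i$. Note finiteness of $J_{SA}(K)$ forces each $J(K,\theta_j)<\infty$, hence $K\in\calK$ and the matrices $\Sigma_K^j,P_K^j$ are well defined.

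The reduction is to compare each scenario against a below-average one. Pick an index $j^\star$ with $J(K,\theta_{j^\star}) \le J_{SA}(K)$ (such exists since the minimum over $j$ is at most the average). It then suffices to show $J(K,\theta_i)-J(K,\theta_{j^\star}) \le J_{SA}(K)$ for each $i$. For this I would invoke a parameter-perturbation bound for the LQR cost of the form: if $K$ stabilizes $\theta$ with $c\triangleq J(K,\theta)<\infty$ and $\theta'$ satisfies $\norm{\bmat{A(\theta) & B(\theta)} - \bmat{A(\theta') & B(\theta')}}\le\varepsilon$ with $\varepsilon$ below a threshold that is an inverse polynomial in $c$ and $\tau_B$, then $K$ stabilizes $\theta'$ and $|J(K,\theta)-J(K,\theta')|\le\mathsf{poly}(c,\tau_B)\,\varepsilon$. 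Here $\norm{A(\theta)}$, $\norm{K}$, $\norm{\Sigma_K}$, $\norm{P_K}$ are themselves $\mathsf{poly}(c,\tau_B)$ since $Q\succeq I$, $R=I$, $\Sigma_K\succeq I$, so the constant is explicit; the estimate follows from standard Lyapunov-equation perturbation arguments (cf.\ \citep{fazel2018global}). Applying it with $\theta=\theta_{j^\star}$, $\theta'=\theta_i$, $\varepsilon=\varepsilon_{\mathsf{HET}}$ (small by \Cref{asmp: heterogeneity}), and $c = J(K,\theta_{j^\star}) \le J_{SA}(K) \le 8\,J_{SA}(K_{SA}^\star)$ gives
\[
    J(K,\theta_i)-J(K,\theta_{j^\star}) \;\le\; \mathsf{poly}\bigl(8\,J_{SA}(K_{SA}^\star),\,\tau_B\bigr)\,\varepsilon_{\mathsf{HET}}.
\]

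It remains to bound $J_{SA}(K_{SA}^\star)$ from above so the right-hand side can be absorbed into $J_{SA}(K) \ge J_{SA}(K_{SA}^\star)$. By optimality, $J_{SA}(K_{SA}^\star) \le J_{SA}(K(\theta_1)) = \tfrac{1}{M}\sum_i J(K(\theta_1),\theta_i)$; applying the same perturbation bound anchored at $\theta_1$ (where $J(K(\theta_1),\theta_1)=\trace(P_{K(\theta_1)}^{\theta_1})$) and using \Cref{asmp: heterogeneity} once more, each $J(K(\theta_1),\theta_i) \le \trace(P_{K(\theta_1)}^{\theta_1}) + \mathsf{poly}(\trace(P_{K(\theta_1)}^{\theta_1}),\tau_B)\,\varepsilon_{\mathsf{HET}} \le 2\,\trace(P_{K(\theta_1)}^{\theta_1})$, so $J_{SA}(K_{SA}^\star) \le 2\,\trace(P_{K(\theta_1)}^{\theta_1})$. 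Substituting, the cost gap $J(K,\theta_i)-J(K,\theta_{j^\star})$ is at most $\mathsf{poly}(16\,\trace(P_{K(\theta_1)}^{\theta_1}),\tau_B)\,\varepsilon_{\mathsf{HET}}$; the bound $\varepsilon_{\mathsf{HET}}\le\bigl(5\cdot 10^5\,\tau_B(\theta_1)\,\trace(P_{K(\theta_1)}^{\theta_1})^{11/2}\bigr)^{-1}$ is calibrated precisely so that this is at most $\trace(P_{K(\theta_1)}^{\theta_1}) \le J_{SA}(K_{SA}^\star) \le J_{SA}(K)$, giving $J(K,\theta_i) \le J(K,\theta_{j^\star}) + J_{SA}(K) \le 2\,J_{SA}(K)$.

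The main obstacle is the perturbation bound and the degree/constant bookkeeping around it: the naive estimate $J(K,\theta_i)\le M\,J_{SA}(K)$ degrades with $M$, so the $M$-independent $\slack=2$ is recovered only by exploiting closeness of the systems, and the real work is (i) deriving a perturbation bound with an \emph{explicit} polynomial constant in $(c,\tau_B)$ and a verifiable inverse-polynomial smallness threshold on $\varepsilon$ — via resolvent/Neumann-series control of the Lyapunov solutions $\Sigma_K^i,P_K^i$ and their optimal counterparts — and (ii) confirming its degree and numerical constant are dominated by the $11/2$ exponent and the factor $5\cdot 10^5$ of \Cref{asmp: heterogeneity}, which must also swallow the two nested doublings (the $8$ from $B$ and the factor $2$ from the bound on $J_{SA}(K_{SA}^\star)$). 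A secondary subtlety is the apparent circularity — the perturbation constant depends on the very cost being bounded — resolved by anchoring at a below-average scenario $\theta_{j^\star}$, respectively at a per-system optimal controller $K(\theta_1)$, whose costs are a priori controlled.
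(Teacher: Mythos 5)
Your proposal is correct and follows essentially the same route as the paper: anchor at a below-average scenario $\tilde\theta$ with $J(K,\tilde\theta)\le J_{SA}(K)$, control $J_{SA}(K_{SA}^\star)$ via the cost of a single-system optimal controller evaluated across $\calS$, and then use a Lyapunov/cost perturbation bound whose smallness threshold is exactly what the $\trace(P_{K(\theta)}^{\theta})^{11/2}$ calibration in Assumption~\ref{asmp: heterogeneity} is designed to satisfy (the paper obtains $J(K,\theta_j)\le 2J(K,\tilde\theta)$ multiplicatively via \Cref{lem: Lyapunov Perturbation} and \Cref{lem: cost bound with heterogeneity}, rather than your additive gap bound, but the mechanism is identical). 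The perturbation estimate you defer as ``the main obstacle'' is precisely what the paper supplies through those helper lemmas, so no substantive step is missing.
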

If additionally the cost of the initial iterate is not too much higher than the optimal cost, then condition \eqref{eq: remainder bound} holds.

\begin{lemma}
    \label{lem: remainder bound}
    Suppose Assumption~\ref{asmp: heterogeneity} holds and that  
    \begin{align*}
        \zeta_0 = J_{SA}(K_0) \leq 8 J_{SA}(K_{SA}^\star). 
    \end{align*}
    Then condition \eqref{eq: remainder bound} is satisfied with $\mu = \frac{1}{4}$ and 
    \begin{align*}
        S = \curly{K \in \calK_{\zeta_0}: \norm{\nabla_K J_{SA}(K)}_F \leq \frac{1}{256 \tau_{B}  J_{SA}(K)^{5/2}}}.
    \end{align*}
\end{lemma}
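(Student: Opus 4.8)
The plan is to bound $\norm{\mathsf{R}(K,K_{SA}^\star)}_F$ by the product of a ``gradient-magnitude'' factor and a ``covariance-perturbation'' factor, each of which is small on $S$, and then to divide by the cost gap using a quadratic-growth lower bound for $J_{SA}$ near $K_{SA}^\star$. Throughout I would use the hypotheses as follows: by \Cref{lem: scenario boundedness} the collection is $(8,2)$-bounded, and since every $K\in\calK_{\zeta_0}$ (and $K_{SA}^\star$ itself) has $J_{SA}(K)\le\zeta_0\le 8J_{SA}(K_{SA}^\star)$, it follows that $J(K,\theta_i)\le 2J_{SA}(K)$ and $J(K_{SA}^\star,\theta_i)\le 2J_{SA}(K)$ for every $i$. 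Hence $\norm{\Sigma_K^i}$, $\norm{\Sigma_{K_{SA}^\star}^i}$, $\norm{P_K^i}$, $\norm{P_{K_{SA}^\star}^i}$, $\trace\Sigma_K^i$, $\norm{K}$, $\norm{A(\theta_i)+B(\theta_i)K}$ and the analogous quantities at $K_{SA}^\star$ are all at most fixed polynomials in $J_{SA}(K)$ and $\tau_B$, which is what lets every constant below be written in terms of $J_{SA}(K)$.

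The reduction step is immediate: by the triangle inequality $\norm{\mathsf{R}(K,K_{SA}^\star)}_F\le \big(\max_i\norm{E_K^i}_F\big)\big(\max_i\norm{\Sigma_{K_{SA}^\star}^i-\Sigma_K^i}\big)$, and because $\Sigma_K^i\succeq\Sigma_w=I$ and $\nabla_K J(K,\theta_i)=E_K^i\Sigma_K^i$, also $\norm{E_K^i}_F\le\norm{\nabla_K J(K,\theta_i)}_F$. For the covariance-perturbation factor, writing $L_K^i\triangleq A(\theta_i)+B(\theta_i)K$, I would use the identity
\[
\Sigma_{K_{SA}^\star}^i-\Sigma_K^i=\dlyap\!\Big(L_K^i,\ L_{K_{SA}^\star}^i\Sigma_{K_{SA}^\star}^i (L_{K_{SA}^\star}^i)^\top-L_K^i\Sigma_{K_{SA}^\star}^i (L_K^i)^\top\Big)
\]
together with $\norm{L_{K_{SA}^\star}^i-L_K^i}\le\tau_B\norm{K-K_{SA}^\star}_F$ and the standard estimate $\norm{\dlyap(L_K^i,D)}\le\trace(\Sigma_K^i)\norm{D}$, which yields $\max_i\norm{\Sigma_{K_{SA}^\star}^i-\Sigma_K^i}\le \mathsf{poly}(J_{SA}(K),\tau_B)\,\norm{K-K_{SA}^\star}_F$, with the polynomial of degree $5/2$ in $J_{SA}(K)$ and degree $1$ in $\tau_B$.

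The main obstacle is the quadratic-growth bound $\norm{K-K_{SA}^\star}_F^2\le 2\big(J_{SA}(K)-J_{SA}(K_{SA}^\star)\big)$ on $\calK_{\zeta_0}$. I would obtain it from the cost-difference lemma of \citet{fazel2018global}, applied per system and averaged: with $\Delta\triangleq K-K_{SA}^\star$,
\[
J_{SA}(K)-J_{SA}(K_{SA}^\star)=\frac1M\sum_i \trace\!\Big(\Sigma_K^i\big[\Delta^\top\big(R+B(\theta_i)^\top P_{K_{SA}^\star}^i B(\theta_i)\big)\Delta+2\Delta^\top E_{K_{SA}^\star}^i\big]\Big).
\]
The quadratic term is at least $\min_i\lambda_{\min}(\Sigma_K^i)\lambda_{\min}(R+B(\theta_i)^\top P_{K_{SA}^\star}^i B(\theta_i))\,\norm\Delta_F^2\ge\norm\Delta_F^2$ since $\Sigma_K^i\succeq I$ and $R+B^\top P B\succeq I$. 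Unlike in the single-system case the linear term does not vanish, because $K_{SA}^\star$ is not individually optimal; but optimality of $K_{SA}^\star$ gives $\frac1M\sum_i E_{K_{SA}^\star}^i\Sigma_{K_{SA}^\star}^i=\nabla_K J_{SA}(K_{SA}^\star)=0$, so the linear term equals $\frac2M\sum_i\trace\big(\Delta^\top E_{K_{SA}^\star}^i(\Sigma_K^i-\Sigma_{K_{SA}^\star}^i)\big)$. Bounding $\norm{E_{K_{SA}^\star}^i}\le\norm{\nabla_K J(K_{SA}^\star,\theta_i)}\le \mathsf{poly}(J_{SA}(K),\tau_B)\,\varepsilon_{\mathsf{HET}}$ (using $\nabla_K J_{SA}(K_{SA}^\star)=0$ and the Lipschitz dependence of $\nabla_K J(K,\cdot)$ on $[A\ B]$) and reusing the covariance-perturbation estimate, this term is at least $-\tfrac12\norm\Delta_F^2$ once $\varepsilon_{\mathsf{HET}}$ is as small as Assumption~\ref{asmp: heterogeneity} requires; combining gives the claim. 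This per-system identity plus the heterogeneity control of the linear term is where I expect the real work to lie.

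Finally I would bound the gradient-magnitude factor and assemble. Write $\nabla_K J(K,\theta_i)=\nabla_K J_{SA}(K)+\frac1M\sum_j\big(\nabla_K J(K,\theta_i)-\nabla_K J(K,\theta_j)\big)$; each summand is at most $L_\theta\,\varepsilon_{\mathsf{HET}}$, where $L_\theta$, the Lipschitz constant of $[A\ B]\mapsto\nabla_K J(K,\cdot)$ over $\calS$, is a polynomial in $J_{SA}(K)$ and $\tau_B$. Using $\zeta_0\le 8J_{SA}(K_{SA}^\star)$ together with Assumption~\ref{asmp: heterogeneity} to show $J_{SA}(K)\lesssim\trace P_{K(\theta)}^\theta$ --- bound $J_{SA}(K_{SA}^\star)$ by the average cost of a single-system optimal controller $K(\theta_i)$, then transfer that cost across systems via the small heterogeneity --- the bound on $\varepsilon_{\mathsf{HET}}$ in Assumption~\ref{asmp: heterogeneity}, whose exponent $11/2$ is chosen precisely so that $\trace(\cdot)^{11/2}$ dominates the product of $L_\theta$ with $\trace(\cdot)^{5/2}$, gives $L_\theta\,\varepsilon_{\mathsf{HET}}\le\frac{1}{512\,\tau_B J_{SA}(K)^{5/2}}$. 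Hence for $K\in S$, $\max_i\norm{E_K^i}_F\le\max_i\norm{\nabla_K J(K,\theta_i)}_F\le\frac{1}{256\,\tau_B J_{SA}(K)^{5/2}}+\frac{1}{512\,\tau_B J_{SA}(K)^{5/2}}\le\frac{1}{128\,\tau_B J_{SA}(K)^{5/2}}$. Multiplying the reduction, the covariance-perturbation estimate and the quadratic-growth bound,
\[
\norm{\mathsf{R}(K,K_{SA}^\star)}_F^2\le\Big(\tfrac{1}{128\,\tau_B J_{SA}(K)^{5/2}}\Big)^2\big(\mathsf{poly}(J_{SA}(K),\tau_B)\big)^2\cdot 2\big(J_{SA}(K)-J_{SA}(K_{SA}^\star)\big),
\]
and since the degree-$5/2$-in-$J_{SA}(K)$, degree-$1$-in-$\tau_B$ polynomial from the covariance step cancels the prefactor, the surviving constant is absolute and, with the stated numerics, at most $\tfrac14$ --- i.e.\ \eqref{eq: remainder bound} holds with $\mu=\tfrac14$ on $S$. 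Besides the quadratic-growth estimate, this exponent/constant bookkeeping is the other place where care is needed.
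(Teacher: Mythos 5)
Your proposal is correct and follows essentially the same route as the paper: bound $\norm{\mathsf{R}(K,K_{SA}^\star)}_F$ by a per-system gradient factor times a Lyapunov perturbation of $\Sigma_K^i$ that is linear in $\norm{K-K_{SA}^\star}_F$, establish the quadratic-growth inequality $\norm{K-K_{SA}^\star}_F^2\le 2\paren{J_{SA}(K)-J_{SA}(K_{SA}^\star)}$ from the cost-difference lemma of \citet{fazel2018global} with the non-vanishing cross term controlled via $\nabla_K J_{SA}(K_{SA}^\star)=0$ and the heterogeneity bound, and then combine on the set $S$. The paper merely packages the first step as the standalone \Cref{lem: remainder by controller gap} (using the controller-gap and gradient-perturbation lemmas to absorb the terms quadratic in $\norm{K-K_{SA}^\star}_F$), but the decomposition, the key inequalities, and the role of Assumption~\ref{asmp: heterogeneity} match your argument.
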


Before proving this fact, we state an intermediate result, the proof of which is deferred to the appendix.
\begin{lemma}
    \label{lem: remainder by controller gap}
    Suppose that Assumption~\ref{asmp: heterogeneity} holds and that $J_{SA}(K_0) \leq 8 \inf_{K} J_{SA}(K)$. Then 
    \begin{align*}
        &\norm{R(K, K_{SA}^\star)}_F \leq \frac{1}{4} \norm{K-K_{SA}^\star}_F \\ &\norm{R(K_{SA}^\star, K)}_F \leq \frac{1}{4} \norm{K-K_{SA}^\star}_F
    \end{align*}
    for all $K \in S$ with $S$ defined by \Cref{lem: remainder bound}.
\end{lemma}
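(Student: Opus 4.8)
The plan is to bound each remainder by factoring it into a ``gradient part'' and a ``covariance-perturbation part.'' The covariance part is $O(\norm{K-K_{SA}^\star}_F)$ with a polynomial-in-cost constant; the gradient part is small --- on $S$ because $\nabla_K J_{SA}(K)$ is forced small there, and at $K_{SA}^\star$ because $\nabla_K J_{SA}(K_{SA}^\star)=0$ --- with the residual governed by the heterogeneity $\varepsilon_{\mathsf{HET}}$. The numerical constants in the definition of $S$ and in Assumption~\ref{asmp: heterogeneity} are exactly what makes the product of these two pieces fall below $\tfrac14$.

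First I would use $\nabla_K J(K,\theta_i) = E_K^i\Sigma_K^i$ and $\Sigma_K^i \succeq \Sigma_w = I$ (so $\norm{(\Sigma_K^i)^{-1}}\le 1$) to write
\[
  \mathsf{R}(K,K_{SA}^\star) = \frac1M\sum_{i=1}^M \nabla_K J(K,\theta_i)\,(\Sigma_K^i)^{-1}\bigl(\Sigma_{K_{SA}^\star}^i - \Sigma_K^i\bigr),
\]
hence $\norm{\mathsf{R}(K,K_{SA}^\star)}_F \le \max_i\norm{\nabla_K J(K,\theta_i)}_F\cdot\max_i\norm{\Sigma_{K_{SA}^\star}^i - \Sigma_K^i}_F$, and likewise $\norm{\mathsf{R}(K_{SA}^\star,K)}_F \le \max_i\norm{\nabla_K J(K_{SA}^\star,\theta_i)}_F\cdot\max_i\norm{\Sigma_K^i - \Sigma_{K_{SA}^\star}^i}_F$. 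Since $J_{SA}(K_0)\le 8\inf_K J_{SA}(K) = 8J_{SA}(K_{SA}^\star)$, \Cref{lem: scenario boundedness} yields $(8,2)$-boundedness, so on $\calK_{\zeta_0}$ every individual cost satisfies $J(K,\theta_i)\le 2J_{SA}(K)\le 16 J_{SA}(K_{SA}^\star)$ and $J(K_{SA}^\star,\theta_i)\le 2J_{SA}(K_{SA}^\star)$; standard LQR cost bounds then control $\norm{\Sigma_K^i}$, $\norm{\Sigma_{K_{SA}^\star}^i}$, $\norm{P_K^i}$, $\norm{K}$ and $\norm{E_K^i}$ by polynomials in $J_{SA}(K_{SA}^\star)$ and $\tau_B$. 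Expanding $\Sigma_{K'}^i - \Sigma_K^i$ as a geometric series in the closed-loop matrix $A(\theta_i)+B(\theta_i)K'$ and using $\norm{A(\theta_i)+B(\theta_i)K}^2 \le \norm{\Sigma_K^i}$ gives a Lyapunov-perturbation bound $\norm{\Sigma_{K_{SA}^\star}^i - \Sigma_K^i}_F \le C_\Sigma\norm{K-K_{SA}^\star}_F$ with $C_\Sigma$ polynomial in $J_{SA}(K_{SA}^\star),\tau_B$ (treating $\norm{K-K_{SA}^\star}_F\le 1$ and $>1$ separately to absorb a quadratic term, so no segment-containment hypothesis is needed); the same argument applied to the parametric perturbation $A(\theta_i)+B(\theta_i)K \to A(\theta_j)+B(\theta_j)K$ --- whose norm is at most $\varepsilon_{\mathsf{HET}}(1+\norm{K})$ --- propagated through the Riccati/Lyapunov maps defining $P^i_K, \Sigma^i_K, E^i_K$ gives $\norm{\nabla_K J(K,\theta_i) - \nabla_K J(K,\theta_j)}_F \le C_\theta\varepsilon_{\mathsf{HET}}$ for all $i,j$ and every controller in $\calK_{\zeta_0}$, with $C_\theta$ polynomial in $J_{SA}(K_{SA}^\star),\tau_B$.

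To make the gradient factors small, I would write $\nabla_K J(K,\theta_i) = \nabla_K J_{SA}(K) + \tfrac1M\sum_j\bigl(\nabla_K J(K,\theta_i) - \nabla_K J(K,\theta_j)\bigr)$; on $S$ this gives $\max_i\norm{\nabla_K J(K,\theta_i)}_F \le \norm{\nabla_K J_{SA}(K)}_F + C_\theta\varepsilon_{\mathsf{HET}} \le \tfrac{1}{256\,\tau_B\,J_{SA}(K)^{5/2}} + C_\theta\varepsilon_{\mathsf{HET}}$, and since $\nabla_K J_{SA}(K_{SA}^\star) = 0$ the same identity gives $\max_i\norm{\nabla_K J(K_{SA}^\star,\theta_i)}_F \le C_\theta\varepsilon_{\mathsf{HET}}$. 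Multiplying by $C_\Sigma\norm{K-K_{SA}^\star}_F$, both claims reduce to $\tfrac{C_\Sigma}{256\,\tau_B\,J_{SA}(K)^{5/2}}\le\tfrac18$ and $C_\Sigma C_\theta\varepsilon_{\mathsf{HET}}\le\tfrac18$. The first holds because $J_{SA}(K)\ge J_{SA}(K_{SA}^\star)\ge\dx\ge 1$ and the degree of $C_\Sigma$ in $J_{SA}(K_{SA}^\star)$ is no larger than $5/2$; the second holds by Assumption~\ref{asmp: heterogeneity} once one bounds $J_{SA}(K_{SA}^\star)\le 2\trace(P_{K(\theta)}^\theta)$ for any $\theta\in\calS$ --- which follows from the heterogeneity bound since $K(\theta)$ stabilizes and nearly optimally controls every sampled system --- so that $\varepsilon_{\mathsf{HET}}^{-1} \ge 5\cdot10^5\,\tau_B\,\trace(P_{K(\theta)}^\theta)^{11/2}$ dominates $8\,C_\Sigma C_\theta$.

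The main obstacle I anticipate is the perturbation bookkeeping: deriving explicit polynomial bounds for $C_\Sigma$ (perturbation in $K$) and $C_\theta$ (perturbation in the system parameters), tracking their degrees in $J_{SA}(K_{SA}^\star)$ and $\tau_B$, and checking that against the constants $\tfrac{1}{256}$, $5\cdot10^5$ and the exponents $\tfrac52$, $\tfrac{11}{2}$ the two products genuinely land below $\tfrac18$. A secondary difficulty is making the comparison $J_{SA}(K_{SA}^\star)\lesssim\trace(P_{K(\theta)}^\theta)$ rigorous without circularity (it needs only robustness of stabilization and Lipschitzness of the LQR cost in $\theta$, both available once $\varepsilon_{\mathsf{HET}}$ is known to be small) and handling the covariance-perturbation bound uniformly in $\norm{K-K_{SA}^\star}_F$ rather than only for small gaps.
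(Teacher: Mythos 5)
Your plan follows the paper's proof essentially step for step: the same factorization of $\mathsf{R}$ into per-system gradients times the covariance perturbation $\Sigma^i_{K_{SA}^\star}-\Sigma^i_K$, the same use of scenario boundedness to reduce every individual cost to $J_{SA}(K_{SA}^\star)$, the same Lyapunov-perturbation bound making the covariance factor $O(\norm{K-K_{SA}^\star}_F)$, and the same trick of writing $\nabla_K J(K,\theta_i)$ as $\nabla_K J_{SA}(K)$ plus pairwise gradient differences controlled by $\varepsilon_{\mathsf{HET}}$ (the paper's version of this bound is self-referential in $\norm{\nabla_K J(K,\theta_i)}$ and needs a small grouping step, but your direct polynomial bound on the difference is also attainable). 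The symmetric claim for $\mathsf{R}(K_{SA}^\star,K)$ is handled the same way in both arguments.

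The one concrete gap is your treatment of the quadratic term in the covariance perturbation. The bound has the form $\norm{\Sigma^i_{K_{SA}^\star}-\Sigma^i_K}\lesssim \norm{\Sigma^i_{K_{SA}^\star}}\bigl(2\tau_B\sqrt{J}\,\norm{K-K_{SA}^\star}+\tau_B^2\norm{K-K_{SA}^\star}^2\bigr)$, so absorbing the quadratic piece via an a priori bound such as $\norm{K-K_{SA}^\star}\le\norm{K}+\norm{K_{SA}^\star}\lesssim\sqrt{J_{SA}(K_{SA}^\star)}$ (or via a case split at $\norm{K-K_{SA}^\star}=1$) leaves a coefficient proportional to $\tau_B^2$ in your $C_\Sigma$. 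The definition of $S$ and Assumption~\ref{asmp: heterogeneity} each supply only a single power of $\tau_B^{-1}$, so your final check $C_\Sigma/(256\,\tau_B J_{SA}(K)^{5/2})\le 1/8$ retains an uncancelled $\tau_B$ and fails when $\tau_B$ is large relative to $J_{SA}(K_{SA}^\star)$. The paper closes this by invoking its controller-gap lemma (Lemma 12 of Fazel et al.): on $S$ one has $\norm{K-K_{SA}^\star}_F\le\sqrt{2 J_{SA}(K)}\cdot\max_i\norm{\nabla_K J(K,\theta_i)}_F$, and since the gradient bound defining $S$ scales as $\tau_B^{-1}J_{SA}(K)^{-5/2}$, this makes $\tau_B\norm{K-K_{SA}^\star}\le 1$, so the quadratic term is genuinely dominated by the linear one with no extra power of $\tau_B$. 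You need this (or an equivalent gradient-domination bound on $\norm{K-K_{SA}^\star}$ over $S$) as an explicit ingredient; with it, the rest of your bookkeeping goes through as in the paper.
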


We now prove \Cref{lem: remainder bound}.
\begin{proof}
    By \Cref{lem: remainder by controller gap}, it holds that $\norm{R(K, K_{SA}^\star)}_F \leq \frac{1}{4} \norm{K-K_{SA}^\star}_F$ and $\norm{R( K_{SA}^\star, K)}_F \leq \frac{1}{4} \norm{K-K_{SA}^\star}_F$. To bound the result by $J_{SA}(K) - J_{SA}(K_{SA}^\star)$, note that by~\citep[Lemma 12]{fazel2018global},
\begin{align*}
    &J_{SA}(K) - J_{SA}(K_{SA}^\star) \\
    &\geq \frac{2}{M}\trace\paren{(K-K_{SA}^\star)^T\paren{\sum_{i=1}^ME_{K_{SA}^\star}^i\Sigma_{K}^i}} \!+\! \norm{K-K_{SA}^\star}_F^2 \\
    &\geq \norm{K-K_{SA}^\star}_F^2 - 2 \norm{K-K_{SA}^\star}_F \norm{R(K_{SA}^\star, K)}_F. 
\end{align*}
 It follows that $J_{SA}(K) - J_{SA}(K_{SA}^\star) \geq \frac{1}{2} \norm{K-K_{SA}^\star}_F^2$, which can in turn be substituted to show $\norm{R(K, K_{SA}^\star)}_F \leq \frac{1}{4} \norm{K-K^\star}_F \leq \frac{1}{2} \sqrt{J_{SA}(K) - J_{SA}(K_{SA}^\star)}$. Squaring completes the proof. 
\end{proof}

Ultimately, synthesizing these results provides the following convergence bound.
\begin{theorem}[Policy Gradient for Sample Average LQR]
    \label{thm: sample average convergence}
    Suppose Assumption~\ref{asmp: heterogeneity} holds. Let $K^\star_{SA}$ be an optimal policy for the sample average objective \eqref{eq: sample average}, and let $K_0$ be a controller which simultaneously stabilizes $\theta_1,\dots, \theta_M$ achieving average cost $J_{SA}(K_0) \leq 8 J_{SA}(K^\star_{SA})$. Let $L$ be the smoothness constant of $J_{SA}(K)$ over $K_{\zeta_0}$. Consider running algorithm \eqref{eq: grad update} with stepsize $\alpha = 1/L$. Then after 
    \[
        N \geq 2 L \zeta_0 \max\curly{1/\varepsilon, 64 \tau_B \zeta_0} 
    \]
    iterations, let $\tilde K = \argmin_{K \in \curly{K_0, \dots, K_N}} \norm{\nabla_K J_{SA}(K)}_F$. Then it holds that $J_{SA}(\tilde K) - J_{SA}(K^\star_{SA}) \leq \varepsilon$.  
\end{theorem}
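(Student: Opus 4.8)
The plan is to chain together the three ingredients developed above---$(B,\slack)$-boundedness, convergence to a fixed point, and the approximate-to-exact gradient domination upgrade---and then bound the number of iterations needed. First I would invoke \Cref{lem: scenario boundedness}: under Assumption~\ref{asmp: heterogeneity}, the collection $\theta_1,\dots,\theta_M$ is $(8,2)$-bounded. Since the hypothesis gives $J_{SA}(K_0) \leq 8 J_{SA}(K_{SA}^\star)$, we have $\zeta_0 = J_{SA}(K_0) \leq B\, J_{SA}(K_{SA}^\star)$, so \Cref{lem: coercivity} applies and $J_{SA}$ is $L$-smooth, coercive, and twice differentiable on $\mathcal{K}_{\zeta_0}$. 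With stepsize $\alpha = 1/L < 2/L$, \Cref{lem: fixed point for gradient} then guarantees that every iterate $K_n$ stays in $\mathcal{K}_{\zeta_0}$, that $J_{SA}(K_{n+1}) \leq J_{SA}(K_n)$, and that
\begin{align*}
    \min_{0 \leq l \leq k} \norm{\nabla_K J_{SA}(K_l)}_F^2 \leq \frac{\zeta_0}{C(k+1)}, \qquad C = \alpha - \tfrac{L\alpha^2}{2} = \tfrac{1}{2L}.
\end{align*}

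Next I would handle the passage from the approximate gradient domination of \eqref{eq: grad dom plus remainder} to a genuine gradient domination inequality. The subtlety is that the remainder bound \eqref{eq: remainder bound} (and hence the clean inequality $J_{SA}(K) - J_{SA}(K_{SA}^\star) \leq \frac{2}{1-2\mu}\norm{\nabla_K J_{SA}(K)}_F^2$ with $\mu = 1/4$) is only known to hold on the set $S$ from \Cref{lem: remainder bound}, namely where $\norm{\nabla_K J_{SA}(K)}_F \leq \frac{1}{256\tau_B J_{SA}(K)^{5/2}}$. So the argument splits: I need $N$ large enough that the best iterate $\tilde K$ actually lands in $S$. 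Using the fixed-point rate and $J_{SA}(K) \leq \zeta_0$ on $\mathcal{K}_{\zeta_0}$, it suffices to require $\frac{\zeta_0}{C(N+1)} \leq \frac{1}{(256\tau_B \zeta_0^{5/2})^2} \cdot \text{(something)}$; being a bit careful, the condition $N \geq 2L\zeta_0 \cdot 64\tau_B\zeta_0$ should be exactly what drives $\min_l \norm{\nabla_K J_{SA}(K_l)}_F^2$ below the threshold $\frac{1}{(256\tau_B\zeta_0^{5/2})^2}$ (matching constants $C = 1/(2L)$ and tracking the exponents on $\zeta_0$). Once $\tilde K \in S$, the exact gradient domination bound gives $J_{SA}(\tilde K) - J_{SA}(K_{SA}^\star) \leq 4\norm{\nabla_K J_{SA}(\tilde K)}_F^2 \leq \frac{4\zeta_0}{C(N+1)} = \frac{8L\zeta_0}{N+1}$, and imposing additionally $N \geq 2L\zeta_0/\varepsilon$ forces this to be at most $\varepsilon$. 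Taking the max of the two lower bounds on $N$ yields the stated iteration count.

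I expect the main obstacle to be the bookkeeping that reconciles the two regimes cleanly: I must verify that the second term $64\tau_B\zeta_0$ in the $\max$ really is the right threshold to guarantee membership in $S$, which requires matching the polynomial powers of $\zeta_0$ and $\tau_B$ between the $1/(256\tau_B J_{SA}(K)^{5/2})$ gradient-norm cutoff (squared, so $\tau_B^2 \zeta_0^5$ in the denominator) and the $\zeta_0/(C(N+1))$ rate, using $J_{SA}(\tilde K) \leq \zeta_0$; there may be slack in the constant $256$ versus $64$ that needs a short computation. A secondary point to be careful about is that $\tilde K$ is chosen to minimize the gradient norm over $\{K_0,\dots,K_N\}$, so $\norm{\nabla_K J_{SA}(\tilde K)}_F^2$ equals the running minimum to which the fixed-point rate applies---this is immediate but must be stated. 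Everything else (the role of Assumption~\ref{asmp: heterogeneity} feeding \Cref{lem: scenario boundedness} and \Cref{lem: remainder bound}, the coercivity keeping iterates in the sublevel set) is a direct application of the lemmas already proved.
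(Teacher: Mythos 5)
Your synthesis follows exactly the route the paper intends (the paper offers no explicit proof beyond ``synthesizing these results''): \Cref{lem: scenario boundedness} gives $(8,2)$-boundedness, \Cref{lem: coercivity} and \Cref{lem: fixed point for gradient} give monotone iterates staying in $\mathcal{K}_{\zeta_0}$ with $\min_{0\le l\le k}\norm{\nabla_K J_{SA}(K_l)}_F^2 \le 2L\zeta_0/(k+1)$ for $\alpha = 1/L$ (so $C = 1/(2L)$), and \Cref{lem: remainder bound} with $\mu = 1/4$ upgrades the approximate gradient domination to $J_{SA}(K) - J_{SA}(K_{SA}^\star) \le 4\norm{\nabla_K J_{SA}(K)}_F^2$ on $S$. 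Your observations that $\tilde K$ realizes the running minimum of the gradient norm and that $J_{SA}(\tilde K)\le\zeta_0$ justifies using the worst-case threshold defining $S$ are both correct and necessary.

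The step you defer, however, does not close, and the mismatch is not merely ``slack in the constant $256$ versus $64$.'' Membership of $\tilde K$ in $S$ requires $\norm{\nabla_K J_{SA}(\tilde K)}_F^2 \le \left(256\,\tau_B\,\zeta_0^{5/2}\right)^{-2} = 2^{-16}\tau_B^{-2}\zeta_0^{-5}$, and combining this with the rate $2L\zeta_0/(N+1)$ forces $N+1 \ge 2^{17}L\,\tau_B^2\,\zeta_0^{6}$; the second branch of the max would therefore have to be of order $2^{16}\tau_B^2\zeta_0^{5}$ rather than $64\,\tau_B\,\zeta_0$ --- the powers of both $\zeta_0$ and $\tau_B$ disagree with the theorem statement, not just the numerical prefactor. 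The first branch has the same problem in milder form: once $\tilde K \in S$ your chain gives $J_{SA}(\tilde K)-J_{SA}(K^\star_{SA}) \le 8L\zeta_0/(N+1)$, so $N \ge 2L\zeta_0/\varepsilon$ only yields a bound of $4\varepsilon$. As written, your argument establishes the conclusion only with the larger iteration count $N \ge 2L\zeta_0\max\{4/\varepsilon,\; 2^{16}\tau_B^2\zeta_0^{5}\}$; to recover the stated count you would need either a sharper version of the set $S$ in \Cref{lem: remainder bound} or to conclude that the theorem's constants are in error. You should carry out this bookkeeping explicitly rather than asserting it ``should be exactly what drives'' the gradient below threshold.
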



    

The above result demonstrates that under a sufficiently small randomization distribution, the approximate gradient domination condition suffices to ensure convergence to the global optimal solution for the sample average LQR objective. The heterogeneity requirements of Assumption~\ref{asmp: heterogeneity} are stronger than desired, as they exclude particular cases in which it is possible to achieve a controller that simultaneously stabilizes multiple systems, but for which the systems are not close in the sense of Assumption~\ref{asmp: heterogeneity}. We leave relaxing this condition or determining if it is fundamental to future work. The condition $J_{SA}(K_0)\le 8J_{SA}(K_{SA}^\star)$ means $K_0$ is not too far in cost from the optimal – this can be guaranteed by the discounting scheme in the sequel.

\section{Stabilization}

\begin{figure*}
    \centering
    \includegraphics[width=0.32\linewidth]{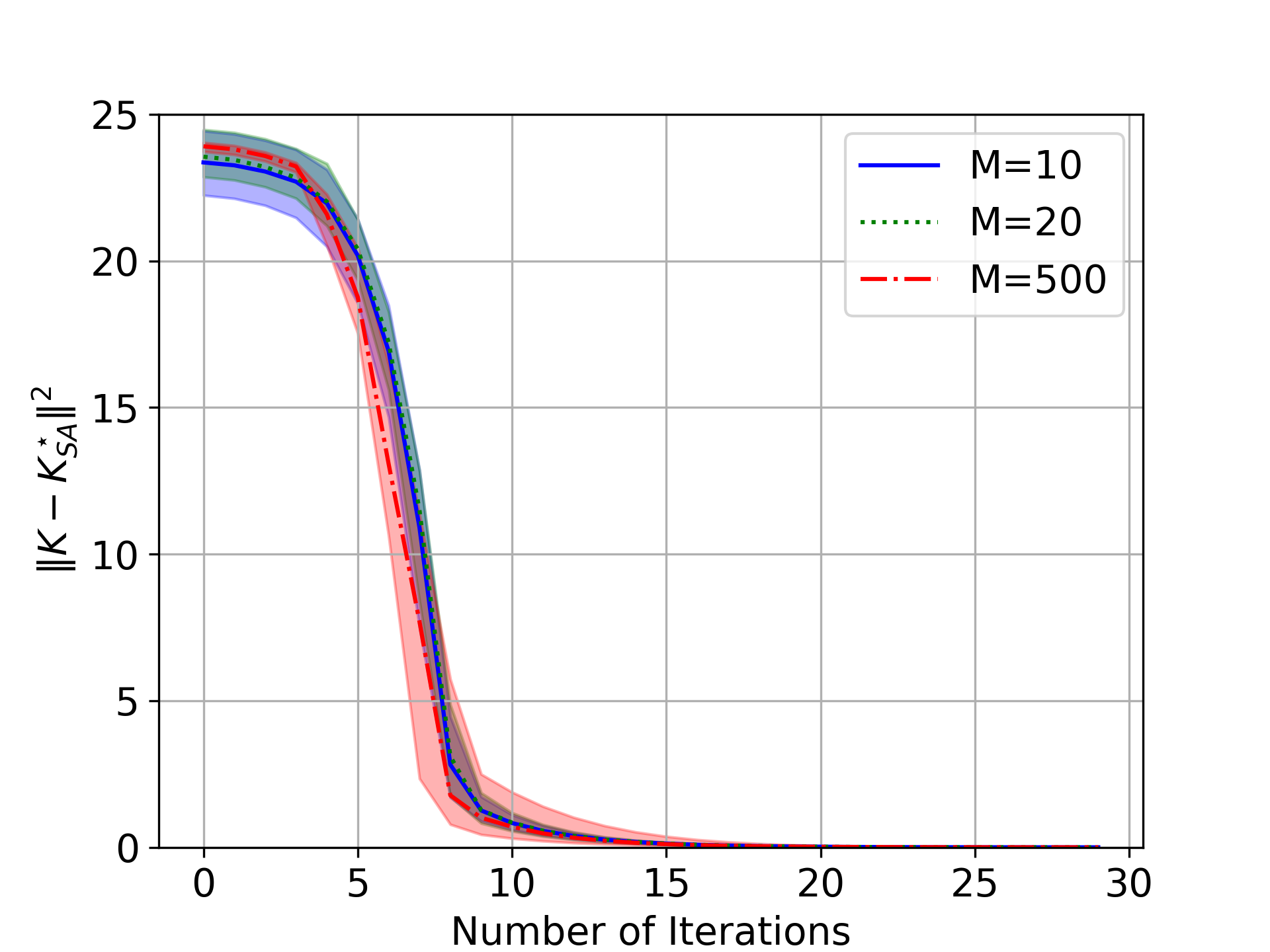}
    \includegraphics[width=0.32\linewidth]{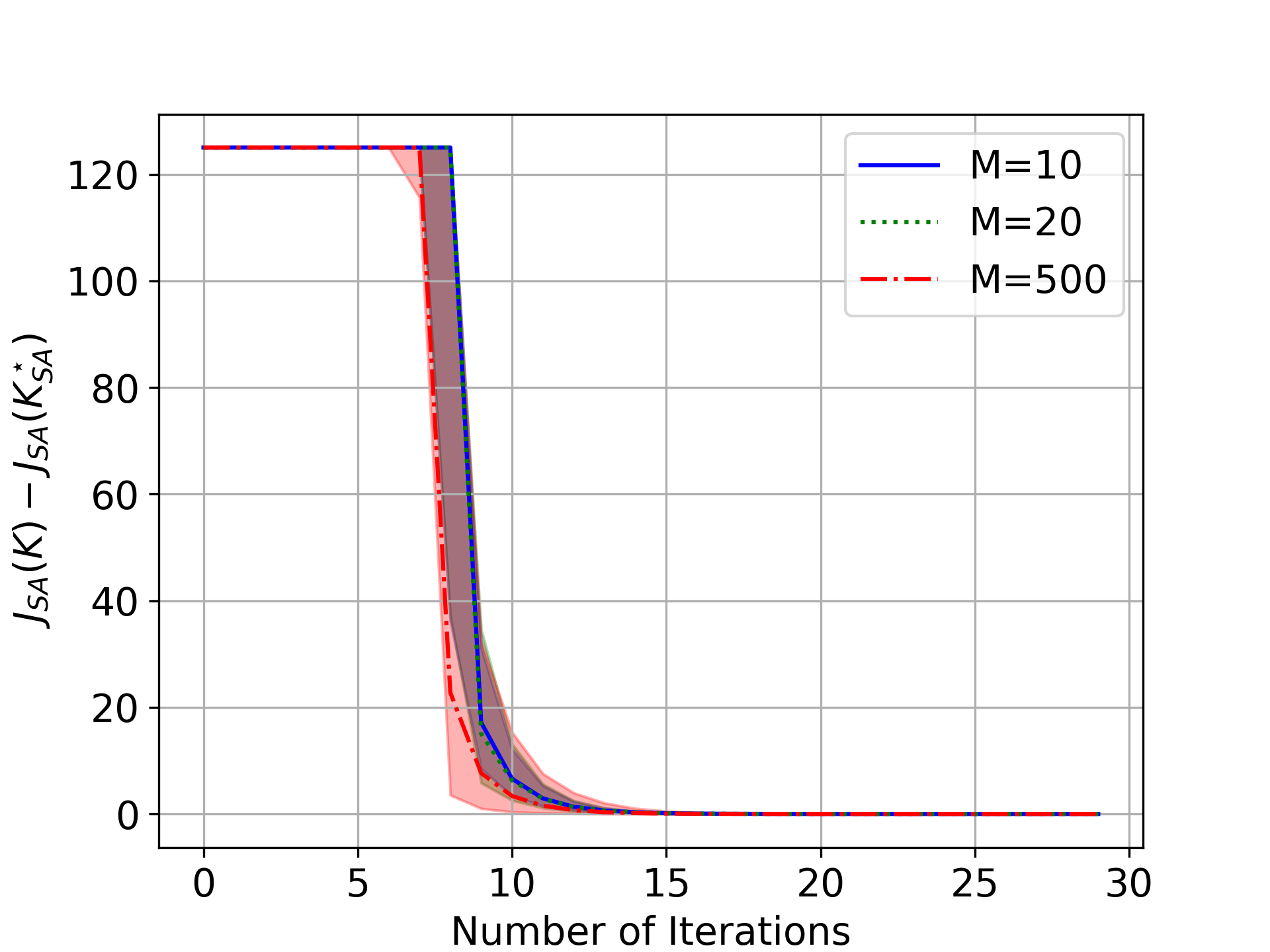}
    \includegraphics[width=0.32\linewidth]{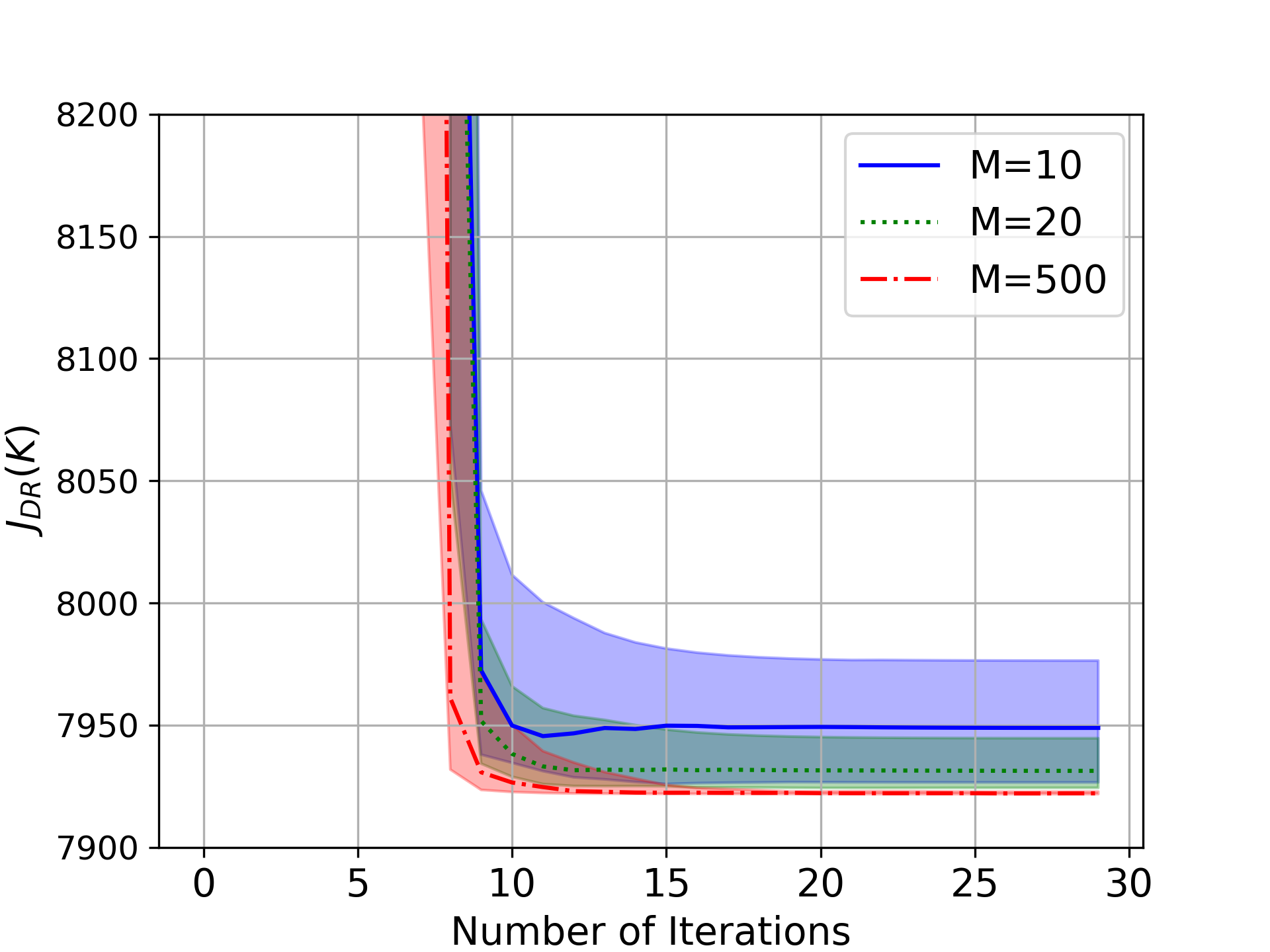}
    \caption{Convergence of policy gradient with domain randomization. 
    A controller initialized at $K=0$ converges to the optimal controller via progressive discounting and gradient descent (left), resulting in the convergence of the excess sample average cost (center). With more samples, the minimizer for the sample average cost and the domain randomization objective start to match (right). }
    \vspace{-16pt}
    \label{fig: verification}
\end{figure*}

The previous section analyzed the convergence of the policy gradient method on a collection of sampled instances starting from an initial stabilizing controller. Here, we remove the requirement that we have access to an initial controller which simultaneously stabilizes all systems and incurs a small cost.  We do so by incorporating a discounting factor annealing scheme. In particular, consider scaling the matrices $A(\theta_i)$ and $B(\theta_i)$ by a factor $\sqrt{\gamma}$. For any controller $K$, if $\gamma < \min_{i \in \curly{1,\dots,M}} \rho(A(\theta_i)+B(\theta_i)K)^{-2}$, then $K$ stabilizes $\sqrt{\gamma} A(\theta_i)$ and $\sqrt{\gamma} B(\theta_i)$ for $i=1,\dots,M$.\footnote{Scaling $A$ and $B$ by $\sqrt{\gamma} < \rho(A + BK)^{-1}$ is equivalent to introducing a discount factor in the LQR cost which ensures the cost is finite even for a controller which does not stabilize $A$, $B$.} Therefore, if we start from a small discount factor, run several iterations of \eqref{eq: grad update}, and then increase the discount factor, we will be able to ensure convergence to a controller that simultaneously stabilizes all systems. \Cref{alg: progressive discounting} extends a similar scheme proposed in~\citet{perdomo2021stabilizing} for the single system setting to the multi-system setting we consider. We let $J_{SA}(\cdot ~|~ \gamma)$ denote the sample average cost~\eqref{eq: sample average} evaluated on the discounted systems $(\sqrt{\gamma} A(\theta_i), \sqrt{\gamma} B(\theta_i))$. 

\begin{algorithm}
    \caption{Discount Annealing} 
    \label{alg: progressive discounting}
    \begin{algorithmic}[1]
    \State \textbf{Input:} collection of systems, $\theta_1, \dots, \theta_M$, optimization tolerance $\varepsilon$
    \State $K \gets 0$
    \State Find the largest $\gamma \in (0, 1]$ satisfying 
    \begin{align}
        \label{eq: initial gamma}
        J_{SA}(K \vert \gamma) \leq 8 \dx. 
    \end{align} 
    \While {$\gamma < 1$}
    \State Set $K \gets K'$, where $K'$ is  such that
    \begin{align}   
        \label{eq: controller update}
        J_{SA}(K' ~|~ \gamma) - \inf_{\tilde K} J_{SA}(\tilde K ~|~ \gamma) \leq \dx .
    \end{align}
    \State Find a discount factor $\gamma'\in[\gamma, 1]$ such that
    \begin{align}
        2.5J_{SA}(K ~|~ \gamma) \leq J_{SA}(K ~|~ \gamma') \leq 4J_{SA}(K ~|~ \gamma) \label{eq: gamma search}
    \end{align}
    \State $\gamma \gets \gamma'$
    \EndWhile
    \State Run \eqref{eq: grad update} starting from $K$ to find $K'$ such that \[J_{SA}(K') - \inf_{\tilde K} J_{SA}(\tilde K) \leq  \varepsilon.\]
    \State \textbf{Return} $K'$
    \end{algorithmic}
\end{algorithm}

Subproblems \eqref{eq: initial gamma} and \eqref{eq: gamma search} can be solved via bisection, while subproblem \eqref{eq: controller update} can be solved with algorithm \eqref{eq: grad update} starting from $K_0 \gets K$ by choosing an appropriate stepsize, and running sufficiently many iterations, as in \Cref{thm: sample average convergence}.
Extending the analysis of \citep[Theorem 1]{perdomo2021stabilizing} to the setting of multiple systems leads to the following result. 
\begin{theorem}[Stabilization by Discount Annealing]
    \label{thm: stabilization by progressive discounting}
    Suppose that Assumption~\ref{asmp: heterogeneity} holds. 
    Let $\gamma_0$ be the largest $\gamma$ satisfying \eqref{eq: initial gamma}. Then \Cref{alg: progressive discounting} is guaranteed to cease after at most $1024 J_{SA}(K^\star)^4\log(1/\gamma_0)$ iterations. At the start of any iteration, it holds that $J_{SA}(K\vert \gamma) \leq 8 \inf_{\tilde K} J_{SA}(K \vert \gamma)$. The final controller $K'$ jointly stabilizes all $M$ systems, and satisfies $J_{SA}(K') - J_{SA}(K^\star_{SA}) \leq \varepsilon$. 
\end{theorem}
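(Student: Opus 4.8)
\emph{Proof sketch.} The plan is to follow the single-system discount-annealing argument of \citet{perdomo2021stabilizing}, replacing each per-system quantity by its sample average, and to invoke \Cref{thm: sample average convergence} to discharge the controller-update subproblem~\eqref{eq: controller update}. Everything rests on three facts that pass over termwise from the single-system case. (i) For fixed $K$, the map $\gamma \mapsto J_{SA}(K\mid\gamma)$ is continuous and strictly increasing on the interval of $\gamma$ for which $K$ stabilizes every discounted pair $(\sqrt{\gamma}A(\theta_i),\sqrt{\gamma}B(\theta_i))$, and diverges as $\gamma$ approaches the joint stability radius. (ii) $\gamma \mapsto \inf_{\tilde K} J_{SA}(\tilde K\mid\gamma)$ is nondecreasing, is at least $\dx$ (since $Q\succeq I$ and $\Sigma_w=I$ force every steady-state covariance to dominate $I$), and is at most $J_{SA}(K_{SA}^\star\mid 1)=J_{SA}(K_{SA}^\star)$. (iii) \Cref{asmp: heterogeneity} is inherited by the discounted systems: scaling $\bmat{A(\theta)&B(\theta)}$ by $\sqrt{\gamma}\le 1$ scales the heterogeneity radius by $\sqrt{\gamma}\le 1$, cannot increase $\tau_B$, and cannot increase $\trace(P_{K(\theta)}^\theta)$ (the discounted optimal cost of each system is bounded by its undiscounted optimal cost). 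Consequently, whenever the cost-ratio condition of Step 1 holds, the hypotheses of \Cref{thm: sample average convergence} are met for the $\gamma$-discounted problem, so~\eqref{eq: controller update} is indeed solvable by running~\eqref{eq: grad update} from the current iterate.

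\textbf{Step 1 (loop invariant).} I will show by induction on the while-loop iterations that $J_{SA}(K\mid\gamma)\le 8\,\inf_{\tilde K} J_{SA}(\tilde K\mid\gamma)$ at the start of each iteration. At initialization $K=0$, $\gamma=\gamma_0$, and~\eqref{eq: initial gamma} together with fact (ii) gives $J_{SA}(0\mid\gamma_0)\le 8\dx\le 8\,\inf_{\tilde K} J_{SA}(\tilde K\mid\gamma_0)$; finiteness of this cost also certifies Assumption~\ref{asmp: simultaneous stabilization} for the discounted systems. For the inductive step, at $(K,\gamma)$ the invariant is exactly the cost-ratio hypothesis of \Cref{thm: sample average convergence}, so~\eqref{eq: controller update} returns $K'$ with $J_{SA}(K'\mid\gamma)\le \inf_{\tilde K} J_{SA}(\tilde K\mid\gamma)+\dx\le 2\,\inf_{\tilde K} J_{SA}(\tilde K\mid\gamma)$, again by fact (ii). For the $\gamma$-search~\eqref{eq: gamma search}: by fact (i), $\gamma'\mapsto J_{SA}(K'\mid\gamma')$ is continuous and increasing on $[\gamma,1]$ and equals $J_{SA}(K'\mid\gamma)$ at $\gamma'=\gamma$. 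If $J_{SA}(K'\mid 1)\ge 2.5\,J_{SA}(K'\mid\gamma)$ (or the cost diverges before $\gamma'=1$), the intermediate value theorem supplies a $\gamma'\in[\gamma,1]$ with $J_{SA}(K'\mid\gamma')\in[2.5\,J_{SA}(K'\mid\gamma),\,4\,J_{SA}(K'\mid\gamma)]$, and then $J_{SA}(K'\mid\gamma')\le 4J_{SA}(K'\mid\gamma)\le 8\,\inf_{\tilde K} J_{SA}(\tilde K\mid\gamma)\le 8\,\inf_{\tilde K} J_{SA}(\tilde K\mid\gamma')$, closing the induction. Otherwise $J_{SA}(K')=J_{SA}(K'\mid 1)<2.5\,J_{SA}(K'\mid\gamma)\le 5\,J_{SA}(K_{SA}^\star)<\infty$, so $K'$ already jointly stabilizes all $M$ undiscounted systems with cost at most $8\inf_{\tilde K}J_{SA}(\tilde K)$, and the loop exits with $\gamma\gets 1$.

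\textbf{Step 2 (iteration count).} By Step 1 and fact (ii), $J_{SA}(K\mid\gamma)\le 8\,J_{SA}(K_{SA}^\star)$ at every iteration. The key quantitative ingredient, extending \citet[Theorem 1]{perdomo2021stabilizing} termwise, is a growth estimate $J_{SA}(K\mid\gamma')\le \mathsf{poly}(J_{SA}(K\mid\gamma))\,J_{SA}(K\mid\gamma)$ valid whenever $\gamma'/\gamma\le 1+1/\mathsf{poly}(J_{SA}(K\mid\gamma))$, obtained by differentiating the discounted Lyapunov identity for the (sample-averaged) steady-state covariance $\Sigma_\gamma$ in $\gamma$ and bounding $\|\Sigma_\gamma\|\le J_{SA}(K\mid\gamma)$. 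Taking the contrapositive, the lower edge $J_{SA}(K\mid\gamma')\ge 2.5\,J_{SA}(K\mid\gamma)$ of~\eqref{eq: gamma search} forces $\gamma'/\gamma\ge 1+1/\mathsf{poly}(J_{SA}(K\mid\gamma))\ge 1+1/\mathsf{poly}(8J_{SA}(K_{SA}^\star))$. Hence $\log(1/\gamma)$ decreases by a fixed positive amount per iteration, and summing this geometric increase of $\gamma$ from $\gamma_0$ up to $1$, with the polynomial dependence instantiated and the $\log 2$ bookkeeping carried out, yields termination within $1024\,J_{SA}(K^\star)^4\log(1/\gamma_0)$ iterations.

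\textbf{Step 3 (final guarantee and main difficulty).} On exit $\gamma=1$, and the invariant (extended to $\gamma=1$ above) gives $J_{SA}(K)=J_{SA}(K\mid 1)\le 8\,\inf_{\tilde K} J_{SA}(\tilde K)=8\,J_{SA}(K_{SA}^\star)<\infty$, so $K$ jointly stabilizes $\theta_1,\dots,\theta_M$ and meets the $J_{SA}(K)\le 8J_{SA}(K_{SA}^\star)$ hypothesis of \Cref{thm: sample average convergence}; running~\eqref{eq: grad update} from $K$ (the final line of \Cref{alg: progressive discounting}) then produces $K'$ with $J_{SA}(K')-J_{SA}(K_{SA}^\star)\le\varepsilon$. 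I expect the main obstacle to be the growth estimate of Step 2: pinning down the polynomial dependence of $J_{SA}(K\mid\gamma)$ on $\gamma$ through the discounted Lyapunov equation (including the implicit mixing-time factor) and converting it into a per-iteration multiplicative gain in $\gamma$ with constants matching $1024(\cdot)^4$. A secondary technical point is the careful verification of fact (iii), in particular the monotonicity of $\trace(P_{K(\theta)}^\theta)$ under the substitution $\bmat{A(\theta)&B(\theta)}\gets\sqrt{\gamma}\,\bmat{A(\theta)&B(\theta)}$, which is what allows \Cref{thm: sample average convergence} to be applied to each discounted subproblem.
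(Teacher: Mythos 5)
Your Step 1 (loop invariant) and Step 3 match the paper's proof essentially verbatim: base case from \eqref{eq: initial gamma} together with $\dx \le \inf_{\tilde K} J_{SA}(\tilde K \mid \gamma)$, inductive step from the factor-$2$ guarantee of \eqref{eq: controller update} composed with the factor-$4$ inflation allowed by \eqref{eq: gamma search} and monotonicity of the cost in $\gamma$, and the final $\varepsilon$-guarantee discharged by the last line of the algorithm. Your IVT argument for the existence of $\gamma'$ and the exit case is a reasonable addition the paper leaves implicit.

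The gap is in Step 2, which you correctly flag as the main obstacle but do not close. The paper does not re-derive a growth estimate by differentiating a Lyapunov identity for a ``sample-averaged covariance''; it applies \citet[Theorem~1]{perdomo2021stabilizing} \emph{per system} in the form: if $\tilde\gamma \le \min_i \left(1/(8\|P_{K,\gamma}^i\|^4)+1\right)^2\gamma$ then each $\trace(P_{K,\tilde\gamma}^i) \le 2\trace(P_{K,\gamma}^i)$, hence $J_{SA}(K\mid\tilde\gamma) \le 2 J_{SA}(K\mid\gamma)$, so the lower edge $2.5$ of \eqref{eq: gamma search} forces $\gamma'/\gamma \ge \min_i\left(1/(8\|P_{K,\gamma}^i\|^4)+1\right)^2$. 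The step your sketch is missing entirely is how the \emph{per-system} quantity $\|P_{K,\gamma}^i\|$ gets controlled by the \emph{sample-average} cost: the paper uses \Cref{lem: scenario boundedness} (i.e.\ $(8,2)$-boundedness under Assumption~\ref{asmp: heterogeneity}) together with the loop invariant to get $\|P_{K,\gamma}^i\| \le \trace(P_{K,\gamma}^i) \le 2 J_{SA}(K\mid\gamma) \le 4 J_{SA}(K_{SA}^\star\mid\gamma) \le 4 J_{SA}(K_{SA}^\star)$. This is the only genuinely multi-system ingredient in the iteration count, it is why Assumption~\ref{asmp: heterogeneity} appears in the theorem's hypotheses, and it is what produces the stated $1024\,J_{SA}(K^\star)^4$ constant; without it the worst-case bound on an individual $\|P^i_{K,\gamma}\|$ degrades by a factor of $M$. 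Bounding $\|\Sigma_\gamma\| \le J_{SA}(K\mid\gamma)$, as you propose, addresses the wrong matrix and does not substitute for this step.
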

Notably, the bound on the discounted cost of a controller $K$ at the start of every iteration suffices to satisfy the requisite condition on the cost of the $K_0$ in \Cref{thm: sample average convergence}. 

\begin{proof}
    We first show that at the start of every iteration $J_{SA}(K \vert \gamma) \leq 8 \inf_{\tilde K}  J_{SA}(\tilde K \vert \gamma)$. This holds for the first iteration by the fact that $\gamma_0$ satisfies the bound \eqref{eq: initial gamma}, and $\dx \leq \inf_{\tilde K} J_{SA}(\tilde K \vert \gamma)$. It holds for subsequent iterations by the fact that update \eqref{eq: controller update} achieves a controller within a factor of $2$ of the optimal cost, and the choice of $\gamma$ in \eqref{eq: gamma search} inflates the cost by at most a factor of $4$. This implies that $J_{SA}(K \vert \gamma) \leq 8 \inf_{\tilde K} J_{SA}(K\vert \gamma)$ by the fact that the cost is nondecreasing with $\gamma$. 

    To bound the number of iterations, let $P_{K,\gamma}^i$ denote the Ricatti equation solution corresponding to the discounted system  $(\sqrt(\gamma) A(\theta_i), \sqrt{\gamma} B(\theta_i))$. 
    By~\citet[Theorem 1]{perdomo2021stabilizing}, if $\tilde \gamma \leq \min_{i\in \curly{1,\dots, M}}\paren{1/(8 \norm{P_{K,\gamma}^i}^4)+1}^2 \gamma$, then $\trace(P_{K, \gamma}^i) \leq 2\trace(P_{K, \gamma}^i)$ for $i \in \curly{1,\dots, M}$. Therefore $J_{SA}(K \vert \tilde \gamma) \leq 2 J_{SA}(K \vert \gamma)$. Thus the value $\gamma'$ from \eqref{eq: gamma search} satisfies \[\frac{\gamma'}{\gamma} \! \geq \! \min_{i\in \curly{1,\dots, M}}\Bigg(\!\frac{1}{8 \norm{P_{K,\gamma}^i}^4}\!+\!1\!\Bigg)^2 \! \geq\! \paren{\!\frac{1}{2^{11} J_{SA}(K_{SA}^\star)^4}\! +\!1\!}^2. \]
    This follows from the bound $\norm{P_{K, \gamma}^i} \leq \trace(P_{K,\gamma}^i) \leq 2 J_{SA}(K \vert \gamma) \leq 4 J_{SA}(K^\star_{SA} \vert \gamma)$. Here,  the second inequality follows from \Cref{lem: scenario boundedness} paired with the fact that $J_{SA}(K \vert \gamma) \leq 8 \inf_{\tilde K} J_{SA}(\tilde K \vert \gamma)$. The last inequality follows from the fact that $K$ satisfies the error bound of \eqref{eq: controller update}, and $\dx \leq J_{SA}(K_{SA}^\star \vert \gamma)$. Then using the inequality $\log(1+x) \leq x$, it holds that $\gamma = 1$ after the given number of iterations. 

    The bound on the sample average cost of the final iterate follows immediately from the last step of the algorithm.
\end{proof}


\section{Sampling}

\label{s: sampling}

The previous sections provided guarantees ensuring that a gradient-based algorithm converges to the minimizer of objective \eqref{eq: sample average}. In this section, we bound the suboptimality incurred by optimizing the sample average cost~\eqref{eq: sample average} rather than the desired domain randomization objective~\eqref{eq: DR objective}. In particular, let $\tilde K$ be a the iterate of running the aforementioned gradient procedure. It holds that
\begin{align*}
    &J_{DR}(\tilde K) \!-\! J_{DR}(K_{DR}^\star) \!=\! J_{DR}(\tilde K) \!-\! J_{SA}(\tilde K) \\
    &+ J_{SA}(\tilde K) - J_{SA}(K_{DR}^\star) + J_{SA}(K_{DR}^\star) - J_{DR}(K_{DR}^\star). 
\end{align*}
The results of the previous section ensure that the term 
\begin{align*}
    J_{SA}(\tilde K) - J_{SA}(K_{DR}^\star) &\leq J_{SA}(\tilde K) - J_{SA}(K_{SA}^\star) \leq \varepsilon.
\end{align*}
Then to characterize the gap $J_{DR}(\tilde K) -J_{DR}(K^\star_{DR})$, it suffices to achieve upper bounds on $\abs{J_{DR}(\tilde K) \!-\! J_{SA}(\tilde K)}$ and $\abs{J_{DR}(K_{DR}^\star) - J_{SA}(K_{DR}^\star)}$. Note that if $J(K, \theta) \leq \calJ$, then by Hoeffding's inequality \citep{vershynin2009high}, it holds that with probability at least $1-\delta$, 
\begin{align*}
    \abs{J_{DR}(K) - J_{SA}(K)} \leq \sqrt{\frac{2 \calJ^2}{M} \log\frac{2}{\delta}}.
\end{align*} 
To establish the bound $\calJ$, consider $\tilde \theta, \theta \in \calS$. We show in \Cref{lem: uniform bound} that under Assumption~\ref{asmp: heterogeneity}, $J(K_{DR}^\star, \tilde \theta) \leq 6 J(K(\theta), \theta)$ and $J(\tilde K, \tilde \theta) \leq 48 J(K(\theta), \theta)$. This leads to the following characterization of the excess cost.

\begin{theorem}[Population Excess Cost Bound]
    \label{thm: population excess cost bound}
    Suppose Assumption~\ref{asmp: heterogeneity} holds.  Consider sampling $M$ systems from $\Theta$. Let $\tilde K$ be the iterate achieved by running \Cref{alg: progressive discounting}. Consider an arbitrary  $\theta \in \calS$. With probability at least $1-\delta$,
    \begin{align*}
        J_{DR}(\tilde K) - J_{DR}(K_{DR}^\star) \leq  \varepsilon + 200 \sqrt{\frac{J(K(\theta), \theta)}{M}\log\frac{2}{\delta}}.
    \end{align*}
\end{theorem}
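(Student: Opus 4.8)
The plan is to control each of the three terms in the decomposition of $J_{DR}(\tilde K) - J_{DR}(K_{DR}^\star)$ displayed just above the theorem: the optimization error $J_{SA}(\tilde K) - J_{SA}(K_{DR}^\star)$, and the two statistical deviations $J_{DR}(\tilde K) - J_{SA}(\tilde K)$ and $J_{SA}(K_{DR}^\star) - J_{DR}(K_{DR}^\star)$. The optimization error is immediate: since $K_{SA}^\star$ minimizes $J_{SA}$, we have $J_{SA}(\tilde K) - J_{SA}(K_{DR}^\star) \leq J_{SA}(\tilde K) - J_{SA}(K_{SA}^\star)$, and the right-hand side is at most $\varepsilon$ because the final gradient step of \Cref{alg: progressive discounting} enforces exactly this tolerance (\Cref{thm: stabilization by progressive discounting}). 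This leaves the two deviation terms.

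For the deviations, I would first invoke \Cref{lem: uniform bound}, which under Assumption~\ref{asmp: heterogeneity} gives a sample-independent envelope on the relevant costs: $J(K_{DR}^\star, \tilde\theta) \leq 6 J(K(\theta),\theta)$ and $J(\tilde K, \tilde\theta) \leq 48 J(K(\theta),\theta)$ for every $\tilde\theta \in \calS$, hence in particular at each sampled $\theta_i$. Taking $\calJ = 48 J(K(\theta),\theta)$, both $J(\tilde K, \theta_i)$ and $J(K_{DR}^\star, \theta_i)$ lie in $[0,\calJ]$; since the $\theta_i$ are i.i.d.\ from $p_\Theta$, Hoeffding's inequality applied to a fixed controller $K$ with $J(K,\cdot)\leq \calJ$ on $\calS$ yields $|J_{DR}(K) - J_{SA}(K)| \leq \calJ\sqrt{\tfrac{2}{M}\log\tfrac{2}{\delta}}$ with probability at least $1-\delta$. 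Applying this with failure probability $\delta/2$ to $K = K_{DR}^\star$ (using the tighter envelope $6 J(K(\theta),\theta)$) and to $K=\tilde K$, and taking a union bound, gives with probability at least $1-\delta$ that the sum of the two deviation terms is $\lesssim J(K(\theta),\theta)\sqrt{\tfrac{1}{M}\log\tfrac{2}{\delta}}$; tracking the constants $48$ and $6$, the $\sqrt 2$ in Hoeffding's form, and the union-bound inflation of the logarithm accounts for the stated numerical factor. Adding the three bounds gives the claim.

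The step I would be most careful about is the application of concentration to $\tilde K$: unlike $K_{DR}^\star$, the iterate $\tilde K$ is a function of $\theta_1,\dots,\theta_M$, so $J_{SA}(\tilde K)$ is not a sample average of i.i.d.\ terms with mean $J_{DR}(\tilde K)$, and the pointwise Hoeffding bound does not literally apply. The clean fix is to observe that the envelope of \Cref{lem: uniform bound} does \emph{not} depend on which samples were drawn, so every realization of $\tilde K$ lies in the fixed set $\mathcal{G} := \{K : \sup_{\tilde\theta\in\calS} J(K,\tilde\theta) \leq 48\, J(K(\theta),\theta)\}$, which is compact by coercivity of the LQR cost; one then upgrades the pointwise bound to a uniform bound over $\mathcal{G}$ via a standard covering-net argument together with the local Lipschitz continuity of $K\mapsto J(K,\tilde\theta)$ on sublevel sets, absorbing the extra $\log$-factors into the constant. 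Everything else is bookkeeping: assembling \Cref{lem: uniform bound}, the $\varepsilon$ from the optimization guarantee, and the (uniform) Hoeffding bound into the three-term decomposition.
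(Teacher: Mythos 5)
Your decomposition, your use of \Cref{lem: uniform bound} to obtain the sample-independent envelope $\calJ$, and your application of Hoeffding's inequality to the two deviation terms are exactly the paper's argument, so the core of your proof matches. The one place you go beyond the paper is the observation that $\tilde K$ depends on $\theta_1,\dots,\theta_M$, so the pointwise Hoeffding bound does not literally apply to $J_{DR}(\tilde K) - J_{SA}(\tilde K)$; the paper applies concentration to $\tilde K$ without addressing this, whereas your covering-net argument over the fixed sublevel set guaranteed by \Cref{lem: uniform bound} is the right way to make that step rigorous (at the cost of logarithmic factors that the stated constant $200$ would need to absorb).
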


\section{Numerical Experiments}
\label{s: numerical}

We conduct numerical experiments on the discretized and linearized inverted pendulum defined by
\begin{align}
    \label{eq: linearized pendulum}
    A = \bmat{1 & dt \\ \frac{g}{\ell} dt & 1}, 
    B= \bmat{0 \\ \frac{dt}{m \ell^2}}.
\end{align} 
We suppose that the parameters $dt = 0.01$ and $g=10$ are known. The unknown parameters $m$ and $\ell$ are modeled with a uniform distribution over the interval $[0.75, 1.25]$. For policy update in \Cref{alg: progressive discounting}, we run 20 iterations of \eqref{eq: grad update} with a stepsize $\alpha = 1\times10^{-3}$ instead of explicitly evaluating the inequality \eqref{eq: controller update}. Refer to the code for more details.\footnote{Code available at \href{https://github.com/Tesshuuuu/Policy-Gradient-for-LQR-with-Domain-Randomization}{this GitHub repository}}

\paragraph{Verification of Proposed Approach} \Cref{fig: verification} demonstrates the application of \Cref{alg: progressive discounting} to a collection of $M=10,20,500$ samples from the distribution of systems \eqref{eq: linearized pendulum}. We run the algorithm over $100$ random seeds. We plot the median and shade 25\% to 75\% quantile for the distance of the iterate from the optimal controller, the excess sample average control cost, and a Monte Carlo approximation of the domain randomized objective \eqref{eq: DR objective} formed with $10^5$ samples. 
We see that the discounted annealing algorithm succeeds in synthesizing a controller that simultaneously stabilizes all systems, and that the policy update \eqref{eq: grad update} converges to the minimizer of sample average cost. This is predicted by the analysis of  \Cref{thm: sample average convergence} and \Cref{thm: stabilization by progressive discounting}. The rightmost figure demonstrates the impact of optimizing the sample average objective defined in terms of $M$ samples. As suggested in \Cref{thm: population excess cost bound}, sampling more systems results in a controller achieving lower domain randomization objective. 

\paragraph{Alternative Risk Metrics} We can consider risk metrics other than the expectation for defining the objective \eqref{eq: DR objective}. For example, we could replace the expectation the entropic risk measure of temperature $t$: 
\begin{align}
    \label{eq: entropic risk}
    J_{ER}^t(K) \triangleq \frac{1}{t} \log \E_{\Theta} \brac{\exp(t J(K, \Theta))}.
\end{align}
The gradient-based optimization of a Monte Carlo approximation proposed in the previous sections can still be applied in this setting by replacing the average gradient of \eqref{eq: grad update} with the gradient of \eqref{eq: entropic risk}:
\begin{align*}
    K \gets K - \alpha \sum_{i=1}^M \frac{\exp(t J(K, \theta_i))}{\sum_{m=1}^M \exp(t J(K, \theta_i))} \nabla_K J(K, \theta_i).
\end{align*}
We lack theoretical convergence guarantees for this approach; however, \Cref{fig: entropic risk} validates it numerically. We see that the proposed algorithm still converges with this alternative risk metric and achieves the optimal controller for the risk-sensitive objective. This result opens the future direction of risk-sensitive domain randomized control. 
\begin{figure*}
    \centering
    \includegraphics[width=0.32\linewidth]{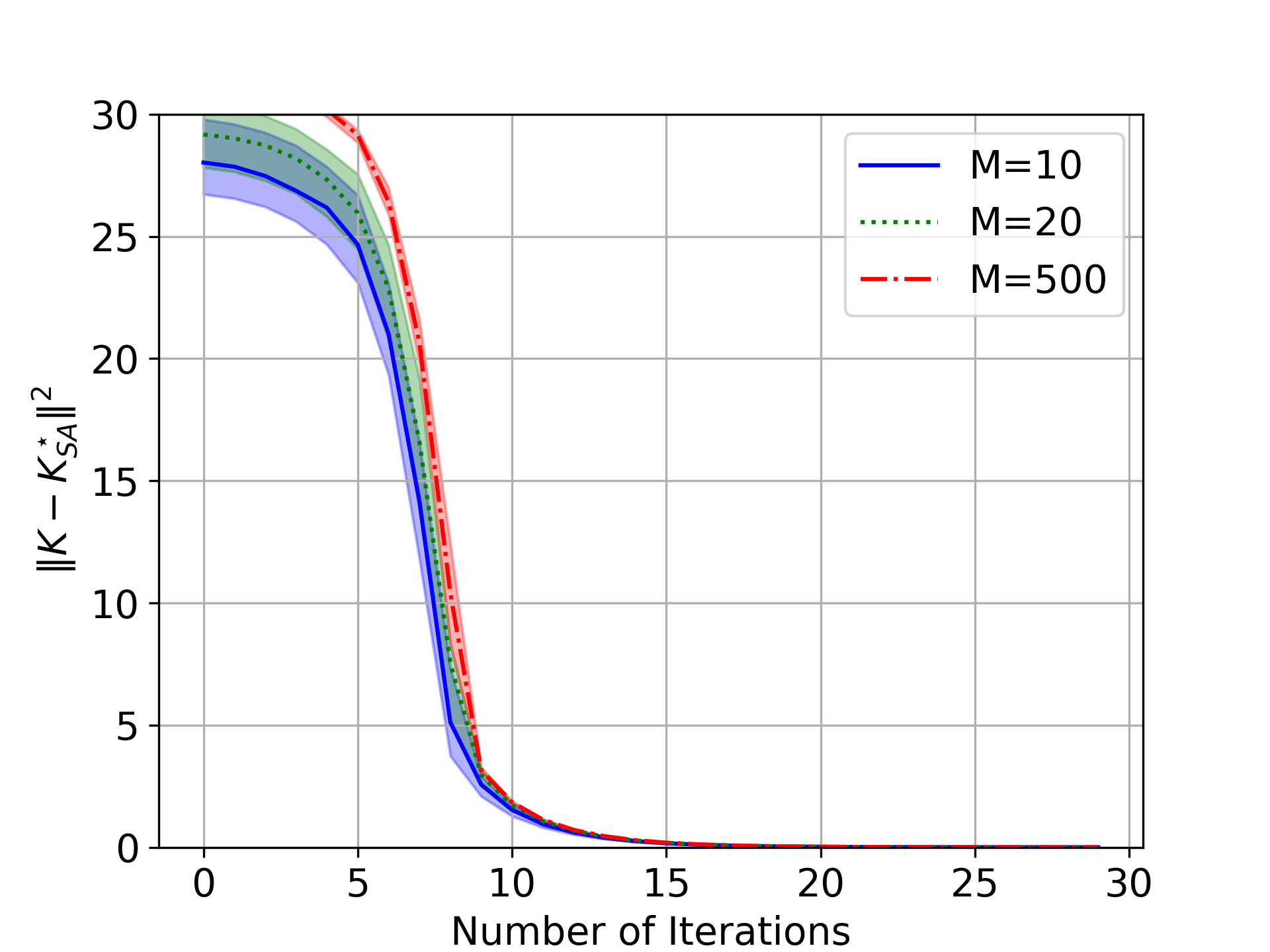}
    \includegraphics[width=0.32\linewidth]{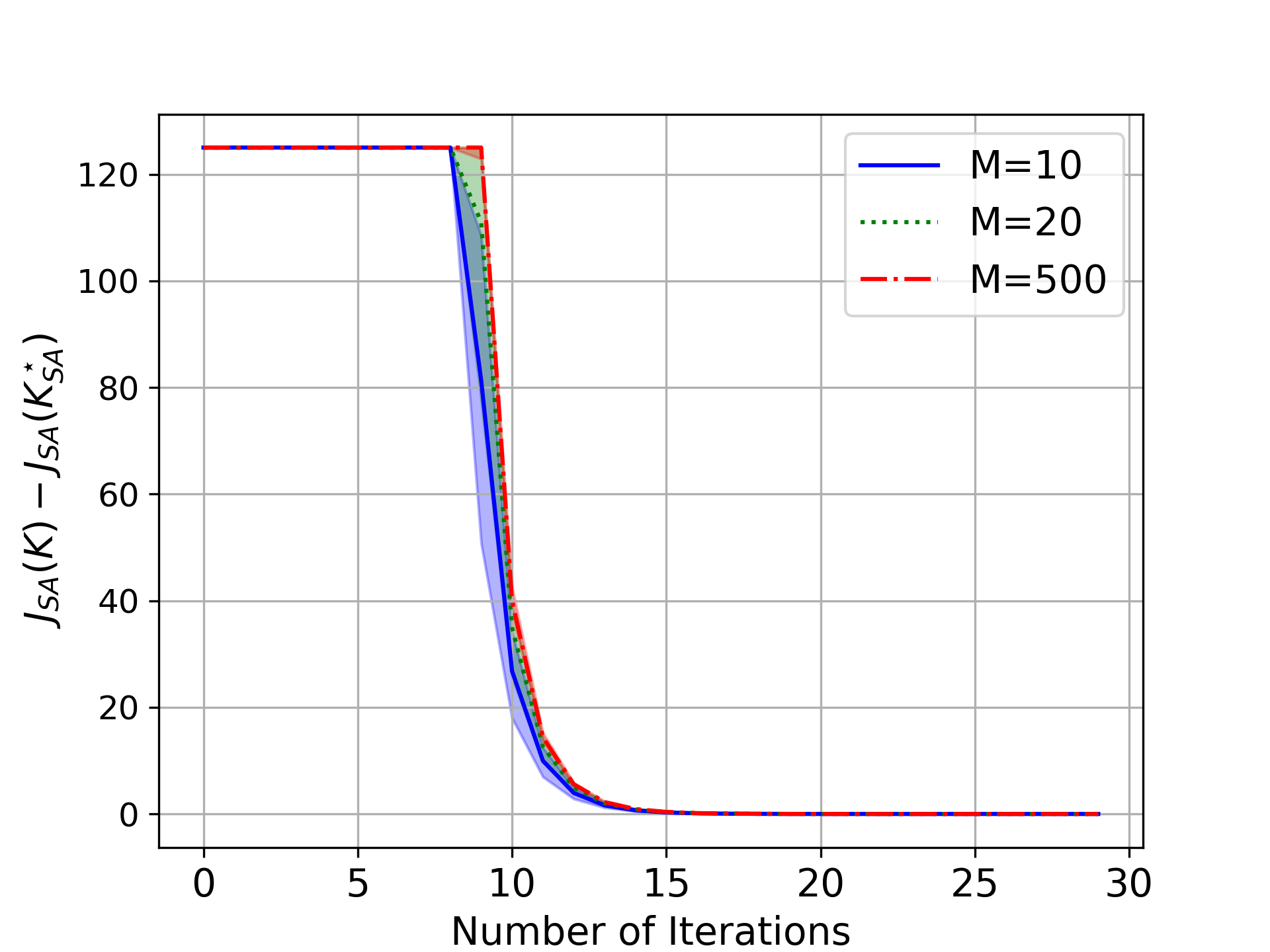}
    \includegraphics[width=0.32\linewidth]{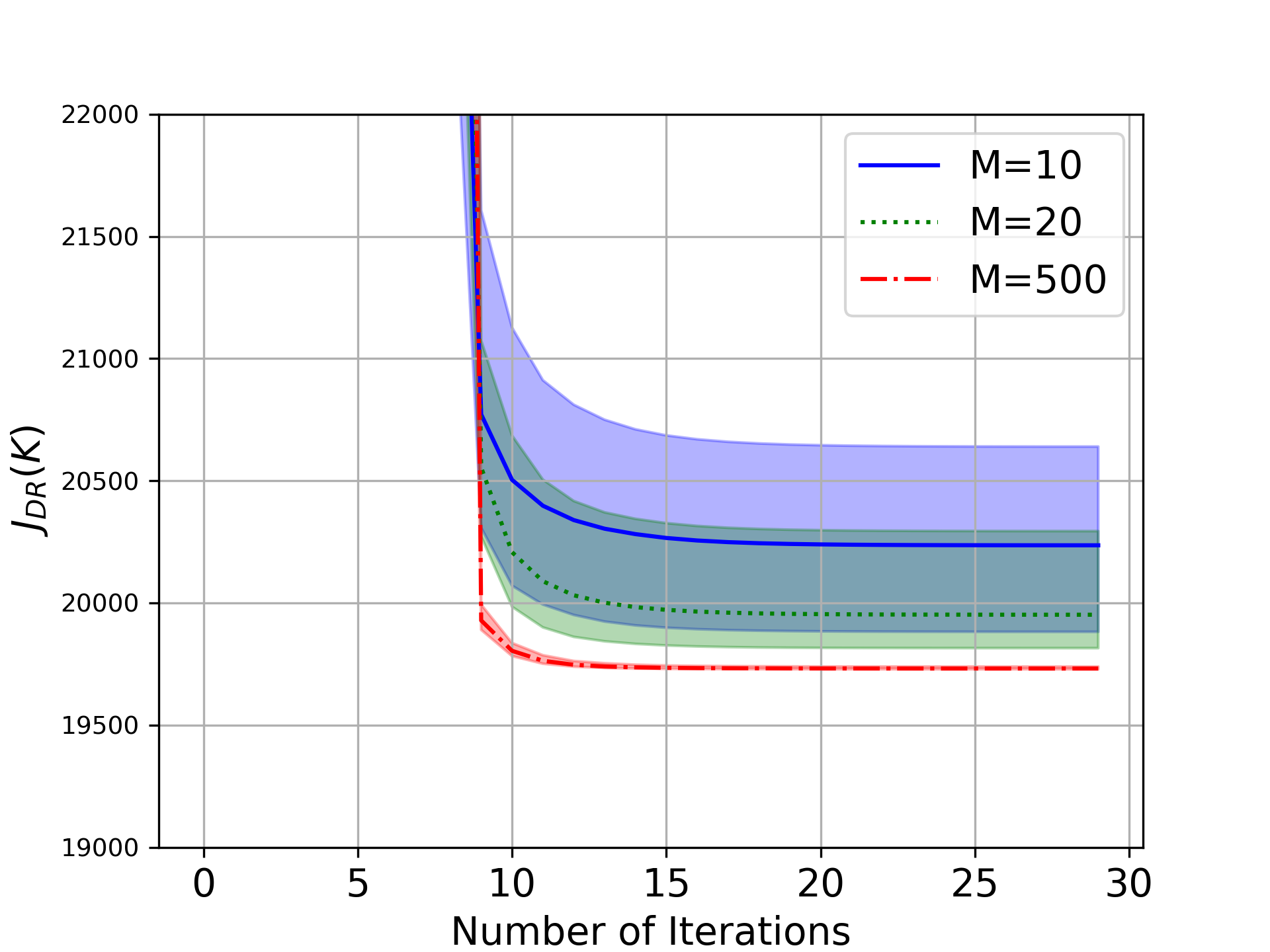}
    \vspace{-6pt}
    \caption{Convergence of policy gradient with the entropic risk measure using $t=1.0$ }
    \vspace{-16pt}
    \label{fig: entropic risk}
\end{figure*}

\paragraph{Stochastic Gradient Descent} Prior experiments verified that the approach analyzed in sections \Cref{s: SA convergence}-\ref{s: sampling} is valid for the numerical example of stabilizing a pendulum linearized about the upright position. In \Cref{fig:sgd}, we demonstrate a stochastic gradient descent approach which samples a new system at every iteration and takes a gradient step with that system, detailed in \Cref{alg: sgd}.\footnote{The update is performed with the gradient of the discounted cost for the sampled system. The  discounting ensures stabilization by the current iterate.}  This also converges, although we lack theoretical guarantees theoretical convergence guarantees. The figure plots the mean and standard deviation over 5 random seeds. The cost $J_{DR}(K)$ is approximated via Monte Carlo, as in the previous experiments. The initial stepsize is chosen as $\alpha = 2\times 10^{-4}$ and the discount coefficient is $\gamma_0 = 0.99$. We see that the controller converges to achieve approximately the same cost on the domain randomization objective as the batch algorithm with $M=500$ samples from \Cref{fig: verification}.

\begin{algorithm}
    \caption{Stochastic Gradient Descent} 
    \label{alg: sgd}
    \begin{algorithmic}[1]
    \State \textbf{Input:} distribution $p_{\Theta}$, iterations $N$, stepsize $\alpha$, discount coefficient $\gamma_0$
    \State $K \gets 0$, 
    \For{$n = 1, \dots,N$}
    \State Sample $\theta \sim p_{\Theta}$
    \State $\gamma \gets \min\curly{\gamma_0 \rho(A(\theta)+B(\theta) K)^{-2}, 1}$
    \State $K\gets K - \alpha \nabla_K J(K, \theta)$
    \EndFor
    \State \textbf{Return: } $K$
    \end{algorithmic}
\end{algorithm}

\begin{figure}
    \centering
    \includegraphics[width=0.7\linewidth]{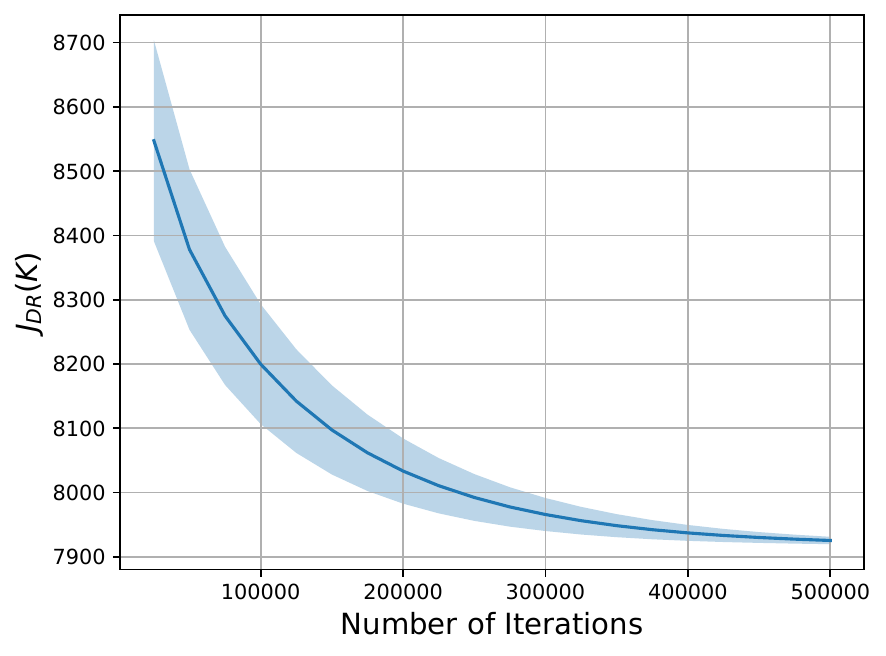}
    \vspace{-10 pt}
    \caption{Stochastic Gradient Descent (\Cref{alg: sgd}) applied to the linearized inverted pendulum of \eqref{eq: linearized pendulum}.}
    \vspace{-20pt}
    \label{fig:sgd}
\end{figure}


\paragraph{Rotary Inverted Pendulum Experiments}
We also verify the \Cref{alg: progressive discounting} on a Quanser Qube Servo 2 rotary inverted pendulum \citep{quanserqube} to study how it can help to overcome the sim-to-real gap. We consider the scenario where we have coarse estimates of several physical parameters, and compare the controller synthesized with DR against the LQR controller which takes these coarse estimates as ground truth. The DR controller achieves a higher success rate by holding a pendulum upright and around the center.\footnote{The experiment video can be found at \href{https://youtu.be/t6AbI1hNZg4}{this YouTube link}.} Details are deferred to the appendix. 
\section{Discussion}


There are several exciting possibilities for theoretical and empirical extensions of the results presented in this paper. 
\begin{itemize}[noitemsep, nolistsep]
    \item \textbf{Relaxed heterogeneity assumptions: } Our theory required strong requirements on the distance between systems. These requirements can almost certainly be made weaker; however, it is unclear the extent to which they are fundamental. Our numerical experiments failed to discover any instances where policy gradient did not converge due to a large distance between systems, unless it was impossible to simultaneously stabilize these systems. 
    \item \textbf{Extensions of the convergence analysis: } Our numerical experiments tested a variation of the algorithm (stochastic gradient descent) and an alternative risk metric (entropic risk) for domain randomization for which we do not have theoretical convergence guarantees, and saw empirical success. It would be interesting to provide convergence proofs for these approaches. In particular, stochastic gradient descent closer resembles the way that domain randomization is implemented in practice. Meanwhile, the entropic risk (or alternative metrics) may better represent the goal of domain randomization to provide robustness to uncertainty. Indeed, entropic risk interpolates between DR as $t\to 0$, and robust control as $t\to\infty$.
    \item \textbf{Practical implications: } The study conducted in this paper provides a better understanding of the application of gradient-based approaches for reinforcement learning with domain randomization. It may be possible to use this perspective to design alternative hyperparameter schedules for domain randomization as applied in practice \citep{akkaya2019solving}. 
\end{itemize}

\section{Conclusion}

Our work proves the convergence of policy gradient methods applied to linear quadratic control with domain randomization. The analysis relies upon a generalization of the gradient domination condition from prior work. This generalization results in guarantees that are sensitive to the initialization; however, a curriculum learning approach with an appropriate schedule can bypass this sensitivity. We believe that this line of analysis has potential to demystify and improve heuristic approaches for taming the optimization landscape of DR from the robot learning literature \citep{akkaya2019solving}.

\section*{Acknowledgements}
We thank Leonardo Toso for instructive conversations. 
TF is supported by JASSO Exchange Support program and UTokyo-TOYOTA Study Abroad Scholarship. TF and GP are supported in part by NSF Award SLES 2331880 and NSF TRIPODS EnCORE 2217033.  BL and NM are supported by NSF Award SLES-2331880, NSF CAREER award ECCS-2045834 and AFOSR Award FA9550-24-1-0102.

\bibliographystyle{IEEEtranN}
\bibliography{refs}

\onecolumn

\section{Appendix}
\label{s: policy gradient convergence}
\subsection{Notation}
We use the fact that $J(K, \theta)\geq 1$ by the lower bound on $Q$ and the fact that $\Sigma_w = I$. 
Note that $\norm{X+Y}^2\leq2\norm{X}^2 + 2\norm{Y}^2$.
Let $\nu(K) \coloneqq 1 + \norm{K}$ and $\zeta(K) \coloneqq 1 + \norm{K}^2$.  
Let $\tau_B = \max\curly{\norm{B(\theta_1)}, \cdots, \norm{
B(\theta_M)}, 1}$. 
Assume that if $J_{SA}(K)\leq\mathcal{B}$, then $J^i(K)\leq\slack J_{SA}(K)$.

\subsection{Helper Lemmas}
\begin{lemma}[Cost Bound, Lemma 13 of \citep{fazel2018global}, Lemma 6 of \citep{fujinami2025domainrandomizationsampleefficient}]
    \label{lem: Cost Bound}
    Let $K$ be arbitrary gain that stabilizes the system. Then following inequalities hold 
    \begin{enumerate}
        \item $\norm{K}^2 \leq \norm{P_K^i}\leq\frac{J(K, \theta_i)}{\smin(\Sigma_{w})} = J(K, \theta_i)$, 
        \item $\norm{\Sigma_K^i}\leq\frac{J(K, \theta_i)}{\smin(Q)} \leq J(K, \theta_i)$,
        \item $\norm{E_K^i}_F^2 \leq \frac{1}{\smin(\Sigma_w)^2}\trace\paren{\Sigma_K^i(E_K^i)^\top E_K^i\Sigma_K^i} = \norm{\nabla_K J(K, \theta_i)}^2_F$, \\
        \item $1 \leq \nu(K) \leq 2\sqrt{J(K, \theta_i)}$, ~ $1\leq\zeta(K)\leq2J(K, \theta_i)$.
    \end{enumerate}
\end{lemma}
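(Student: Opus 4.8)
The plan is to obtain all four groups of inequalities from two basic facts: the series representations of the Lyapunov solutions, and the two equivalent trace formulas for the LQR cost. Writing $A_{K,i} \triangleq A(\theta_i) + B(\theta_i) K$, the defining Lyapunov equations unroll to $\Sigma_K^i = \sum_{t \geq 0} A_{K,i}^t \Sigma_w (A_{K,i}^\top)^t$ and $P_K^i = \sum_{t \geq 0} (A_{K,i}^\top)^t (Q + K^\top R K) A_{K,i}^t$, and since each summand is positive semidefinite these yield the orderings $\Sigma_K^i \succeq \Sigma_w$ (so $\smin(\Sigma_K^i) \geq \smin(\Sigma_w)$) and $P_K^i \succeq Q + K^\top R K$. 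I will also use $J(K,\theta_i) = \trace(P_K^i \Sigma_w) = \trace\big((Q + K^\top R K)\Sigma_K^i\big)$ together with the elementary bound $\trace(MN) \geq \smin(N)\,\trace(M) \geq \smin(N)\,\norm{M}$, valid for positive semidefinite $M,N$.

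For item 1, the lower bound is $\norm{P_K^i} \geq \norm{K^\top K} = \norm{K}^2$, using $P_K^i \succeq K^\top R K = K^\top K$ since $R = I$; the upper bound is $\norm{P_K^i} \leq \trace(P_K^i) \leq \trace(P_K^i \Sigma_w)/\smin(\Sigma_w) = J(K,\theta_i)$, the last equality because $\Sigma_w = I$. Item 2 is the mirror image using the second cost formula: $J(K,\theta_i) \geq \trace(Q \Sigma_K^i) \geq \smin(Q)\,\norm{\Sigma_K^i}$, and $\smin(Q) \geq 1$. For item 3 I would note $\norm{\nabla_K J(K,\theta_i)}_F^2 = \norm{E_K^i \Sigma_K^i}_F^2 = \trace\big((E_K^i)^\top E_K^i (\Sigma_K^i)^2\big)$, using symmetry of $\Sigma_K^i$ and cyclicity of the trace, and then apply the trace bound with $M = (E_K^i)^\top E_K^i$ and $N = (\Sigma_K^i)^2$ to peel off the factor $\smin(\Sigma_K^i)^2 \geq \smin(\Sigma_w)^2$, giving $\smin(\Sigma_w)^2 \norm{E_K^i}_F^2 \leq \norm{\nabla_K J(K,\theta_i)}_F^2$. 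Finally, for item 4, item 1 gives $\norm{K} \leq \sqrt{J(K,\theta_i)}$, while the standing fact $J(K,\theta_i) \geq 1$ (from $P_K^i \succeq Q \succeq I$ and $\Sigma_w = I$) gives $1 \leq \sqrt{J(K,\theta_i)}$; adding these yields $\nu(K) = 1 + \norm{K} \leq 2\sqrt{J(K,\theta_i)}$, and the analogous computation gives $\zeta(K) = 1 + \norm{K}^2 \leq 2 J(K,\theta_i)$, with the lower bounds $\nu(K), \zeta(K) \geq 1$ immediate.

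I do not expect a genuine obstacle here: the statement is a catalogue of standard LQR estimates (it is cited as Lemma 13 of \citep{fazel2018global}). The only points requiring mild care are keeping the two cost representations straight so that each bound is attacked with the convenient one, and the trace manipulation in item 3, where one must extract the factor $\smin(\Sigma_K^i)^2$ (rather than $\smin(\Sigma_K^i)$) before invoking the Lyapunov ordering to replace it by the known value $\smin(\Sigma_w) = 1$.
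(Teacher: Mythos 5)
Your proof is correct and is essentially the standard argument the paper defers to by citation (Lemma 13 of Fazel et al.): the Lyapunov-series orderings $P_K^i \succeq Q + K^\top R K$ and $\Sigma_K^i \succeq \Sigma_w$, the two trace formulas for the cost, and the bound $\trace(MN) \geq \smin(N)\trace(M)$ for PSD matrices give all four items exactly as you describe. No gaps.
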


\begin{lemma}[Closed Loop System Perturbation]
    \label{lem: closed loop system perturbation}
    Let $\cl^i\coloneqq A(\theta_i) + B(\theta_i)K$ for $i = 1, 2$ with some stabilizing controller $K$. 
    Define the heterogeneity parameter $\het \coloneqq \max\curly{\norm{A(\theta_1)-A(\theta_2)}, \norm{B(\theta_1)-B(\theta_2)}}$. 
    Then $\norm{\cl^1 - \cl^2} \leq \nu(K)\het.$
\end{lemma}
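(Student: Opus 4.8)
The plan is to expand the difference of the two closed-loop matrices and bound it term by term. First I would write
\[
\cl^1 - \cl^2 = \bigl(A(\theta_1) - A(\theta_2)\bigr) + \bigl(B(\theta_1) - B(\theta_2)\bigr) K,
\]
which is valid because both closed-loop maps use the same gain $K$, so the $K$-dependent parts combine cleanly. Then I would apply subadditivity and submultiplicativity of the spectral norm to get
\[
\norm{\cl^1 - \cl^2} \leq \norm{A(\theta_1) - A(\theta_2)} + \norm{B(\theta_1) - B(\theta_2)}\,\norm{K}.
\]
Finally, invoking the definition $\het = \max\curly{\norm{A(\theta_1)-A(\theta_2)},\, \norm{B(\theta_1)-B(\theta_2)}}$ bounds each matrix-difference norm by $\het$, yielding $\norm{\cl^1-\cl^2} \leq \het\,(1 + \norm{K}) = \nu(K)\,\het$ by the definition $\nu(K) = 1 + \norm{K}$.

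There is no genuine obstacle here: this is a one-line perturbation estimate, and the stabilizing assumption on $K$ is not actually used in the inequality — it only guarantees that $\cl^1$ and $\cl^2$ are the closed-loop matrices that appear in the downstream Lyapunov/Riccati analysis. The only minor points to keep straight are that the matrix norm in play (the spectral norm used throughout) is both subadditive and submultiplicative. The purpose of the lemma is simply to convert raw parameter heterogeneity into the quantity $\nu(K)\het$, which then propagates — with $\het \le \varepsilon_{\mathsf{HET}}$ via Assumption~\ref{asmp: heterogeneity} — into the subsequent perturbation bounds on $\Sigma_K^i$, $P_K^i$, and $E_K^i$ that are needed to control the remainder term $\mathsf{R}(K, K_{SA}^\star)$.
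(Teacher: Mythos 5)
Your proposal is correct and matches the paper's own proof: both expand $\cl^1-\cl^2$ into the $A$-difference plus the $B$-difference times $K$, apply the triangle inequality and submultiplicativity, and bound each difference by $\het$ to obtain $\nu(K)\het$. Your observation that the stabilizing assumption plays no role in the inequality itself is also accurate.
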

\begin{proof}
    From the definition of $\cl^i$, we have
    \begin{align*}
        \norm{\cl^1 - \cl^2} &= \norm{\paren{A(\theta_1) - A(\theta_2)} - \paren{B(\theta_1) - B(\theta_2)}K} \leq \norm{A(\theta_1) - A(\theta_2)} + \norm{B(\theta_1) - B(\theta_2)}\norm{K}.
    \end{align*}
\end{proof}

\begin{lemma}[State Correlation Matrix Perturbation]
    \label{lem: state correlation matrix perturbation}
    Let $K$ be a stabilizing controller. Then for $i, j\in\curly{1, \cdots, M}$, it holds that
        $\norm{\Sigma_K^i - \Sigma_K^j} \leq 2J(K, \theta_i)J(K, \theta_j)^{3/2}\paren{\nu(K)\het + \zeta(K)\het^2}.$
\end{lemma}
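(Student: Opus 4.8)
The plan is to bound the difference $\Sigma_K^i - \Sigma_K^j$ by exploiting the fact that both are solutions of discrete Lyapunov equations with closed-loop matrices $\cl^i$ and $\cl^j$ that are close (by \Cref{lem: closed loop system perturbation}) and the same noise covariance $\Sigma_w = I$. Writing $\Sigma_K^i = \sum_{t\geq 0} (\cl^i)^t \Sigma_w ((\cl^i)^\top)^t$ and similarly for $j$, the difference telescopes over the discrepancy $\cl^i - \cl^j$. A cleaner route is to use the Lyapunov-operator identity: if $\mathcal{F}_A(X) \triangleq X - A X A^\top$, then $\Sigma_K^i - \Sigma_K^j = (\mathcal{F}_{\cl^i})^{-1}\paren{(\cl^i - \cl^j)\Sigma_K^j (\cl^i)^\top + \cl^j \Sigma_K^j (\cl^i - \cl^j)^\top}$, obtained by subtracting the two Lyapunov equations $\Sigma_K^i = \cl^i \Sigma_K^i (\cl^i)^\top + \Sigma_w$ and $\Sigma_K^j = \cl^j \Sigma_K^j (\cl^j)^\top + \Sigma_w$ and regrouping.

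First I would subtract the two Lyapunov equations and algebraically rearrange to isolate $\Sigma_K^i - \Sigma_K^j$ as the image under $(\mathcal{F}_{\cl^i})^{-1}$ of a "perturbation" term involving $\Delta \triangleq \cl^i - \cl^j$. Second, I would bound the norm of the inverse Lyapunov operator $(\mathcal{F}_{\cl^i})^{-1}$ in terms of $\norm{\Sigma_K^i}$ — the standard fact is that $\norm{(\mathcal{F}_{\cl^i})^{-1}(Y)} \leq \norm{\Sigma_K^i} \norm{Y} / \smin(\Sigma_w) = \norm{\Sigma_K^i}\norm{Y}$ since $\Sigma_w = I$, using that $\Sigma_K^i = (\mathcal{F}_{\cl^i})^{-1}(\Sigma_w)$ and a positivity/monotonicity argument for the operator. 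Third, I would bound the perturbation term: $\norm{\Delta \Sigma_K^j (\cl^i)^\top + \cl^j \Sigma_K^j \Delta^\top} \leq \norm{\Delta}\norm{\Sigma_K^j}\paren{\norm{\cl^i} + \norm{\cl^j}}$, and then bound $\norm{\Delta} \leq \nu(K)\het$ via \Cref{lem: closed loop system perturbation} and $\norm{\cl^i}, \norm{\cl^j}$ crudely. Finally I would assemble: $\norm{\Sigma_K^i - \Sigma_K^j} \lesssim \norm{\Sigma_K^i}\norm{\Sigma_K^j}\nu(K)\het \paren{\norm{\cl^i} + \norm{\cl^j}}$, and then convert operator norms of $\Sigma_K^\bullet$ and $\cl^\bullet$ into costs using \Cref{lem: Cost Bound}: $\norm{\Sigma_K^i} \leq J(K,\theta_i)$, $\norm{\Sigma_K^j}\leq J(K,\theta_j)$, and $\norm{\cl^i} \leq \sqrt{J(K,\theta_i)}$ (up to constants, from $\norm{\cl^i}^2 \leq \norm{\Sigma_K^i}$-type bounds or direct spectral estimates, absorbing any $\het^2$ cross-terms into the stated $\zeta(K)\het^2$ summand).

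The main obstacle I anticipate is getting the exact polynomial exponents and the constant $2$ in front to match the claimed bound $2 J(K,\theta_i) J(K,\theta_j)^{3/2}\paren{\nu(K)\het + \zeta(K)\het^2}$. The $\het^2$ term arises because a fully careful treatment cannot treat $\cl^i$ and $\cl^j$ symmetrically when bounding the inverse operator — one must either expand to second order in $\Delta$ or split the perturbation into a first-order part (giving the $\nu(K)\het$ term) and a higher-order correction (giving the $\zeta(K)\het^2$ term, since $\zeta(K) = 1 + \norm{K}^2 \sim \nu(K)^2$). Tracking which factor of $J(K,\theta_j)^{3/2}$ comes from $\norm{\Sigma_K^j}$ times $\norm{\cl^j}$ versus from the operator-norm bound will require care; I would handle it by being slightly lossy — using $J(K,\theta) \geq 1$ freely to collapse additive constants into multiplicative ones — rather than chasing a tight constant, which is exactly the style the appendix's notation section ($\norm{X+Y}^2 \leq 2\norm{X}^2 + 2\norm{Y}^2$, $1 \leq \nu(K) \leq 2\sqrt{J}$, etc.) is set up to support.
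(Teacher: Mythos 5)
Your proposal is correct and takes essentially the same route as the paper: the paper's proof simply cites the inverse-Lyapunov-operator perturbation bound $\norm{\Sigma_K^i - \Sigma_K^j} \leq \frac{\norm{\Sigma_K^i}\norm{\Sigma_K^j}}{\smin(\Sigma_w)}\paren{2\norm{\cl^j}\norm{\cl^i-\cl^j} + \norm{\cl^i-\cl^j}^2}$ from Lemma 3 of \citet{fujinami2025domainrandomizationsampleefficient} rather than re-deriving it, and then applies \Cref{lem: closed loop system perturbation}, the bound $\norm{\cl^j}\leq\norm{P_K^j}^{1/2}$, and \Cref{lem: Cost Bound} exactly as you outline. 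Your explicit derivation of that operator bound (subtracting the two Lyapunov equations and controlling $(\mathcal{F}_{\cl^i})^{-1}$ via $\norm{\Sigma_K^i}$ and positivity) is precisely the standard proof of the cited lemma, so the two arguments coincide in substance.
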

\begin{proof}
    From the Lyapunov perturbation argument in Lemma 3 in \citep{fujinami2025domainrandomizationsampleefficient} and \Cref{lem: closed loop system perturbation}, 
    \begin{align*}
    \norm{\Sigma_K^i - \Sigma_K^j} &\leq \frac{\norm{\Sigma_{K}^i}\norm{\Sigma_{K}^j}}{\smin(\Sigma_w)}\paren{2\norm{\cl^j}\norm{\cl^i - \cl^j} + \norm{\cl^i - \cl^j}^2} \\
    &\leq 2\norm{\Sigma_{K}^i}\norm{\Sigma_{K}^j}\paren{\norm{P_K^j}^{1/2} \het(1 + \norm{K}) + \het^2(1+\norm{K}^2)}, 
    \end{align*}
    where we use $\norm{\cl^j}\leq\norm{P_K^j}^{1/2}$ in the second inequality. Apply \Cref{lem: Cost Bound} to conclude.
\end{proof}

\begin{lemma}[Lyapunov Perturbation]
    \label{lem: Lyapunov Perturbation}
    Then it holds that
        $\norm{P_K^i - P_K^j} \leq 2J(K, \theta_i)J(K, \theta_j)^{3/2}\paren{\nu(K)\het + \zeta(K)\het^2}.$
\end{lemma}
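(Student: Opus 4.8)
The plan is to reuse the argument of \Cref{lem: state correlation matrix perturbation} almost verbatim, exploiting that $P_K^i$ and $P_K^j$ solve discrete Lyapunov equations with \emph{the same} constant term $Q + KRK^\top$ and differing only through the closed-loop matrix $\cl^i$ versus $\cl^j$. First I would apply the Lyapunov perturbation bound (Lemma 3 of \citep{fujinami2025domainrandomizationsampleefficient}) to the pair $\cl^i, \cl^j$ with this common right-hand side, which gives
\begin{align*}
\norm{P_K^i - P_K^j} \leq \frac{\norm{P_K^i}\norm{P_K^j}}{\smin(Q + KRK^\top)}\paren{2\norm{\cl^j}\norm{\cl^i - \cl^j} + \norm{\cl^i - \cl^j}^2}.
\end{align*}

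Next I would simplify the right-hand side using ingredients already in hand: $\smin(Q + KRK^\top) \geq \smin(Q) \geq 1$ since $Q \succeq I$; $\norm{\cl^i - \cl^j} \leq \nu(K)\het$ by \Cref{lem: closed loop system perturbation}; and $\norm{\cl^j} \leq \norm{P_K^j}^{1/2}$, which follows from $P_K^j \succeq (\cl^j)^\top P_K^j \cl^j$ together with $P_K^j \succeq I$, exactly as in the proof of \Cref{lem: state correlation matrix perturbation}. Substituting yields $\norm{P_K^i - P_K^j} \leq \norm{P_K^i}\norm{P_K^j}\paren{2\norm{P_K^j}^{1/2}\nu(K)\het + \nu(K)^2\het^2}$.

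Finally I would clean up the constant: bound $\nu(K)^2 = (1+\norm{K})^2 \leq 2(1 + \norm{K}^2) = 2\zeta(K)$, pull a factor of $2$ outside the parentheses, and then apply the cost bounds $\norm{P_K^i} \leq J(K,\theta_i)$ and $\norm{P_K^j} \leq J(K,\theta_j)$ from \Cref{lem: Cost Bound}, together with $J(K,\theta_j) \geq 1$, to replace both $\norm{P_K^i}\norm{P_K^j}\norm{P_K^j}^{1/2}$ and $\norm{P_K^i}\norm{P_K^j}$ by $J(K,\theta_i)J(K,\theta_j)^{3/2}$. This produces the claimed inequality.

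I do not expect a genuine obstacle: this is the $P$-matrix twin of \Cref{lem: state correlation matrix perturbation}, and structurally the only changes are swapping the covariance Lyapunov equation for the cost Lyapunov equation and replacing $\smin(\Sigma_w)=1$ by $\smin(Q+KRK^\top)\geq 1$. The single subtlety worth flagging is that writing $\cl^i = \cl^j + (\cl^i - \cl^j)$ perturbs the constant term of the Lyapunov equation by an \emph{indefinite} matrix, so the perturbation lemma must be invoked in its two-sided operator-norm form rather than via a one-sided positive-semidefinite comparison — but that is exactly the form already used for \Cref{lem: state correlation matrix perturbation}.
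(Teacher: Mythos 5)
Your proof is correct and follows essentially the same route as the paper's: both invoke the Lyapunov perturbation bound of Lemma 3 in \citep{fujinami2025domainrandomizationsampleefficient} with the common constant term $Q+K^\top RK$, then substitute $\norm{\cl^i-\cl^j}\leq\nu(K)\het$, $\norm{\cl^j}\leq\norm{P_K^j}^{1/2}$, $\smin(Q+K^\top RK)\geq 1$, and the cost bounds of \Cref{lem: Cost Bound}. Your version is in fact slightly more explicit than the paper's (spelling out $\nu(K)^2\leq 2\zeta(K)$ and the use of $J(K,\theta_j)\geq 1$), but there is no substantive difference.
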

\begin{proof}
    From Lemma 3 in \citep{fujinami2025domainrandomizationsampleefficient}, it follows for $i, j \in\curly{1, \cdots, M}$ that
    \begin{align*}
        \norm{P_K^i - P_K^j} &\leq \frac{2\norm{P_K^i}\norm{P_K^j}}{\smin(Q + K^TRK)}\paren{\norm{\cl^j}\het(1 + \norm{K}) + \het^2(1 + \norm{K}^2)},
    \end{align*} 
    and apply \Cref{lem: Cost Bound} and $\norm{\cl^j}\leq\norm{P_K^j}^{1/2}$. 
\end{proof}

\begin{lemma}[$E_K$ perturbation]
    \label{lem: EK perturbation}
    Let $K$ be a stabilizing controller. Then it holds for $i,j\in\curly{1, \cdots, M}$ that
    \begin{align*}
        \norm{E_K^i - E_K^j} &\leq \paren{J(K, \theta_i) + \tau_BJ(K, \theta_j) + 2\tau_BJ(K, \theta_i)J(K, \theta_j)^{3/2}}\nu(K)\het + 2\tau_BJ(K, \theta_i)J(K, \theta_j)^{3/2}\zeta(K)\het^2.
    \end{align*}
\end{lemma}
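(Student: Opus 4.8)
The plan is to prove the $E_K$ perturbation bound by writing the difference $E_K^i - E_K^j$ as a sum of three groups of terms, each isolating a single source of discrepancy, and then bounding each group using the helper lemmas already established. Recall that $E_K^i = 2(R + B(\theta_i)^\top P_K^i B(\theta_i)) K + B(\theta_i)^\top P_K^i A(\theta_i)$. First I would subtract the two expressions and add and subtract intermediate quantities so that in each resulting term only one of $B(\theta)$, $P_K$, or $A(\theta)$ changes index at a time. Concretely, for the term $B(\theta_i)^\top P_K^i B(\theta_i) - B(\theta_j)^\top P_K^j B(\theta_j)$ I would telescope through $B(\theta_i)^\top P_K^i B(\theta_j)$ and $B(\theta_i)^\top P_K^j B(\theta_j)$ (or a similar chain), and likewise for $B(\theta_i)^\top P_K^i A(\theta_i) - B(\theta_j)^\top P_K^j A(\theta_j)$.

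Next I would bound each telescoped piece using submultiplicativity of the operator norm together with the ingredients: $\norm{B(\theta_i) - B(\theta_j)} \leq \het$ and $\norm{A(\theta_i) - A(\theta_j)} \leq \het$ from \Cref{lem: closed loop system perturbation}'s definition, $\norm{P_K^i - P_K^j}$ from \Cref{lem: Lyapunov Perturbation}, and the magnitude bounds $\norm{B(\theta_i)} \leq \tau_B$, $\norm{A(\theta_i)} \leq \norm{\cl^i} + \norm{B(\theta_i)}\norm{K} \leq \norm{P_K^i}^{1/2} + \tau_B\norm{K}$, $\norm{P_K^i} \leq J(K,\theta_i)$, and $\norm{K} \leq \sqrt{J(K,\theta_i)}$ from \Cref{lem: Cost Bound}. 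The factor of $2$ in front of the first summand of $E_K$ accounts for the leading coefficient, and the extra $\norm{K}$ appearing in the first group of terms (and in the $A$-term via $\norm{A(\theta_i)}$) is what produces the $\nu(K)$ and $\zeta(K)$ factors rather than bare constants. Collecting the terms linear in $\het$ yields the coefficient $\paren{J(K,\theta_i) + \tau_B J(K,\theta_j) + 2\tau_B J(K,\theta_i) J(K,\theta_j)^{3/2}}\nu(K)$, where the first two summands come from the telescoped pieces where $B$ or $A$ changes and $P_K$ is held fixed (bounded by $\norm{P_K^i} \leq J(K,\theta_i)$ or $\tau_B \norm{P_K^j}\leq \tau_B J(K,\theta_j)$), and the third comes from the piece where $P_K^i - P_K^j$ appears, which carries the product $2 J(K,\theta_i) J(K,\theta_j)^{3/2}$ from \Cref{lem: Lyapunov Perturbation} scaled by $\tau_B$. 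The terms quadratic in $\het$ come solely from the $\het^2$ contribution in \Cref{lem: Lyapunov Perturbation}, giving $2\tau_B J(K,\theta_i) J(K,\theta_j)^{3/2}\zeta(K)\het^2$.

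The main obstacle I anticipate is bookkeeping: choosing the telescoping order so that the constants assemble exactly into the stated form without slack, and in particular making sure the $\nu(K)$ versus $\zeta(K)$ dependence is tracked correctly (the linear-in-$\het$ terms should only ever pick up one power of $(1+\norm{K})$, while the quadratic terms pick up $(1+\norm{K}^2)$). I would be careful that the bound on $\norm{A(\theta_i)}$ introduces a $\norm{P_K^i}^{1/2}$ that, when multiplied by $\het$ and $\norm{K}$, stays within the $J(K,\theta_i)\nu(K)\het$ budget — using $\norm{P_K^i}^{1/2}(1+\norm{K}) \leq J(K,\theta_i)^{1/2}\cdot 2 J(K,\theta_i)^{1/2} \leq 2 J(K,\theta_i)$ keeps this in line, though the exact constant placement relative to the other terms needs care. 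Everything else is a routine application of the triangle inequality and the helper lemmas.
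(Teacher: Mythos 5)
Your overall strategy---telescope the difference so that only one factor changes at a time, then invoke \Cref{lem: Lyapunov Perturbation}, \Cref{lem: closed loop system perturbation}, and \Cref{lem: Cost Bound}---is the same family of argument as the paper's, but your particular decomposition has a genuine gap: it forces you to bound $\norm{A(\theta_i)}$ on its own, and that quantity cannot be absorbed into the stated coefficients. Any telescoping of $B(\theta_i)^\top P_K^i A(\theta_i) - B(\theta_j)^\top P_K^j A(\theta_j)$ produces pieces such as $\paren{B(\theta_i)-B(\theta_j)}^\top P_K^i A(\theta_i)$ and $B(\theta_j)^\top\paren{P_K^i - P_K^j}A(\theta_i)$ in which $A(\theta_i)$ is held fixed. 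Your proposed bound $\norm{A(\theta_i)} \le \norm{\cl^i} + \tau_B\norm{K} \le J(K,\theta_i)^{1/2} + \tau_B\norm{K}$ then turns the target coefficient $J(K,\theta_i)\,\nu(K)$ into something on the order of $\tau_B J(K,\theta_i)^{3/2}$, which is not bounded by $J(K,\theta_i)\nu(K)$ in general: $\norm{A(\theta_i)}$ and $1+\norm{K}$ are unrelated quantities ($A$ can have large norm while $K$ is small). So your route yields a valid perturbation bound, but a strictly weaker and differently parameterized one; it does not establish the inequality as stated.

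The missing idea is the grouping the paper uses before telescoping: in $E_K^i - E_K^j$ the $R$-dependent term is common to both systems and cancels, and the remaining terms combine (up to the sign convention for $E_K$) into $B(\theta_i)^\top P_K^i \cl^i - B(\theta_j)^\top P_K^j \cl^j$, so that $A(\theta_i)$ never appears except inside $\cl^i = A(\theta_i)+B(\theta_i)K$. Adding and subtracting $B(\theta_j)^\top P_K^j \cl^i$ gives $\paren{B(\theta_j)^\top P_K^j - B(\theta_i)^\top P_K^i}\cl^i + B(\theta_j)^\top P_K^j\paren{\cl^i - \cl^j}$; the second summand contributes exactly one factor $\nu(K)\het$ via \Cref{lem: closed loop system perturbation}, and the first is handled by one further telescoping through $B(\theta_j)^\top P_K^i$ together with \Cref{lem: Lyapunov Perturbation}, which is where the $2\tau_B J(K,\theta_i)J(K,\theta_j)^{3/2}\paren{\nu(K)\het+\zeta(K)\het^2}$ terms originate. (Even with this grouping the paper's own display silently drops the factor $\norm{\cl^i}$ multiplying the first summand, so the stated constants are not airtight; but the $\cl$-grouping is what makes the stated functional form reachable at all, and it is the step your plan lacks.)
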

\begin{proof}
    \begin{align*}
        \norm{E_K^i - E_K^j} &=\norm{\paren{R + B(\theta_i)^\top P_K^iB(\theta_i)}K - B(\theta_i)^\top P_K^iA(\theta_i) - \paren{R + B(\theta_j)^\top P_K^jB(\theta_j)}K + B(\theta_j)^\top P_K^jA(\theta_j)} \\
        &= \norm{-B(\theta_i)^\top P_K^i\cl^i + B(\theta_j)^\top P_K^j\cl^j}  \\
        &= \norm{\paren{B(\theta_j)^\top P_K^j - B(\theta_i)^\top P_K^i}\cl^i + B(\theta_j)^\top P_K^j\paren{\cl^i - \cl^j}} \\
        &\leq \norm{B(\theta_j)^\top P_K^j - B(\theta_i)^\top P_K^i} + \norm{B(\theta_j)^\top P_K^j}\norm{\cl^i-\cl^j},
    \end{align*}
    where we add and subtract $B(\theta_j)^\top P_K^j\cl^i$ in the third equality. 
    As for the first term, we have
    \begin{align*}
        \norm{B(\theta_j)^\top P_K^j - B(\theta_i)^\top P_K^i} &= \norm{\paren{B(\theta_j) - B(\theta_i)}^\top P_K^i + B(\theta_j)^\top(P_K^j - P_K^i)} \leq \het\norm{P_K^i} + \norm{B(\theta_j)}\norm{P_K^i - P_K^j}.
    \end{align*}
    Thus with \Cref{lem: Cost Bound} and \Cref{lem: Lyapunov Perturbation}, 
    \begin{align*}
        \norm{B(\theta_j)^\top P_K^j - B(\theta_i)^\top P_K^i} &\leq J(K, \theta_i)\het + 2\tau_BJ(K, \theta_i)J(K, \theta_j)^{3/2}\paren{\het(1+\norm{K}) + \het^2(1+\norm{K}^2}.
    \end{align*}
   For the remaining term $
        \norm{(B^j)^\top P_K^j}\norm{\cl^i-\cl^j} \leq \tau_BJ(K, \theta_j)\het(1 + \norm{K})$,
    from \Cref{lem: Cost Bound} and \Cref{lem: closed loop system perturbation}. 
    Since $\nu(K) > 1$, we get the result. 
\end{proof}

\begin{lemma}[Gradient Perturbation]
    \label{lem: gradient perturbation}
    Suppose that $J(K, \theta_i) \leq \slack J_{SA}(K)$ for $i = 1,\dots, M$. Let $\nabla J(K, \theta_i)$ be the gradient of the cost function of $i$th system with the controller $K$. It holds for $i,j\in\curly{1, \cdots, M}$ that
    \begin{align*}
        \sum_{j=1, j\neq i}^{M}\norm{\nabla_K J(K, \theta_i) - \nabla_K J(K, \theta_j)}_F \leq \paren{M_i^1 + M_i^3\norm{\nabla_KJ(K, \theta_i)}}\nu(K)\het + \paren{M_i^2+ M_i^3\norm{\nabla_KJ(K, \theta_i)}}\zeta(K)\het^2.
    \end{align*}
    where
    \begin{align*}
        M_i^1 &= MJ_{SA}(K)J(K, \theta_i) + M\tau_B\slack J_{SA}(K)^2+ 2M\tau_B\slack^{3/2} J_{SA}(K)^{5/2} J(K, \theta_i), \\
        M_i^2 &= 2M\tau_B\slack^{3/2} J_{SA}(K)^{5/2}J(K, \theta_i), \\
        M_i^3 &= 2M \slack^{1/2} J_{SA}(K)^{3/2}.
    \end{align*}
\end{lemma}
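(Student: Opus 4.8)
The plan is to bound $\norm{\nabla_K J(K,\theta_i) - \nabla_K J(K,\theta_j)}_F$ for a single pair $(i,j)$ using the product structure $\nabla_K J(K,\theta_i) = E_K^i \Sigma_K^i$ from \eqref{eq: gradient of the cost function}, and then sum over $j \neq i$. First I would add and subtract a cross term, writing
\begin{align*}
    E_K^i \Sigma_K^i - E_K^j \Sigma_K^j = (E_K^i - E_K^j)\Sigma_K^i + E_K^j (\Sigma_K^i - \Sigma_K^j),
\end{align*}
so that by the triangle inequality and submultiplicativity,
\begin{align*}
    \norm{\nabla_K J(K,\theta_i) - \nabla_K J(K,\theta_j)}_F \leq \norm{E_K^i - E_K^j}_F \norm{\Sigma_K^i} + \norm{E_K^j}_F \norm{\Sigma_K^i - \Sigma_K^j}.
\end{align*}
Then I would invoke \Cref{lem: EK perturbation} for $\norm{E_K^i - E_K^j}$, \Cref{lem: state correlation matrix perturbation} for $\norm{\Sigma_K^i - \Sigma_K^j}$, and \Cref{lem: Cost Bound} for $\norm{\Sigma_K^i} \leq J(K,\theta_i)$ and $\norm{E_K^j}_F \leq \norm{\nabla_K J(K,\theta_j)}_F$.

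The second step is to eliminate all per-sample quantities $J(K,\theta_j)$ in favor of $J_{SA}(K)$ and $J(K,\theta_i)$, using the hypothesis $J(K,\theta_j) \leq \slack J_{SA}(K)$ for every $j$ (and also $J(K,\theta_i) \leq \slack J_{SA}(K)$ where a crude bound suffices, but keeping an explicit $J(K,\theta_i)$ factor where the statement demands it). Each term coming out of the helper lemmas is a monomial in $J(K,\theta_i)$, $J(K,\theta_j)$, $\het$, $\nu(K)$, $\zeta(K)$, and $\tau_B$; replacing $J(K,\theta_j)$ by $\slack J_{SA}(K)$ and collecting the $\het^1$ and $\het^2$ contributions separately produces coefficients matching $M_i^1, M_i^2, M_i^3$ up to the factor $M$ that appears after summing the $M-1 \le M$ terms over $j$. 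In particular the $\norm{\nabla_K J(K,\theta_j)}_F$ factor from $\norm{E_K^j}_F$ must be handled: I would bound $\norm{\nabla_K J(K,\theta_j)}_F$ crudely via \Cref{lem: Cost Bound}(3) and the cost-to-power bounds to get the $M_i^3$-type term proportional to $\slack^{1/2} J_{SA}(K)^{3/2}$, while the genuinely $i$-dependent gradient term $\norm{\nabla_K J(K,\theta_i)}$ in the statement arises from the $\norm{E_K^i - E_K^j}_F$ branch where $\norm{E_K^i}$ can be kept as $\norm{\nabla_K J(K,\theta_i)}$-scale; actually re-reading the claimed $M_i^1, M_i^2, M_i^3$, the gradient dependence multiplies $M_i^3$ on both the $\het$ and $\het^2$ lines, so I would instead route the $\norm{E_K^j}_F \norm{\Sigma_K^i - \Sigma_K^j}$ term through $\norm{E_K^j}_F \le \norm{E_K^i}_F + \norm{E_K^i - E_K^j}_F$ — no, more simply, bound $\norm{E_K^j}_F$ by a term of order $\norm{\nabla_K J(K,\theta_i)}$ plus the perturbation, which closes the bookkeeping.

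The main obstacle is precisely this last accounting: tracking which terms carry the explicit $\norm{\nabla_K J(K,\theta_i)}$ factor versus which get absorbed into the constants $M_i^1, M_i^2$, and verifying that after using $J(K,\theta_j) \le \slack J_{SA}(K)$ every monomial's exponent of $J_{SA}(K)$, $\slack$, and $\tau_B$ lands exactly on the stated values ($J_{SA}(K)^{5/2}\slack^{3/2}$ for the dominant terms, $J_{SA}(K)^2 \slack$ and $J_{SA}(K)^{3/2}\slack^{1/2}$ for the others). This is purely mechanical but error-prone; everything analytic has already been done in Lemmas~\ref{lem: closed loop system perturbation}--\ref{lem: EK perturbation}, so no new idea is required — the proof is "apply the three perturbation lemmas, substitute the heterogeneity-free cost bound, sum over $j$, and collect monomials."
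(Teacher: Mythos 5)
Your overall strategy---split the difference of products $E_K^i\Sigma_K^i - E_K^j\Sigma_K^j$ into a cross term, invoke \Cref{lem: EK perturbation}, \Cref{lem: state correlation matrix perturbation}, and \Cref{lem: Cost Bound}, then sum over $j$---is the same as the paper's, but your particular decomposition does not land on the stated constants, and the patch you gesture at does not close the gap. The paper writes
\[
E_K^i\Sigma_K^i - E_K^j\Sigma_K^j = E_K^i\Sigma_K^i\paren{(\Sigma_K^i)^{-1}(\Sigma_K^i - \Sigma_K^j)} + \paren{E_K^i - E_K^j}\Sigma_K^j,
\]
so the first piece is exactly $\norm{\nabla_K J(K,\theta_i)}_F$ times the \emph{relative} perturbation $\norm{(\Sigma_K^i)^{-1}(\Sigma_K^i-\Sigma_K^j)}\le 2J(K,\theta_j)^{3/2}(\nu(K)\het+\zeta(K)\het^2)$ (the analogue of \eqref{eq: SigmaK_inv_SigmaKstar - SigmaK}). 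This accomplishes two things at once: the gradient factor is the $i$-th one for every $j$ in the sum (which is what the downstream self-bounding argument in \Cref{lem: remainder by controller gap} needs), and the preconditioning by $(\Sigma_K^i)^{-1}$ removes the $\norm{\Sigma_K^i}\le J(K,\theta_i)$ factor, giving $M_{ij}^3 = 2J(K,\theta_j)^{3/2}$. Your split instead produces $\norm{E_K^j}_F\norm{\Sigma_K^i-\Sigma_K^j}$, which (a) carries the $j$-th gradient, and (b) uses the absolute bound $\norm{\Sigma_K^i-\Sigma_K^j}\le 2J(K,\theta_i)J(K,\theta_j)^{3/2}(\nu(K)\het+\zeta(K)\het^2)$, inflating the coefficient of the gradient term by a factor $J(K,\theta_i)$. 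Your proposed fix, $\norm{E_K^j}_F\le\norm{E_K^i}_F+\norm{E_K^i-E_K^j}_F$, multiplies an $O(\het)+O(\het^2)$ quantity by another, generating $\het^3$ and $\het^4$ terms that do not fit the claimed two-term form and cannot be absorbed without an additional smallness assumption on $\het$ that the lemma does not make.

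The second concrete issue is the summation over $j$. You propose ``replacing $J(K,\theta_j)$ by $\slack J_{SA}(K)$'' termwise, which for $\sum_{j\neq i}2J(K,\theta_j)^{3/2}$ yields $2M\slack^{3/2}J_{SA}(K)^{3/2}$, not the stated $M_i^3 = 2M\slack^{1/2}J_{SA}(K)^{3/2}$. The paper instead uses the exact identity $\sum_{j}J(K,\theta_j)= MJ_{SA}(K)$ for one full power and bounds only the leftover fractional power by $\max_j J(K,\theta_j)^{1/2}\le\slack^{1/2}J_{SA}(K)^{1/2}$ (this is the ``Cauchy--Schwarz'' step in the paper's proof); the same device produces the exponents of $\slack$ in $M_i^1$ and $M_i^2$. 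Neither issue requires a new idea to repair---you need the $(\Sigma_K^i)^{-1}$-preconditioned decomposition and the split-power summation---but as written your proof establishes a strictly weaker bound with a different structure than the lemma claims.
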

\begin{proof}
    Since $\nabla_K J(K, \theta_i) = E_K^i\Sigma_K^i$, 
    \begin{align*}
        \norm{\nabla_K J(K, \theta_i) - \nabla_K J(K, \theta_j)}_F &= \norm{E_K^i\Sigma_K^i - E_K^j\Sigma_K^j}_F  \leq \norm{E_K^i\Sigma_K^i\paren{(\Sigma_K^i)^{-1}(\Sigma_K^i - \Sigma_K^j)}}_F + \norm{\paren{E_K^i - E_K^j}\Sigma_K^j}_F \\
        &\leq \norm{\nabla_K J(K, \theta_i)}_F\norm{(\Sigma_K^i)^{-1}(\Sigma_K^i - \Sigma_K^j)} + \norm{E_K^i -E_K^j}\norm{\Sigma_K^j}_F.
    \end{align*}
    By applying \Cref{lem: Cost Bound}, \Cref{lem: EK perturbation} and \eqref{eq: SigmaK_inv_SigmaKstar - SigmaK}, we get
    \begin{align*}
        &S_{ij} = \norm{\nabla_K J(K, \theta_i) - \nabla_K J(K, \theta_j)}_F \leq \paren{M_{ij}^1 + M_{ij}^3\norm{\nabla_K J(K, \theta_i)}}\nu(K)\het + \paren{M_{ij}^2 + M_{ij}^3\norm{\nabla_K J(K, \theta_i)}}\zeta(K)\het^2,
    \end{align*}
    where
        $M_{ij}^1 = \paren{J(K, \theta_i) + \tau_BJ(K, \theta_j) + 2\tau_BJ(K, \theta_i)J(K, \theta_j)^{3/2}}J(K, \theta_j), \, 
        M_{ij}^2 = 2\tau_BJ(K, \theta_i)J(K, \theta_j)^{5/2}, \, M_{ij}^3 = 2J(K, \theta_j)^{3/2}.$
    Then by summing over $j \in \curly{1,\dots,M}\setminus\curly{i}$ and applying the Cauchy-Schwarz inequality concludes the proof. 
\end{proof}

\begin{lemma}[Controller Gap]
    \label{lem: controller gap}
    Let $K_i, K_i^\star$ be arbitrary and optimal controllers for the system with the cost $J(K, \theta_i)$ for $i \in\curly{1, \cdots, M}$. Then it holds that
        $\norm{K_i - K_i^\star}_F \leq \sqrt{J(K_i^\star, \theta_i)}\norm{\nabla J(K_i, \theta_i)}_F$.
\end{lemma}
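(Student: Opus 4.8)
The plan is to reduce this single-system statement to the gradient domination result of \citet{fazel2018global} together with a bound relating the closed-loop perturbation $K_i - K_i^\star$ to the gradient $\nabla_K J(K_i, \theta_i)$. Concretely, I would start from the identity established in \citep[Lemma 12]{fazel2018global}, which for a single system $\theta_i$ reads
\begin{align*}
    J(K_i, \theta_i) - J(K_i^\star, \theta_i) = \trace\paren{(K_i - K_i^\star)^\top \paren{R + B(\theta_i)^\top P_{K_i^\star}^i B(\theta_i)}(K_i - K_i^\star)} + 2\trace\paren{(K_i - K_i^\star)^\top E_{K_i^\star}^i}.
\end{align*}
Since $K_i^\star$ is optimal, $E_{K_i^\star}^i = 0$, so the cross term vanishes and, using $R = I$ and $P_{K_i^\star}^i \succeq 0$, we get $J(K_i, \theta_i) - J(K_i^\star, \theta_i) \geq \norm{K_i - K_i^\star}_F^2$. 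This gives $\norm{K_i - K_i^\star}_F \leq \sqrt{J(K_i, \theta_i) - J(K_i^\star, \theta_i)}$.

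Next I would invoke the gradient domination property for a single LQR system from \citet{fazel2018global}: $J(K_i, \theta_i) - J(K_i^\star, \theta_i) \leq \norm{\Sigma_{K_i^\star}^i} \norm{\nabla_K J(K_i, \theta_i)}_F^2$. Combining these two displays yields $\norm{K_i - K_i^\star}_F \leq \sqrt{\norm{\Sigma_{K_i^\star}^i}}\,\norm{\nabla_K J(K_i, \theta_i)}_F$. Finally, by item 2 of \Cref{lem: Cost Bound}, $\norm{\Sigma_{K_i^\star}^i} \leq J(K_i^\star, \theta_i)$, which gives exactly $\norm{K_i - K_i^\star}_F \leq \sqrt{J(K_i^\star, \theta_i)}\,\norm{\nabla_K J(K_i, \theta_i)}_F$, as claimed.

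The only mild subtlety — and the step I would treat most carefully — is confirming that the quadratic lower bound $J(K_i, \theta_i) - J(K_i^\star, \theta_i) \geq \norm{K_i - K_i^\star}_F^2$ is valid in the present normalization (it relies on $R = I$, which the problem formulation assumes, and on $P_{K_i^\star}^i \succeq 0$, which holds since $Q \succeq I \succ 0$). Everything else is a direct chaining of cited results, so I expect no real obstacle; the argument is essentially a one-system specialization of the machinery already developed in the paper, and it parallels the cost-gap-to-controller-gap estimate used in the proof of \Cref{lem: remainder bound}.
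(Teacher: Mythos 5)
Your proof is correct and takes essentially the same route as the paper: lower-bound the cost gap by $\norm{K_i-K_i^\star}_F^2$ via the cost-difference lemma of \citet{fazel2018global} with $E_{K_i^\star}^i=0$, upper-bound it by single-system gradient domination, and replace $\norm{\Sigma_{K_i^\star}^i}$ by $J(K_i^\star,\theta_i)$ via \Cref{lem: Cost Bound}. One small correction: the quadratic form in that cost-difference identity carries a $\Sigma_{K_i}^i$ weighting (the paper writes it as $\Sigma_{K_i}^i \kron (R+B(\theta_i)^\top P_{K_i^\star}^i B(\theta_i)) \succeq I$), so your lower bound also needs $\Sigma_{K_i}^i \succeq \Sigma_w = I$, not just $R=I$ and $P_{K_i^\star}^i\succeq 0$; this holds here, so the conclusion is unaffected.
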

\begin{proof}
    From Lemma 12 in \citep{fazel2018global}, 
    \begin{align*}
        J(K, \theta_i) - J(K_i^\star, \theta_i) 
        &= \VEC(K_i-K_i^\star)^\top\paren{\Sigma_{K_i}^i \kron (R + B(\theta_i)^\top P^i_{K_i^\star}B(\theta_i))}\VEC(K_i-K_i^\star),
    \end{align*}
    where the term linear in $K-K_i^\star$ disappears since $E^i_{K_i^\star} = (R + B(\theta_i)^\top P_{K_i^\star}^iB(\theta_i))K^\star - B(\theta_i)^\top P^i_{K_i^\star}A(\theta_i) = 0$. 
    Since $\Sigma_K^i \kron (R + B(\theta_i)^\top P^i_{K^\star}B(\theta_i)) \succeq I$, 
         It holds that $\norm{K_i-K_i^\star}_F^2 \leq J(K_i, \theta_i) - J(K_i^\star, \theta_i).$
    Now consider the gradient domination condition for $J(K, \theta_i)$. 
    From Lemma 11 in \citep{fazel2018global}, 
    $J(K_i, \theta_i) - J(K_i^\star, \theta_i) \leq \norm{\Sigma_{K_i^\star}^i}\norm{\nabla J(K_i, \theta_i)}^2_F.$
    By applying \Cref{lem: Cost Bound} we get the result.
\end{proof}

\begin{lemma}[Cost bound with hetergeneity]
    \label{lem: cost bound with heterogeneity}
    Suppose Assumption \ref{asmp: heterogeneity} holds. Consider any $\theta, \tilde \theta \in \calS$, and let $K(\theta), K(\tilde \theta)$ be the corresponding LQR controller. The following inequalities are satisfied:
    \begin{enumerate}
        \item $J(K(\theta), \tilde \theta) \leq 2 J(K(\tilde\theta), \tilde \theta)$.
        \item $J(K(\theta), \tilde \theta) \leq 3 J(K(\theta), \theta)$.
    \end{enumerate}
\end{lemma}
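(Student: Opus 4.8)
The plan is to prove part~(2) first---in fact with a constant far smaller than $3$---and then obtain part~(1) by chaining the sharpened part~(2) with the optimality of the LQR controllers.

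\textbf{Part (2).} Since $\Sigma_w = I$, $J(K,\theta) = \trace(P_K^\theta)$; write $c = J(K(\theta),\theta) = \trace(P_{K(\theta)}^\theta)$ and $x = J(K(\theta),\tilde\theta)$. First I would record that, because the bound of \Cref{asmp: heterogeneity} places $\varepsilon_{\mathsf{HET}}$ below an inverse-polynomial threshold in $\norm{P_{K(\theta)}^\theta}$ (with room to spare), the controller $K(\theta)$ also stabilizes $\tilde\theta$, so that $x<\infty$ and $P_{K(\theta)}^{\tilde\theta}$ is well defined. By \Cref{lem: Cost Bound}, $\norm{K(\theta)}^2 \leq \norm{P_{K(\theta)}^\theta} \leq c$, hence $\nu(K(\theta)) \leq 2\sqrt{c}$ and $\zeta(K(\theta)) \leq 2c$. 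The heterogeneity parameter of the pair $(\theta,\tilde\theta)$ is at most $\varepsilon_{\mathsf{HET}}$, so applying \Cref{lem: Lyapunov Perturbation} with first index $\tilde\theta$ and second index $\theta$, and then using $\trace(X) \leq \dx\norm{X}$ together with $\dx \leq \trace(Q) \leq c$, gives
\[
    x - c = \trace\!\paren{P_{K(\theta)}^{\tilde\theta} - P_{K(\theta)}^\theta} \leq \paren{4 c^{3}\varepsilon_{\mathsf{HET}} + 4 c^{7/2}\varepsilon_{\mathsf{HET}}^{2}}\, x .
\]
By \Cref{asmp: heterogeneity}, $\varepsilon_{\mathsf{HET}} \leq 1/(5\cdot 10^{5}\,\tau_B\, c^{11/2})$ with $\tau_B \geq 1$, so the parenthesized factor is at most $b = 8/(5\cdot 10^{5})$, giving $x \leq c/(1-b)$. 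This is far below $3c$ and establishes part~(2) with the sharper constant $1/(1-b)$, which we write as $1+\delta$ for a tiny $\delta$.

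\textbf{Part (1).} I would apply this sharpened form of part~(2) twice: once directly, $J(K(\theta),\tilde\theta) \leq (1+\delta)J(K(\theta),\theta)$, and once with $\theta$ and $\tilde\theta$ interchanged---legitimate because \Cref{asmp: heterogeneity} holds uniformly over $\calS$---which gives $J(K(\tilde\theta),\theta) \leq (1+\delta)J(K(\tilde\theta),\tilde\theta)$. Combined with the fact that $K(\theta)$ minimizes $J(\cdot,\theta)$, this chains to
\[
    J(K(\theta),\tilde\theta) \leq (1+\delta)J(K(\theta),\theta) \leq (1+\delta)J(K(\tilde\theta),\theta) \leq (1+\delta)^{2}J(K(\tilde\theta),\tilde\theta),
\]
and since $(1+\delta)^{2} \leq 2$, part~(1) follows.

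\textbf{The hard part.} The delicate step is the self-referential estimate behind part~(2): \Cref{lem: Lyapunov Perturbation} carries the quantity $x$ we are bounding on its right-hand side, so the argument only closes once we know a priori that $x<\infty$, i.e.\ once stability of $\tilde\theta$ under $K(\theta)$ is guaranteed. Verifying that the $11/2$ power in \Cref{asmp: heterogeneity} is sufficient for this, and tracking the dimension factor $\dx$ so that the parenthesized coefficient is genuinely below (say) $1/2$, is where the care lies; the rest is routine bookkeeping once the inverse-polynomial dependence of $\varepsilon_{\mathsf{HET}}$ on the optimal cost $\trace(P_{K(\theta)}^\theta)$ is exploited.
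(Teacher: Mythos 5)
Your route is genuinely different from the paper's. The paper proves part (1) directly by invoking an external certainty-equivalence perturbation result (Theorem 5 of the cited Simchowitz--Foster paper) and then obtains part (2) from part (1) together with another clause of that same theorem; you reverse the order, establishing part (2) from the paper's own \Cref{lem: Lyapunov Perturbation} and then deriving part (1) by applying part (2) in both directions and sandwiching with the optimality of $K(\theta)$ for $J(\cdot,\theta)$. Your arithmetic checks out: with $c = \trace(P_{K(\theta)}^{\theta}) \geq \dx$ and $c\geq 1$, the perturbation bound does give $x - c \leq (4c^{3}\varepsilon_{\mathsf{HET}} + 4c^{7/2}\varepsilon_{\mathsf{HET}}^{2})\,x$, and the $11/2$ exponent in \Cref{asmp: heterogeneity} makes the coefficient minuscule, so your constants $1+\delta$ and $(1+\delta)^{2} \leq 2$ are in fact sharper than the stated $3$ and $2$. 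The exchange of $\theta$ and $\tilde\theta$ in part (1) is legitimate because the heterogeneity bound is required to hold for every $\theta\in\calS$. Being self-contained in the paper's own lemmas, rather than importing an external theorem, is a real advantage of your route.

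The one genuine gap is the step you yourself flag: the inequality $x - c \leq b\,x$ with $b<1$ only yields $x \leq c/(1-b)$ if $x$ is already known to be finite, i.e.\ if $K(\theta)$ stabilizes $\tilde\theta$. You assert this ``with room to spare'' but never prove it, and it cannot be read off \Cref{asmp: heterogeneity} by inspection; without it the self-referential estimate is vacuous. The standard repair is a continuity (homotopy) argument: interpolate the system matrices, $\bmat{A_s & B_s} = (1-s)\bmat{A(\theta) & B(\theta)} + s\bmat{A(\tilde\theta) & B(\tilde\theta)}$, note that the heterogeneity between the interpolant and $\theta$ is at most $s\,\varepsilon_{\mathsf{HET}}$, and show that the set of $s\in[0,1]$ on which $K(\theta)$ stabilizes $(A_s,B_s)$ with cost at most $2c$ is nonempty, open (continuity of the spectral radius and of the cost on the stabilizing set), and closed (your perturbation inequality supplies the uniform a priori bound $c/(1-b) < 2c$ on that set, keeping the interpolated systems away from the boundary of instability), hence equal to all of $[0,1]$. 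The paper sidesteps this entirely because the cited theorem packages the stability guarantee together with the cost bound; for your more elementary derivation to stand on its own, this homotopy step must be written out.
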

\begin{proof}
    By the fact that $\het \leq \frac{1}{54 \norm{P_{K(\theta)}^\theta}^5}$, the first fact follows from \citet{simchowitz2020naive}, Theorem 5 part two. Similarly, the second fact follows from the first fact in addition to \citet{simchowitz2020naive}, Theorem 5 part four. 
    
\end{proof}

\subsection{Proof of \Cref{lem: scenario boundedness}}
Consider any $K$ such that $J_{SA}(K) \leq 8 J_{SA}(K^\star_{SA})$. For some $\tilde \theta \in S$, it holds that $J(K, \tilde \theta) \leq J_{SA}(K) \leq 8 J_{SA}(K^\star_{SA}).$ 

By \Cref{lem: cost bound with heterogeneity}, $J(K(\theta), \tilde \theta) \leq 3J(K(\theta), \theta)$ for $\theta, \tilde \theta \in \calS$. Therefore any controller $K$ satisfying $J_{SA}(K) \leq 8 J_{SA}(K^\star_{SA})$ also satisfies $J(K, \tilde\theta) \leq 24J_{SA}(K(\theta), \theta)$.

Furthermore, from \Cref{lem: Lyapunov Perturbation}, it holds for any $j\in\curly{1, \cdots, M}$ and some $\tilde\theta$ that
\begin{align*}
    \trace\paren{P_K^{\theta_j} - P_K^{\tilde\theta}} \leq 2\trace\paren{P_K^{\theta_j}}\norm{P_K^{\tilde\theta}}\paren{\norm{P_K^{\tilde\theta}}^{1/2}(1 + \norm{K})\het + (1 + \norm{K}^2)\het^2} 
    \leq 8\trace\paren{P_K^{\theta_j}}\norm{P_K^{\tilde\theta}}^2\het.
\end{align*}
From \Cref{lem: cost bound with heterogeneity}, it follows that $\het\leq\frac{1}{16\cdot24^2 \cdot\trace\paren{P_{K(\theta)}^\theta}^5}\leq 
\frac{1}{16\cdot 24^2 \trace\paren{P_{K(\theta)}^\theta}^2}$.
Then by substituting the above bound on $J_{SA}(K(\theta), \theta)$, $\het$ satisfies $\het\leq\frac{1}{16\norm{P_K^{\tilde\theta}}^2}$, resulting in $J(K, \theta_j) \leq 2J(K, \tilde\theta) \leq 2J_{SA}(K)$.

\subsection{Uniform bounds}

\begin{lemma}
    \label{lem: uniform bound}
    Let $\theta \in \calS$. Under Assumption~\ref{asmp: heterogeneity}, it holds that for any $\tilde \theta \in \calS$
    \begin{itemize}
        \item $J(K_{DR}^\star, \tilde \theta) \leq 6 J(K(\theta), \theta)$.
        \item $J(\tilde K, \tilde \theta) \leq 48 J(K(\theta), \theta)$.
    \end{itemize}
\end{lemma}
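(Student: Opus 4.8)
Write $J^\star_\theta \triangleq J(K(\theta),\theta)$; since $\Sigma_w = I$ we have $J^\star_\theta = \trace(P^\theta_{K(\theta)})$, so Assumption~\ref{asmp: heterogeneity} reads $\varepsilon_{\mathsf{HET}} \leq (5\cdot 10^5\, \tau_B (J^\star_\theta)^{11/2})^{-1}$. The plan rests on two ingredients. The first is a \emph{reference-cost bound}: for every $\theta' \in \calS$, part~2 of \Cref{lem: cost bound with heterogeneity} gives $J(K(\theta),\theta') \leq 3 J^\star_\theta$, and since $\Theta$ is supported in $\calS$ and $\theta_1,\dots,\theta_M \in \calS$ this yields $J_{DR}(K(\theta)) \leq 3 J^\star_\theta$ and $J_{SA}(K(\theta)) \leq 3 J^\star_\theta$; by optimality $J_{DR}(K_{DR}^\star) \leq 3 J^\star_\theta$ and $J_{SA}(K_{SA}^\star) \leq 3 J^\star_\theta$. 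The second is a \emph{transfer step}: I claim that whenever a controller $K$ stabilizes some $\theta^\circ \in \calS$ with $\norm{P_K^{\theta^\circ}}^2 \leq (16\,\varepsilon_{\mathsf{HET}})^{-1}$, then $K$ also stabilizes every $\tilde\theta \in \calS$ and $J(K,\tilde\theta) \leq 2 J(K,\theta^\circ)$. This is exactly the perturbation computation in the proof of \Cref{lem: scenario boundedness}: combining \Cref{lem: Lyapunov Perturbation} with \Cref{lem: Cost Bound} gives $\trace(P_K^{\tilde\theta}) - \trace(P_K^{\theta^\circ}) \leq 8\,\trace(P_K^{\tilde\theta})\norm{P_K^{\theta^\circ}}^2\,\het \leq \frac{1}{2}\trace(P_K^{\tilde\theta})$ since $\het \leq \varepsilon_{\mathsf{HET}} \leq (16\norm{P_K^{\theta^\circ}}^2)^{-1}$, whence $\trace(P_K^{\tilde\theta}) \leq 2\trace(P_K^{\theta^\circ})$, i.e. $J(K,\tilde\theta) \leq 2 J(K,\theta^\circ)$.

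For the first bullet I would proceed as follows. Since $\E_\Theta J(K_{DR}^\star,\Theta) = J_{DR}(K_{DR}^\star) \leq 3 J^\star_\theta$, the mean is at least the essential infimum, so there is $\theta^\circ \in \calS$ with $J(K_{DR}^\star,\theta^\circ) \leq 3 J^\star_\theta$. By \Cref{lem: Cost Bound}, $\norm{P_{K_{DR}^\star}^{\theta^\circ}} \leq J(K_{DR}^\star,\theta^\circ) \leq 3 J^\star_\theta$, and $(3 J^\star_\theta)^2 \leq (16\,\varepsilon_{\mathsf{HET}})^{-1}$ follows from Assumption~\ref{asmp: heterogeneity} together with $\tau_B \geq 1$ and $J^\star_\theta \geq 1$ (indeed $144 < 5\cdot 10^5$). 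The transfer step then gives $J(K_{DR}^\star,\tilde\theta) \leq 2 J(K_{DR}^\star,\theta^\circ) \leq 6 J^\star_\theta$ for every $\tilde\theta \in \calS$.

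For the second bullet, \Cref{thm: stabilization by progressive discounting} (together with the monotonicity of $J_{SA}$ along \eqref{eq: grad update} from \Cref{lem: fixed point for gradient}) ensures the returned $\tilde K$ satisfies $J_{SA}(\tilde K) \leq 8 J_{SA}(K_{SA}^\star) \leq 24 J^\star_\theta$. Choosing $\theta^\circ \in \curly{\theta_1,\dots,\theta_M}$ to be a sample with below-average cost gives $J(\tilde K,\theta^\circ) \leq J_{SA}(\tilde K) \leq 24 J^\star_\theta$, hence $\norm{P_{\tilde K}^{\theta^\circ}} \leq 24 J^\star_\theta$, which satisfies the transfer hypothesis since $(24 J^\star_\theta)^2 \leq (16\,\varepsilon_{\mathsf{HET}})^{-1}$ (here $9216 < 5\cdot 10^5$). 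The transfer step yields $J(\tilde K,\tilde\theta) \leq 2 J(\tilde K,\theta^\circ) \leq 48 J^\star_\theta$ for every $\tilde\theta \in \calS$.

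The main obstacle is the transfer step: the Lyapunov perturbation bound is nominally circular, since it controls $\trace(P_K^{\tilde\theta})$ in terms of itself and presupposes that $K$ stabilizes $\tilde\theta$. I expect to resolve this exactly as in the proof of \Cref{lem: scenario boundedness} — the algebraic fact that $x - a \leq \frac{1}{2}x$ forces $x \leq 2a$, made rigorous by a short continuity argument along $\calS$ — and then the remaining work, verifying that the constant $5\cdot 10^5$ and exponent $11/2$ in Assumption~\ref{asmp: heterogeneity} dominate the $(24 J^\star_\theta)^2$-type thresholds, is routine since $J^\star_\theta \geq 1$ and $\tau_B \geq 1$ leave ample slack.
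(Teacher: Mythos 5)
Your proposal is correct and follows essentially the same route as the paper: bound $J_{DR}(K_{DR}^\star)$ (resp.\ $J_{SA}(\tilde K)$) by $3J(K(\theta),\theta)$ (resp.\ $24 J(K(\theta),\theta)$) via \Cref{lem: cost bound with heterogeneity} and optimality, extract a single scenario $\theta^\circ$ achieving at most the average cost, and then transfer to all $\tilde\theta\in\calS$ with a factor of $2$ via the Lyapunov perturbation argument from the proof of \Cref{lem: scenario boundedness}. The paper's own proof is a two-line sketch of exactly this; your write-up simply makes explicit the threshold checks on $\varepsilon_{\mathsf{HET}}$ and the self-bounding step $\trace(P_K^{\tilde\theta}) \leq 2\trace(P_K^{\theta^\circ})$ that the paper leaves implicit.
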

\begin{proof}
    For the first fact , note that there exists some $\theta'$ such that $J(K_{DR}^\star,\theta') \leq J_{DR}(K_{DR}^\star) \leq J_{DR}(K(\theta)) \leq 3J(K(\theta), \theta)$ by \Cref{lem: cost bound with heterogeneity}. Then by the same Lyapunov perturbation argument as in the proof of \Cref{lem: scenario boundedness}, it follows that $J(K_{DR}^\star, \tilde \theta) \leq 6 J(K(\theta), \theta)$. For the second fact, we follow the same argument, but with the bound $J_{SA}(\tilde K) \leq 8 J_{SA}(K_{SA}^\star)$.
\end{proof}

\subsection{Proof of \Cref{lem: remainder by controller gap}}
\begin{proof}
    It holds by \Cref{lem: scenario boundedness} that the collection $\theta_1, \dots, \theta_M$ is $(B, \slack)$-bounded with $B = 8$ and $\slack = 2$. By the assumption that $J_{SA}(K) \leq 8 J_{SA}(K^\star_{SA})$, it holds that $J(K,\theta_i) \leq \slack J_{SA}(K)$ for $i \in\curly{1,\dots, M}$. 

    The remainder term is bounded as follows:
    \begin{align}
        \norm{\mathsf{R}(K, K_{SA}^\star)}_F &= \frac{1}{M} \sum_{i=1}^M E_K^i\Sigma_K^i\paren{(\Sigma_K^i)^{-1}(\Sigma_{K_{SA}^\star}^i - \Sigma_K^i)} 
        \leq \frac{1}{M} \sum_{i=1}^M\norm{\nabla_KJ(K, \theta_i)}_F\norm{(\Sigma_K^i)^{-1}(\Sigma_{K_{SA}^\star}^i - \Sigma_K^i)}, \label{eq:R}
    \end{align}
    Since it holds from the proof in Lemma 3 in \citep{fujinami2025domainrandomizationsampleefficient} that 
    \begin{align*}
        \Sigma_{K_{SA}^\star}^i - \Sigma_K^i &\preceq \Sigma_K^i\norm{\Sigma_{K_{SA}^\star}^i}\paren{2\norm{A(\theta_i)-B(\theta_i)K}\norm{B(\theta_i)}\norm{K-K_{SA}^\star} + \norm{B(\theta_i)}^2\norm{K - K_{SA}^\star}^2},
    \end{align*}
    we get that
    \begin{align}
        \norm{(\Sigma_K^i)^{-1}(\Sigma_{K_{SA}^\star}^i - \Sigma_K^i)}& = \sqrt{\norm{(\Sigma_K^i)^{-1}(\Sigma_{K_{SA}^\star}^i - \Sigma_K^j)^2(\Sigma_K^i)^{-1}}^2} \nonumber  \\
        &\leq \norm{\Sigma_{K_{SA}^\star}^i}\paren{2\norm{A(\theta_i)-B(\theta_i)K}\norm{B(\theta_i)}\norm{K-K_{SA}^\star} + \norm{B(\theta_i)}^2\norm{K - K_{SA}^\star}^2}, \label{eq: SigmaK_inv_SigmaKstar - SigmaK}
    \end{align}
    for $i\in\curly{1, \cdots, M}$.
    To simplify this reduction and substitute it into \eqref{eq: SigmaK_inv_SigmaKstar - SigmaK}, we must show that $\norm{K-K_{SA}^\star}$ becomes small. To this end, note that if $K_i^\star$ is the optimal policy for system $i$, then,
\begin{equation}
\begin{aligned}
    \norm{K-K_{SA}^\star} &= \norm{K - K_{SA}^\star + \sum_{i=1}^M\frac{1}{M}(K_i^\star - K_i^\star)} \leq \frac{1}{M}\sum_{i=1}^M\paren{\norm{K - K_i^\star} + \norm{K_{SA}^\star - K_i^\star}} \\
    &\leq \frac{1}{M}\sum_{i=1}^M\sqrt{J(K_i^\star, \theta_i)}\paren{\norm{\nabla_K J(K, \theta_i)}_F + \norm{\nabla_K J(K_{SA}^\star, \theta_i)}_F} \quad \mbox{(\Cref{lem: controller gap})}\\
    &\leq \frac{\sqrt{\slack J_{SA}(K)}}{M}\sum_{i=1}^M\paren{\norm{\nabla_K J(K, \theta_i)}_F + \norm{\nabla_K J(K_{SA}^\star, \theta_i)}_F}. \quad (\sqrt{J(K_i^\star, \theta_i)} \leq \sqrt{J(K, \theta_i)} \leq \sqrt{\slack J_{SA}(K)})
    \label{eq:K-Kstar mid}
\end{aligned}
\end{equation}
We will show that under the provided heterogeneity conditions, the above quantity becomes small as $\norm{\nabla_K J_{SA}(K)} \to 0$. 
In particular,  we have
\begin{align*}
    \norm{\nabla_K J(K, \theta_i)}_F &= \norm{\nabla_K J(K, \theta_i) + \frac{1}{M}\sum_{j=1, j\neq i}^M\paren{\nabla_K J(K, \theta_j) - \nabla_K J(K, \theta_j)}}_F \\
    &= \norm{\frac{1}{M} \sum_{i=1}^M\nabla_K J(K, \theta_i) + \frac{1}{M}\sum_{j=1, j\neq i}^M\paren{\nabla_K J(K, \theta_i) - \nabla_K J(K, \theta_j)}}_F \\
    &\leq \frac{1}{M}\paren{M\norm{\nabla_KJ_{SA}(K)}_F + \sum_{j=1,j\neq i}^M\norm{\nabla_K J(K, \theta_i) - \nabla_K J(K, \theta_j)}_F} \\
    &\leq \frac{1}{M}\paren{M \grad + \paren{M_i^1 + M_i^3\norm{\nabla_KJ(K, \theta_i)}}\nu(K)\het + \paren{M_i^2+ M_i^3\norm{\nabla_KJ(K, \theta_i)}}\zeta(K)\het^2}.
\end{align*}
Note where $\grad \triangleq \norm{\nabla_K J_{SA}(K)}_F$. 
 and the second term is characterized by a quadratic function of the heterogeneity with the coefficients defined in \Cref{lem: gradient perturbation}.
By grouping the term, we get
\begin{align*}
    \norm{\nabla J(K, \theta_i)}_F \leq \frac{M \grad + M_i^1\nu(K)\het + M_i^2\zeta(K)\het^2}{M - M_i^3(\nu(K)\het + \zeta(K)\het^2)} \leq \frac{2}{M}\paren{M \grad + M_i^1\nu(K)\het + M_i^2\zeta(K)\het^2} \triangleq  \Lambda^i(K, \grad, \het),
\end{align*}
where $\Lambda^i(K, \grad, \het)$ is the upper bound on the gradient. The second inequality in the above bound follows from the fact that for $\het \leq \left(8\slack^{3/2} J_{SA}(K)^{3/2}\right)^{-1}$ (demonstrated in \eqref{eq: heterogeneity derivation}), 
\begin{align}
    &M_i^3(\nu(K)\het + \zeta(K)\het^2) = 2M\slack J_{SA}(K)(\nu(K)\het + \zeta(K)\het^2) < \frac{M}{2}. 
\end{align}

Define $\Lambda(K, \grad, \het) \triangleq \frac{1}{M} \sum_{i=1}^M \Lambda^i(\grad, \het)$. It holds by definition of $M_i^1$ and $M_i^2$ along with \Cref{lem: Cost Bound} that 
\begin{align*}
     \Lambda(K, \grad, \het) &= \grad + \paren{(1+\slack \tau_B)J_{SA}(K)^2 + 2 \tau_B \slack^{3/2} J_{SA}(K)^{7/2}} \nu(K) \het + 2 \tau_B \slack^{3/2} J_{SA}(K)^{7/2} \xi(K) \het^2\\
    &\leq \grad + 8 \slack^{3/2} J_{SA}(K)^{4} \het + 4 \tau_B \slack^{3/2} J_{SA}(K)^{9/2} \het^2.
\end{align*} 
Return now to the remainder term. From \eqref{eq:R}, \eqref{eq: SigmaK_inv_SigmaKstar - SigmaK}, and \Cref{lem: Cost Bound}, 
\begin{align*}
    \norm{\mathsf{R}(K, K_{SA}^\star)}_F &\leq \frac{1}{M} \sum_{i=1}^M\norm{\nabla_K J(K, \theta_i)}J(K, \theta_i)^{3/2}\paren{2\tau_B\norm{K-K_{SA}^\star} + \tau_B^2\norm{K-K_{SA}^\star}^2} \\
    &\leq \slack^{3/2} J_{SA}(K)^{3/2} \Lambda(K,\grad,\het) \paren{2\tau_B\norm{K-K_{SA}^\star} + \tau_B^2\norm{K-K_{SA}^\star}^2} \quad \mbox{(gradient bound \& $J(K,\theta_i) \leq \slack J_{SA}(K)$)} \\
    &\leq 2\tau_B\slack^{3/2} J_{SA}(K)^{3/2}\Lambda(K,\grad,\het)\paren{1 + \tau_B \sqrt{\slack J_{SA}(K)} \Lambda(K,\grad,\het)}\norm{K-K_{SA}^\star}. \quad \mbox{(\eqref{eq:K-Kstar mid} \& gradient bound)} 
\end{align*}
For $K \in S$ and $\het \leq \left(5\cdot2^8 \tau_B J_{SA}(K)^{11/2}\right)^{-1},$ we reach the desired conclusion. It is shown below that Assumption~\ref{asmp: heterogeneity} along with the assumption $J_{SA}(K_0) \leq 8\inf_K J_{SA}(K)$ suffice to satisfy this condition on $\het$.   
\begin{equation}
    \label{eq: heterogeneity derivation}
    \begin{aligned}
    \left(5\cdot2^8 \tau_B J_{SA}(K)^{11/2}\right)^{-1} &\geq \left(5\cdot2^{11} \tau_B J_{SA}(K_{SA}^\star)^{11/2}\right)^{-1}
    & \geq \left(5e5 \tau_B J(K^\star_i, \theta_i)^{11/2}\right)^{-1},
    \end{aligned}
\end{equation}
where the final inequality follows by \Cref{lem: cost bound with heterogeneity}.

Since $J_{SA}(K_{SA}^\star)\leq J_{SA}(K)$ and $\Lambda(K_{SA}^\star, \grad, \het))\leq\Lambda(K, \grad, \het)$ hold, the argument follows in similar way and we get
\begin{align*}
    \norm{\mathsf{R}(K_{SA}^\star, K)}_F \leq \frac{1}{4}\norm{K - K_{SA}^\star},
\end{align*}
under the same heterogeneity condition. 

\end{proof}

\subsection{Sim-to-real Experiments Details}
We conduct experiments on a nonlinear, underactuated rotational inverted pendulum defined by
\begin{align*}
    \dot{x} = f(x, u), ~ x = \bmat{\theta ~ \alpha ~ \dot\theta ~ \dot\alpha}^\top\in\R^4, u\in\R^1,
\end{align*}
where $\theta$ is the angle of arm from the forward position, $\alpha$ is the angle of pendulum from upright, $\dot\theta$ and $\dot\alpha$ are angular velocities of respective angles, and the control input $u$ is a motor voltage. 
We linearize this system around the zero state $x_0 = \bmat{0; 0; 0; 0}$ and zero control input $u_0 = 0$ with $A_c = \frac{\partial f}{\partial x}|_{(x_0, u_0)}$ and $B_c = \frac{\partial f}{\partial u}|_{(x_0, u_0)}$ and discretize the system with $dt = 0.01$, resulting in the following discrete time linear system
\begin{align*}
    x_{t+1} = Ax_t + Bu_t, ~ A\in\R^{4\times 4}, B\in\R^{4\times 1}.
\end{align*}
We design a linear state feedback policy $u=Kx$ with $K\in\R^{1\times 4}$ to regulate this system to the origin. The LQR cost matrices used to determine this controller are set as $Q = I_4$ and $R = 5I_2$.

We consider the case where the mass $m$ and the length $l$ of the pendulum are uncertain. The ground truth values are $m = 0.024$ and $l = 0.129$, and we consider coarse estimates $\hat{m} = 0.05$ and $\hat{l} = 0.2$.  For DR, we set the distribution for both parameters as a uniform distribution: $m\sim [0.02, 0.08]$ and $l\sim[0.1, 0.3]$. For policy update in \Cref{alg: progressive discounting}, we schedule the step size and number of gradient descent iterations based on $\gamma$. While this is not strictly necessary (choosing a very small stepsize and running sufficiently many iterations suffices), scheduling these parameters results in convergence with far fewer total iterations. In particular the stepsize and number of iterations of gradient descent are chosen as follows: 
\begin{itemize}
    \item For $\gamma < 0.85$, $\alpha = 1e-3$ and $n_\mathsf{steps}=160$
    \item For $\gamma \in (0.85, 1.0)$, $\alpha=1e-5$ and $n_\mathsf{steps}=480$
    \item For $\gamma = 1.0$, $\alpha=2e-5$ and $n_\mathsf{steps}=400$
\end{itemize}
Refer to the \texttt{dr\_stabilization\_qube.py} for more details. From \Cref{fig:quanser gd}, we see that the algorithm succeeds in finding a stabilizing controller and converging to the minimizer. 


\begin{wrapfigure}{r}{0.45\linewidth}
    \centering
    \vspace{-10pt} 
    \includegraphics[width=\linewidth]{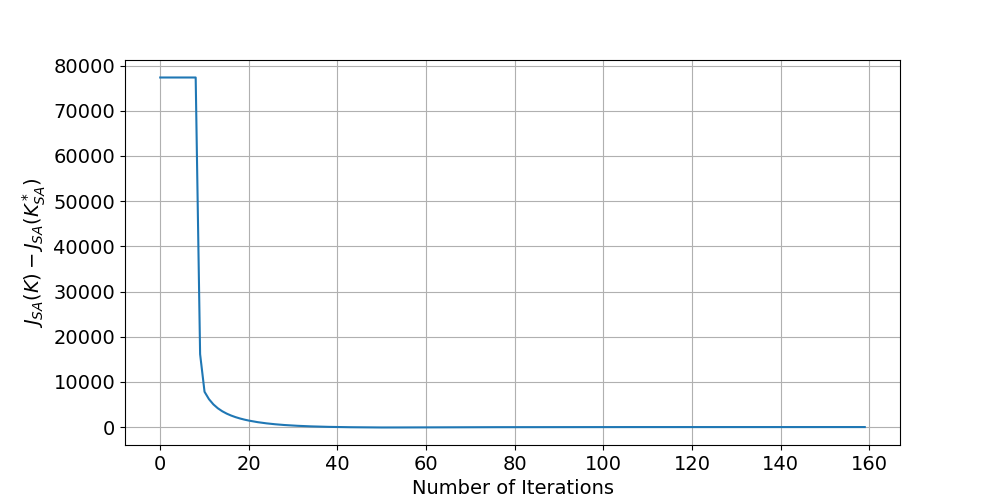}
    \caption{Gradient Descent (\Cref{alg: progressive discounting}) applied to the linearized rotational inverted pendulum.}
    \label{fig:quanser gd}
    \vspace{-10pt}
\end{wrapfigure}

We compare DR with certainty equivalence (CE), which synthesizes a policy by using the coarse parameter estimates as though they were the ground truth. The CE policy is calculated by solving discrete algebraic Ricatti equation. The resulting controllers are as follows:
\begin{align*}
    &K_{CE} = \bmat{-0.37 ~ 27.45 ~ -0.51 ~ 2.44}, \\
    &K_{DR} = \bmat{-0.41 ~ 31.62 ~ -0.66 ~ 2.97}.
\end{align*}
Since the system is linearized around the upright position, the linear policies determined via CE and DR  are used only when the pendulum is near upright, i.e. when the magnitude of $\alpha$ is less than 20 degrees, and an energy-based controller is used for swing up behavior. Refer to \texttt{control.py} for details of the energy-based controller.

In the experiment, we run the 10 trials from the same downward initial position, and measure the success rate. We define the success and failure as follows:
\begin{itemize}
    \item \textbf{Success}: the pendulum holds the upright posture and remains within $\pm 90$ degrees from the center for $10$ seconds
    \item \textbf{Upright}: the pendulum holds upright posture but fails to remain around center
    \item \textbf{Failure}: The pendulum fails to hold in the upright position
\end{itemize}

\begin{figure*}
    \centering
    \includegraphics[width=0.48\linewidth]{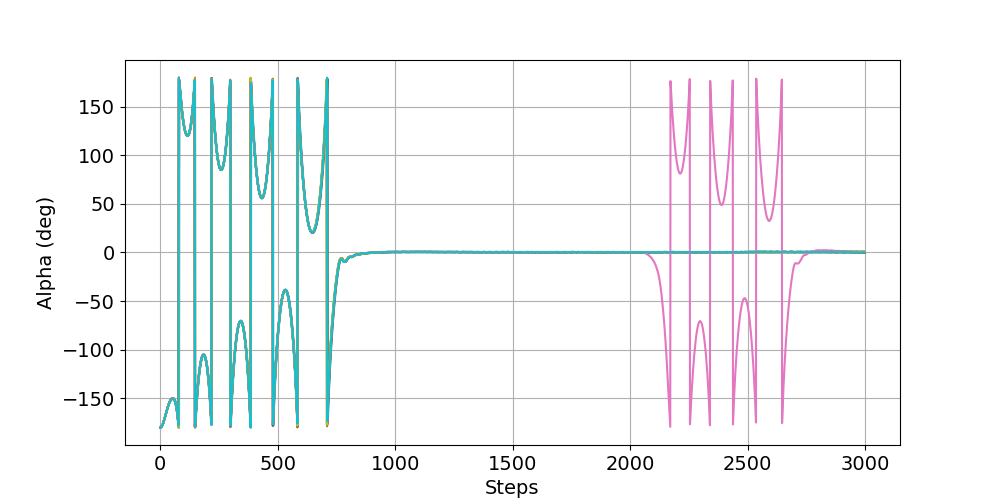}
    \includegraphics[width=0.48\linewidth]{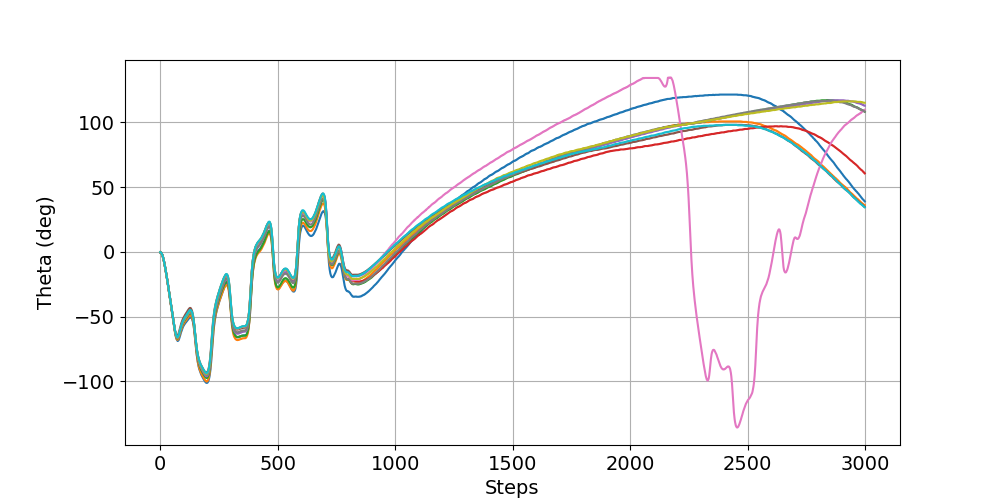}
    \caption{10 trials of CE stabilization. The pendulum almost achieves the upright posture (left) while fails to remain within $\pm90$ degrees (right). }
    \label{fig: quanser CE}
\end{figure*}

\begin{figure*}
    \centering
    \includegraphics[width=0.48\linewidth]{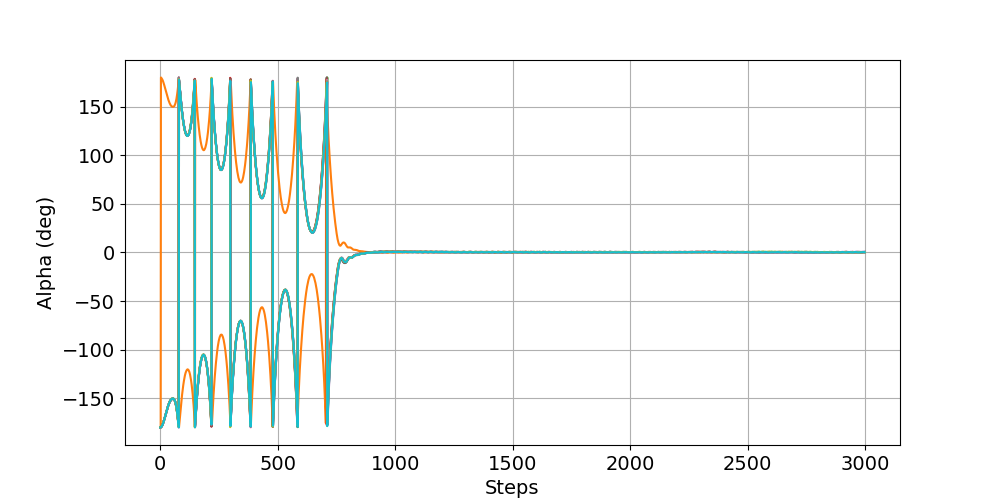}
    \includegraphics[width=0.48\linewidth]{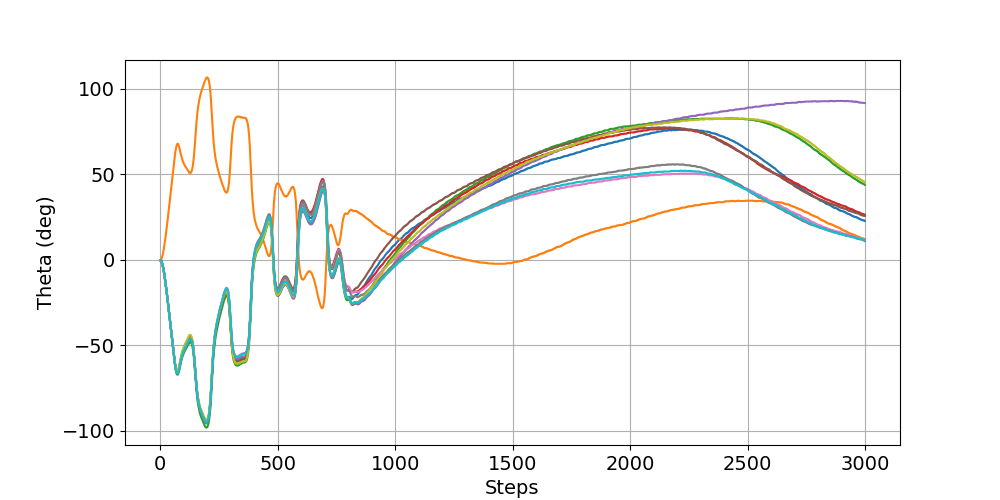}
    \caption{10 trials of DR stabilization. The pendulum almost achieves the upright posture (left) and remains withing $\pm90$ degrees simultaneously (right). }
    \label{fig: quanser DR}
\end{figure*}

The results are shown in \Cref{fig: quanser CE} and \Cref{fig: quanser DR}. The CE policy achieves $9$ \textbf{upright} and $1$ \textbf{failure} while the DR policy achieves $9$ \textbf{success }and $1$ \textbf{upright}. The reason the CE policy fails to hold the pendulum around the center can be explained by the sim-to-real gap. In particular, the optimal policy synthesized at the estimated parameter does not necessarily lead to stability when the parameter estimates are far from the ground truth. On the other hand, the DR policy succeeded in holding both upright and around center. This shows that it provides robustness to poor parameter estimates, thereby enabling sim-to-real transfer.


\end{document}